\newtheorem{theorem}{Theorem}[section]
\newtheorem{lemma}[theorem]{Lemma}
\newtheorem{proposition}[theorem]{Proposition}
\newtheorem{assumption}[theorem]{Assumption}
\theoremstyle{definition}
\theoremstyle{remark}
\newtheorem*{remark}{Remark}
\newtheorem*{remarks}{Remarks}
\numberwithin{equation}{section}
\DeclareMathOperator{\diag}{diag}
\DeclareMathOperator{\trace}{tr}
\newcommand{\C}{\ensuremath{\mathbb{C}}}
\newcommand{\N}{\ensuremath{\mathbb{N}}}
\newcommand{\R}{\ensuremath{\mathbb{R}}}
\newcommand{\Z}{\ensuremath{\mathbb{Z}}}
\newcommand{\1}{\ensuremath{\mathds{1}}}
\newcommand{\Expec}[1]{\mathbb{E}\left[#1\right]}
\newcommand{\hilb}{\ensuremath{\mathcal{H}}}
\newcommand{\Tr}[1]{\ensuremath\trace\left(#1\right)}
\renewcommand{\Re}{\mathrm{Re}}
\renewcommand{\Im}{\mathrm{Im}}
\renewcommand{\bra}[1]{{\langle{#1}|}}
\renewcommand{\ket}[1]{%
	\@ifnextchar\bra{\k@t{#1}\!}{\k@t{#1}}}
\newcommand{\k@t}[1]{{|{#1}\rangle}}
\renewcommand{\geq}{\geqslant}
\renewcommand{\leq}{\leqslant}
\begin{document}
%%%%%%%%%%%%%%%%%%%%%%%%%%%%%%%%%%%%%%%%%%%%%%%%%%%%%%%%%%%%%%%%%%%%%%%%%%%%%%%%%%%%
\title{Bounds on the bipartite entanglement entropy for oscillator systems with or without disorder}
\author{Vincent Beaud, Julian Sieber, Simone Warzel}
\affil{Zentrum Mathematik, Boltzmannstr.~3, Technische Universit\"at M\"unchen}
\date{\today}
%%%%%%%%%%%%%%%%%%%%%%%%%%%%%%%%%%%%%%%%%%%%%%%%%%%%%%%%%%%%%%%%%%%%%%%%%%%%%%%%%%%%
\maketitle

%%%%%%%%%%%%%%%%%%%%%%%%%%%%%%%%%%%%%%%%%%%%%%%%%%%%%%%%%%%%%%%%%%%%%%%%%%%%%%%%%%%%
\begin{abstract}
	\noindent We give a direct alternative proof of an area law for the entanglement entropy of the ground state of disordered oscillator systems---a result due to Nachtergaele, Sims and Stolz \cite{Nachtergaele2013}. Instead of studying the logarithmic negativity, we invoke the explicit formula for the entanglement entropy of Gaussian states to derive the upper bound. We also contrast this area law in the disordered case with divergent lower bounds on the entanglement entropy of the ground state of one-dimensional ordered oscillator chains. 
	\vspace{0.2cm}\\
	\noindent\textbf{Mathematics Subject Classification (2010):} 82B44.
\end{abstract}

%%%%%%%%%%%%%%%%%%%%%%%%%%%%%%%%%%%%%%%%%%%%%%%%%%%%%%%%%%%%%%%%%%%%%%%%%%%%%%%%%%%%
\section{Introduction}
%%%%%%%%%%%%%%%%%%%%%%%%%%%%%%%%%%%%%%%%%%%%%%%%%%%%%%%%%%%%%%%%%%%%%%%%%%%%%%%%%%%%

Thanks to their relevance in the quantum information theory of optical systems, Gaussian quantum states (also known as quasi-free states in the mathematical physics literature) and the underlying systems of harmonic oscillators still enjoy widespread attention \cite{Braunstein2005,Braunstein2003,Cerf2007,Weedbrook2012,Lami2018,Derezinski2017,Adesso2014}. A popular measure for the entanglement structure of pure states is the bipartite entanglement entropy. It has been pointed out by Werner and Vidal \cite{Vidal2002} that this quantity is upper bounded by the logarithmic negativity of the quantum state. Ever since then many works have been devoted to the construction of bounds on the logarithmic negativity of Gaussian states \cite{Audenaert2002,Plenio2005,Cramer2006,Cramer2006a}, partially with the goal of explicitly confirming the general fact that ground states of gapped systems exhibit an area law bound on their entanglement entropy. More recently, Nachtergaele, Sims and Stolz  \cite{Nachtergaele2013} proved that a mobility gap induced by disorder also implies such an area law in oscillator systems---a fact which ought to hold more generally in disordered many-particle systems (see also~\cite{Eisert2010,Abdul-Rahman2018,Beaud2018,Fischbacher:2018rw} and references therein).

The present note mainly aims at demonstrating that the usual detour via the logarithmic negativity can be avoided if one is interested in effective bounds on the bipartite entanglement entropy of Gaussian quantum states. At first, we give an alternative proof of a result in \cite{Nachtergaele2013}, which involves a direct upper bound on the bipartite entanglement entropy of the ground state in systems of disordered harmonic oscillators. We then contrast this upper bound with lower bounds for one-dimensional ordered chains of oscillators where the excitation gap closes, thereby preventing an area law.

%%%%%%%%%%%%%%%%%%%%%%%%%%%%%%%%%%%%%%%%%%%%%%%%%%%%%%%%%%%%%%%%%%%%%%%%%%%%%%%%%%%%
\subsection{Setting and assumptions}
\label{sec:model}
%%%%%%%%%%%%%%%%%%%%%%%%%%%%%%%%%%%%%%%%%%%%%%%%%%%%%%%%%%%%%%%%%%%%%%%%%%%%%%%%%%%%

\noindent\textbf{The model.} We study coupled quantum oscillators with or without disorder on (finite) graphs. More precisely, let $\mathbb{G}=(\mathcal{V},\mathcal{E})$ be a graph with countable vertex set $\mathcal{V}$ and $\mathcal{E}$ a set of undirected edges. We will assume $\mathbb{G}$ to be of uniformly bounded degree, i.e. there exists $\mathcal{N}\in\N$ such that
\begin{align*}
	\sup_{x\in\mathcal{V}}|\{y\in\mathcal{V}\,|\,(x,y)\in\mathcal{E}\}|=\mathcal{N}<\infty.
\end{align*}
The system under consideration is given in terms of two real sequences $\lbrace h_{xy}^{(q)}\rbrace_{x,y\in\mathcal{V}}$ and $\lbrace h_{xy}^{(p)}\rbrace_{x,y\in\mathcal{V}}$. For any finite subset $\Lambda\subseteq\mathcal{V}$, the corresponding subsequences form two $|\Lambda|\times|\Lambda|$ square matrices that we denote by
\begin{align*}
	h_\Lambda^{(q)} = \lbrace h_{xy}^{(q)}\rbrace_{x,y\in\Lambda}\qquad\textnormal{and}\qquad
		h_\Lambda^{(p)} = \lbrace h_{xy}^{(p)}\rbrace_{x,y\in\Lambda}.
\end{align*}
A subset $\Lambda\subseteq\mathcal{V}$ is said to be connected whenever for any two sites $x,y\in\Lambda$ there exists a chain of vertices $\lbrace z_i\rbrace_{i=1}^{n+1}\subset\Lambda$, $n\in\N$, and edges $\lbrace(z_i,z_{i+1})\rbrace_{i=1}^n\subset\mathcal{E}$ connecting $z_1=x$ to $z_{n+1}=y$. For notational ease, we set
\begin{align*}
	\mathcal{L} = \lbrace \Lambda\subseteq\mathcal{V}\;|\;\Lambda \textnormal{ finite and connected}\rbrace.
\end{align*}
As in~\cite{Nachtergaele2013}, our analysis requires the following general assumptions, which are formulated to also accommodate disordered oscillators.
\begin{assumption}\label{ass:exhaustive}
	Let $ \lbrace h_{xy}^{(q)}\rbrace_{x,y\in\mathcal{V}}$ and 
	$\lbrace h_{xy}^{(p)}\rbrace_{x,y\in\mathcal{V}}$ be collections of real random variables on a common probability space $(\Omega,\mathcal{F},\mathbb{P})$. Assume that for any finite and connected subset $\Lambda\in\mathcal{L}$ the matrices $h_\Lambda^{(q)}, h_\Lambda^{(p)}\in\R^{|\Lambda|\times|\Lambda|}$ are symmetric, positive definite and satisfy the uniform norm bound
	\begin{align*}
		\sup\limits_{\Lambda\in\mathcal{L}}
		\max\left\{\|h_\Lambda^{(p)}\|,\|(h_\Lambda^{(p)})^{-1}\|,\|h_\Lambda^{(q)}\|\right\}
			\leq D\quad\textnormal{$\mathbb{P}$-a.s.}
	\end{align*}
	with some deterministic $D\in(0,\infty)$. Here, $\Vert\cdot\Vert$ denotes the operator norm.
\end{assumption}
\noindent We consider families of Hamiltonians $\lbrace H_\Lambda\rbrace_{\Lambda\in\mathcal{L}}$ of the form
\begin{align}\label{eq:finite_volume_uncorrelated}
	H_\Lambda
		=\sum\limits_{x,y\in\Lambda}h_{xy}^{(q)}q_x q_y+h_{xy}^{(p)}p_x p_y
		=\begin{pmatrix} q^T& p^T \end{pmatrix}\begin{pmatrix} h_\Lambda^{(q)}& 0\\	0& h_\Lambda^{(p)}\end{pmatrix}		
			\begin{pmatrix} q\\ p \end{pmatrix},
\end{align}
where $H_\Lambda$ acts on the Hilbert space
\begin{align}\label{eq:finite_volume_hilb}
	\mathcal{H}_\Lambda = \bigotimes\limits_{x\in\Lambda} L^2(\R,\mathrm{d}q_x)
\end{align}
and describes a coupled system of one-dimensional quantum oscillators, one of which sitting on each site of the subgraph $\Lambda$. Here we set $q=(q_x)_{x\in\Lambda}$ and $p=(p_x)_{x\in\Lambda}$, where $q_x$ denotes the position operator, i.e. the multiplication operator by $q_x$, while $p_x=-i\partial/\partial q_x$ stands for the momentum operator on $L^2(\R,\mathrm{d}q_x)$. These operators are self-adjoint on suitably chosen domains and satisfy the canonical commutation relations
\begin{align}\label{eq:commutation}
	[q_x,q_y]=[p_x,p_y]=0\quad\text{and}\quad [q_x,p_y]=i\delta_{xy},
\end{align}
where $\delta_{xy}=\lbrace 1\textnormal{ if }x=y,\textnormal{ and }0\textnormal{ otherwise}\rbrace$ denotes the Kronecker delta, cf.~\cite{Reed1975,Teschl2009}.
\bigskip\\
\noindent\textbf{Gaussian state and covariance matrix.} Since the Hamiltonian $H_\Lambda$~\eqref{eq:finite_volume_uncorrelated} is quadratic in $q_x$ and $p_x$, it may be diagonalized by a Bogolubov transformation. The computation is notably spelled out in~\cite{Nachtergaele2012} and involves the (positive definite) matrix
\begin{align}\label{eq:effective_single_particle}
		h_\Lambda = \left(h_\Lambda^{(p)}\right)^{1/2}h_\Lambda^{(q)}\left(h_\Lambda^{(p)}\right)^{1/2}
\end{align}
acting on $\R^{|\Lambda|}$. In particular, the ground state of $H_\Lambda$ is unique, pure and Gaussian (or quasi-free). For all (finite) $\Lambda$ the corresponding rank-one  density matrix will henceforth be denoted by $\rho_\Lambda$. Gaussian states are fully characterized (up to a unitary transformation) by their covariance matrix $\Gamma_\Lambda\in\R^{2|\Lambda|\times 2|\Lambda|}$ with entries
\begin{align}\label{eq:cov_matrix}
	\left(\Gamma_\Lambda\right)_{kl} = \Tr{\rho_\Lambda (r_k r_l + r_l r_k)},
\end{align}
where we used the shorthand $r=q\oplus p$, cf.~Appendix~\ref{app:gaussian_entropy}.
\bigskip\\
\noindent\textbf{Entanglement entropy and area law.} Let $\Lambda_0\subset\Lambda$ be a (finite) nontrivial subset of vertices and $\Lambda_0^c = \Lambda\setminus\Lambda_0$ its complement with respect to $\Lambda$. The entanglement structure of a pure state $\rho_\Lambda$ over the bipartition $\Lambda = \Lambda_0 \cup \Lambda_0^c$ is often quantified in terms of its bipartite \emph{entanglement (von Neumann) entropy}
\begin{align*}
	S(\rho_\Lambda ; \Lambda_0) := - \Tr{\rho_{\Lambda_0}\log\rho_{\Lambda_0}},
\end{align*}
where $\rho_{\Lambda_0}=\trace_{\Lambda_0^c}\left(\rho_\Lambda\right)$ is the reduced state of $\rho_\Lambda$ on $\Lambda_0$ and $\trace_{\Lambda_0^c}(\cdot)$ denotes the partial trace over the tensor component of the Hilbert space associated to $\Lambda_0^c$. As a consequence of the symmetry $S(\rho_\Lambda ; \Lambda_0) = S(\rho_\Lambda ; \Lambda_0^c)$, any lower or upper bound on the entanglement entropy shall in the sequel be understood as a bound on the minimum resp.~maximum over $\lbrace \Lambda_0,\Lambda_0^c\rbrace$.

The bipartite entanglement entropy of a generic multi-particle state is expected to grow linearly with the size of the subsystem $\Lambda_0$~\cite{Page1993,Foong1994}.  However, some states---such as isolated low-energy and localized states---depart from this regime and satisfy a so-called \emph{area law} where their entanglement entropy grows (to first order) with the cardinality of the boundary
\begin{align*}
	\partial\Lambda_0=\{x\in\Lambda_0\,|\,\exists y\in\Lambda\setminus\Lambda_0:(x,y)\in\mathcal{E}\}
\end{align*}
of the subsystem $\Lambda_0$.
\bigskip\\
\noindent\textbf{An example.} A simple instance of \eqref{eq:finite_volume_uncorrelated} consists of a lattice of quantum harmonic oscillators harmonically coupled to their neighbors by springs of constant strength. The corresponding Hamiltonian on a subset $\Lambda\subseteq\mathcal{V}$ reads
\begin{align}\label{eq:hamiltonian_example}
	H_\Lambda = \sum\limits_{x\in\Lambda}\left(\frac{1}{2m}p_x^2 + k_x q_x^2\right)
		+ \lambda\sum\limits_{\substack{x,y\in\Lambda\\(x,y)\in\mathcal{E}}} (q_x-q_y)^2,
\end{align}
where the mass $m$ of each oscillator and the coupling strength $\lambda$ of the interaction are positive constants. When $\lbrace k_x\rbrace_{x\in\Lambda}\subset\R_{\geq 0}$ are independent, identically distributed random variables, the one-particle operator $h_\Lambda$ is the Anderson model on $\ell^2(\Lambda)$. One readily convinces oneself that the family $\lbrace \overline{H}_\Lambda\rbrace_{\Lambda\in\mathcal{L}}$ of such Hamiltonians satisfies Assumption ~\ref{ass:exhaustive} whenever $\mathcal{V}$ is of bounded degree, the $\lbrace k_x\rbrace_{x\in\mathcal{V}}$ are almost surely uniformly bounded and $h_\Lambda^{(q)}>0$.

%%%%%%%%%%%%%%%%%%%%%%%%%%%%%%%%%%%%%%%%%%%%%%%%%%%%%%%%%%%%%%%%%%%%%%%%%%%%%%%%%%%%
\subsection{Main results}
\label{sec:results}
%%%%%%%%%%%%%%%%%%%%%%%%%%%%%%%%%%%%%%%%%%%%%%%%%%%%%%%%%%%%%%%%%%%%%%%%%%%%%%%%%%%%

Our first result consists in an alternative proof of the following theorem due to Nachtergaele, Sims and Stolz \cite{Nachtergaele2013}. Henceforth, $\langle\cdot,\cdot\rangle$ denotes the usual inner product on $\R^{|\Lambda|}$ and $\lbrace\delta_x\rbrace_{x\in\Lambda}\subset\R^{|\Lambda|}$ its canonical basis of vectors, with entries $\delta_x(y)=\delta_{xy}$.
\begin{theorem}[Theorem 2.2 in \cite{Nachtergaele2013}]\label{theorem:area_law}
	Let $\mathbb{G}=(\mathcal{V},\mathcal{E})$ be a graph of bounded degree $\mathcal{N}\in\N$ and $\lbrace H_\Lambda\rbrace_{\Lambda\in\mathcal{L}}$ a family of Hamiltonians of the form \eqref{eq:finite_volume_uncorrelated}, which satisfies Assumption \ref{ass:exhaustive}. Let $d(\cdot,\cdot)$ denotes the usual graph distance on the subgraph $\Lambda$ and assume furthermore that there exist $c<\infty$ and $\nu\in (2\log \mathcal{N},\infty)$ such that
	\begin{align}\label{eq:expectation_decay}
		\Expec{\left|\left\langle 
			(h_\Lambda^{(p)})^{1/2}\delta_x,h_\Lambda^{-1/2}(h_\Lambda^{(p)})^{1/2}\delta_y\right\rangle\right|}
			\leq c\, e^{-\nu d(x,y)}
	\end{align}
	for all finite connected subsets $\Lambda\in\mathcal{L}$ and all $x,y\in\Lambda$. Then, there exists $C\in(0,\infty)$ such that for any finite subset $\Lambda_0\subset\mathcal{V}$
	\begin{align}\label{eq:area_law}
		\Expec{S(\rho_\Lambda;\Lambda_0)} \leq C \left|\partial\Lambda_0\right|
	\end{align}
	for all $\Lambda\in\mathcal{L}$ with $\Lambda_0\subset\Lambda$.
\end{theorem}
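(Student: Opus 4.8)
The plan is to work entirely with the explicit Gaussian data of the ground state. First I would record, from the Bogolubov diagonalisation in Appendix~\ref{app:gaussian_entropy}, that in the position/momentum block decomposition the covariance matrix is
\begin{align*}
	\Gamma_\Lambda = \begin{pmatrix}\gamma & 0\\ 0 & \gamma^{-1}\end{pmatrix},\qquad \gamma = (h_\Lambda^{(p)})^{1/2}h_\Lambda^{-1/2}(h_\Lambda^{(p)})^{1/2},
\end{align*}
so that the reduced state on $\Lambda_0$ is the Gaussian state with covariance $\gamma_0\oplus(\gamma^{-1})_0$, the subscript denoting restriction to the $\Lambda_0$-block. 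Its entanglement entropy is then $S(\rho_\Lambda;\Lambda_0)=\sum_k s(\nu_k)$, where $s(\nu)=\tfrac{\nu+1}{2}\log\tfrac{\nu+1}{2}-\tfrac{\nu-1}{2}\log\tfrac{\nu-1}{2}$ and the symplectic eigenvalues satisfy $\nu_k^2\in\mathrm{spec}(\gamma_0(\gamma^{-1})_0)$ with $\nu_k\geq 1$.

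The next step is a set of almost-sure spectral bounds that use Assumption~\ref{ass:exhaustive} but, crucially, never the inverse of $h_\Lambda^{(q)}$. Since $\gamma^{-1}=(h_\Lambda^{(p)})^{-1/2}h_\Lambda^{1/2}(h_\Lambda^{(p)})^{-1/2}$ and $\norm{h_\Lambda}\leq\norm{h_\Lambda^{(p)}}\,\norm{h_\Lambda^{(q)}}\leq D^2$, one gets $\norm{\gamma^{-1}}\leq D^2$ and hence $\gamma\geq D^{-2}\1$. Passing to principal submatrices preserves these bounds, so $(\gamma^{-1})_0\leq D^2\1$ while $\gamma_{\Lambda_0^c}\geq D^{-2}\1$; in particular $\norm{(\gamma^{-1})_0^{1/2}}\leq D$ and $\norm{\gamma_{\Lambda_0^c}^{-1/2}}\leq D$ almost surely. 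Note that $\gamma$ itself need not be bounded above---the disorder is allowed to make $h_\Lambda^{(q)}$ nearly singular---and this is precisely why the decay hypothesis~\eqref{eq:expectation_decay} is formulated in expectation.

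The heart of the argument is the elementary bound $s(\nu)\leq C_0\sqrt{\nu^2-1}$ valid for all $\nu\geq 1$, which I would prove by checking that $s(\nu)/\sqrt{\nu^2-1}$ extends to a continuous function on $[1,\infty]$ vanishing at both endpoints (it behaves like $\sqrt{\nu-1}\,\log\tfrac{1}{\nu-1}$ near $1$ and like $\nu^{-1}\log\nu$ at infinity). Writing $B$ for the off-diagonal $\Lambda_0\times\Lambda_0^c$ block of $\gamma$ and $C=\gamma_{\Lambda_0^c}$, a Schur-complement computation gives $\gamma_0(\gamma^{-1})_0-\1=BC^{-1}B^T(\gamma^{-1})_0$, which is similar to the positive matrix $WW^T$ with $W=(\gamma^{-1})_0^{1/2}B\,C^{-1/2}$. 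Consequently $\sqrt{\nu_k^2-1}$ are the singular values of $W$, and
\begin{align*}
	S(\rho_\Lambda;\Lambda_0)\leq C_0\sum_k\sqrt{\nu_k^2-1}=C_0\,\norm{W}_1\leq C_0\,\norm{(\gamma^{-1})_0^{1/2}}\,\norm{C^{-1/2}}\,\norm{B}_1\leq C_0 D^2\,\norm{B}_1,
\end{align*}
where $\norm{\cdot}_1$ is the trace norm. The nuclear-norm triangle inequality then converts the operator estimate into an entrywise one, $\norm{B}_1\leq\sum_{x\in\Lambda_0,\,y\in\Lambda_0^c}|\gamma_{xy}|$, and the entries $\gamma_{xy}=\langle(h_\Lambda^{(p)})^{1/2}\delta_x,h_\Lambda^{-1/2}(h_\Lambda^{(p)})^{1/2}\delta_y\rangle$ are exactly the quantities appearing in~\eqref{eq:expectation_decay}. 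Taking expectations,
\begin{align*}
	\Expec{S(\rho_\Lambda;\Lambda_0)}\leq C_0 D^2\sum_{x\in\Lambda_0,\,y\in\Lambda_0^c}\Expec{|\gamma_{xy}|}\leq C_0 D^2 c\sum_{x\in\Lambda_0,\,y\in\Lambda_0^c}e^{-\nu d(x,y)},
\end{align*}
and I would close with the standard bounded-degree summation: summing $e^{-\nu d(x,y)}$ first over $y\in\Lambda_0^c$ (each distance shell having at most $\mathcal N^{r}$ sites) and then over $x\in\Lambda_0$ grouped by their distance to the boundary produces two geometric factors, convergent precisely when $\nu>2\log\mathcal N$, leaving $\sum_{x,y}e^{-\nu d(x,y)}\leq C'|\partial\Lambda_0|$ and hence~\eqref{eq:area_law}.

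I expect the main obstacle to be the interplay in the third step rather than any isolated estimate. One must choose the square-root bound $s(\nu)\leq C_0\sqrt{\nu^2-1}$---a linear bound $s(\nu)\leq C(\nu^2-1)$ is \emph{false} because $s'(\nu)\to\infty$ as $\nu\to1$---so that the logarithmic singularity of the von Neumann entropy at the pure point is absorbed, and one must simultaneously arrange the factorisation $W=(\gamma^{-1})_0^{1/2}BC^{-1/2}$ so that only the almost-surely bounded momentum data carry the operator norms while the (generally unbounded) position block enters solely through the trace norm of $B$. It is this combination---together with the nuclear-norm triangle inequality, which is what prevents a naive mode-wise estimate from degrading the area law into a volume law through the spreading of the symplectic spectrum---that makes the bound come out proportional to $|\partial\Lambda_0|$.
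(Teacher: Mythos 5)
Your proof is correct, and while it follows the paper's overall strategy (Gaussian entropy formula, the bound $f(\nu)\leq C\sqrt{\nu^2-1}$ from \cref{lem:function}, identification of the symplectic eigenvalues via \cref{corollary:symplectic}, entrywise bound, expectation, boundary summation), the central estimate is genuinely different. The paper writes $\gamma_0(\gamma^{-1})_0=\1+\iota_{\Lambda_0}^{*}[A,P_{\Lambda_0}]A^{-1}\iota_{\Lambda_0}$ with $A=\gamma$, controls $\sum_k\sqrt{\nu_k^2-1}$ by the Schatten $1/2$-quasinorm $\Vert P_{\Lambda_0}AP_{\Lambda_0}^\perp\Vert_{1/2}^{1/2}$, bounds that entrywise by $\sum_{x,y}|\gamma_{xy}|^{1/2}$, and must then invoke Jensen's inequality to move the expectation inside the square root---which halves the decay rate to $\nu/2$ and is exactly where the hypothesis $\nu>2\log\mathcal N$ is consumed. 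You instead use the Schur-complement identity $\gamma_0(\gamma^{-1})_0-\1=BC^{-1}B^{T}(\gamma^{-1})_0$, recognize $\sqrt{\nu_k^2-1}$ as the singular values of $W=(\gamma^{-1})_0^{1/2}BC^{-1/2}$, and estimate in the trace norm, $\Vert W\Vert_1\leq D^2\Vert B\Vert_1\leq D^2\sum_{x,y}|\gamma_{xy}|$; all the ingredients (the Schur formula for $(\gamma^{-1})_0$, the similarity to $WW^{T}$, the H\"older inequality $\Vert XYZ\Vert_1\leq\Vert X\Vert\,\Vert Y\Vert_1\,\Vert Z\Vert$, the bound $\Vert B\Vert_1\leq\sum_{x,y}|B_{xy}|$, and the almost-sure norm bounds $\Vert(\gamma^{-1})_0^{1/2}\Vert\leq D$, $\Vert C^{-1/2}\Vert\leq D$ obtained without inverting $h_\Lambda^{(q)}$) check out. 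What your route buys is linearity in the matrix entries: no Jensen step is needed, the full rate $\nu$ survives into the boundary sum, and with the summation estimate the paper cites (Lemma 4.2 of \cite{Nachtergaele2013}, valid for any rate exceeding $\log\mathcal N$) your argument would in fact establish the area law under the weaker hypothesis $\nu>\log\mathcal N$. What the paper's route buys is a softer algebraic input---only the commutator identity and a single operator-norm bound $\Vert A^{-1}\Vert\leq D$ enter, at the cost of the quasinorm machinery and the factor $1/2$ in the exponent.
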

Up to a possible improvement of the constant $C$ the content of this result coincides with that of~\cite{Nachtergaele2013}. The methods differ, though, and it is our main point to give a proof based on the explicit formula~\eqref{eq:bipartite_entanglement} for the entanglement entropy of general Gaussian states, rather than on their logarithmic negativity. The argument is spelled out in Section~\ref{sec:upper_bound} below. As the aforementioned formula~\eqref{eq:bipartite_entanglement} also applies to thermal states of oscillator systems, our general strategy yields a similar upper bound on their bipartite entanglement entropy (under a modified localization assumption as in~\cite[Theorem~2.3]{Nachtergaele2013}). Since the physical content of such a result is controversial, we refrain from engaging in this here.
\bigskip\\
\noindent The localization condition \eqref{eq:expectation_decay} is crucial to the validity of the above area law. To illustrate this, we consider the following explicit realization of the family of Hamiltonians \eqref{eq:hamiltonian_example} which is not subject to disorder and thus does not exhibit localization in its ground state:
\begin{equation}\label{eq:hamiltonian_explicit_real}
	\overline{H}_\Lambda =\, \sum\limits_{x\in\Lambda}p_x^2
		+ \sum\limits_{x=a}^{b-1} (q_{x+1}-q_x)^2 + q_a^2 +q_b^2 \, .
\end{equation}
It describes a one-dimensional chain of $b-a+1$ particles of mass $m=1/2$ connected by springs of constant strength $\lambda =1$ and pinned at its ends. Here $\Lambda\in\mathcal{L}$ is a finite and connected subset of $\mathcal{V}\in\lbrace\N,\Z\rbrace$, i.e. $\Lambda = [a,b]\cap\mathcal{V}$ for some $a,b\in\mathcal{V}$. As a second result, the bipartite entanglement entropy of the associated ground state is shown to grow with the size of the subsystem~$\Lambda_0$.
\begin{theorem}\label{theorem:entropy_grow}
	Let $\mathcal{V}\in\lbrace\N,\Z\rbrace$ with nearest-neighbor edges and $\lbrace\overline{H}_\Lambda\rbrace_{\Lambda\in\mathcal{L}}$ be the family of Hamiltonians \eqref{eq:hamiltonian_explicit_real} with ground state density matrix $\overline{\rho}_\Lambda$. Then there exist exhaustive sequences $\bigl\lbrace\Lambda_0^{(n)}\bigr\rbrace_{n\in\N}, \bigl\lbrace\Lambda^{(n)}\bigr\rbrace_{n\in\N}\subset\mathcal{L}$ with $\Lambda_0^{(n)}\subseteq\Lambda^{(n)}$ such that
	\begin{align}
		\lim\limits_{n\to\infty} S\left(\overline{\rho}_{\Lambda^{(n)}};\Lambda_0^{(n)}\right) = \infty.
	\end{align}
\end{theorem}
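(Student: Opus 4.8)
The plan is to make the ground-state covariance matrix \eqref{eq:cov_matrix} explicit, insert it into the Gaussian entropy formula \eqref{eq:bipartite_entanglement}, and reduce the claim to the divergence of the largest symplectic eigenvalue of the reduced covariance, caused by the closing of the excitation gap. First I would record the relevant matrices. For \eqref{eq:hamiltonian_explicit_real} one has $h_\Lambda^{(p)}=\1$ and $h_\Lambda^{(q)}$ equal to the tridiagonal Dirichlet Laplacian with diagonal entries $2$ and off-diagonal entries $-1$, so the effective one-particle matrix \eqref{eq:effective_single_particle} is $h_\Lambda=h_\Lambda^{(q)}$ itself. Its spectrum is explicit: $h_\Lambda\phi_k=\lambda_k\phi_k$ with $\lambda_k=4\sin^2\!\big(\tfrac{k\pi}{2(N+1)}\big)$ and $\phi_k(x)=\sqrt{2/(N+1)}\,\sin\!\big(\tfrac{k\pi x}{N+1}\big)$, where $N=|\Lambda|$. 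Passing to normal coordinates (as spelled out in \cite{Nachtergaele2012}) shows that the ground state is the Gaussian state with vanishing position-momentum correlations and covariance $\Gamma_\Lambda=h_\Lambda^{-1/2}\oplus h_\Lambda^{1/2}$; restricting to $\Lambda_0$ gives $\gamma_{\Lambda_0}=G\oplus K$ with $G=(h_\Lambda^{-1/2})_{\Lambda_0}$ and $K=(h_\Lambda^{1/2})_{\Lambda_0}$, and the symplectic eigenvalues $\nu_j$ of $\gamma_{\Lambda_0}$ are the square roots of the eigenvalues of $GK$, all $\geq 1$.

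By \eqref{eq:bipartite_entanglement} the entropy is $S(\overline\rho_\Lambda;\Lambda_0)=\sum_j\eta(\nu_j)$, where $\eta$ is increasing with $\eta(1)=0$ and $\eta(\nu)\to\infty$ as $\nu\to\infty$; hence $S\geq\eta(\nu_{\max})$ and it suffices to exhibit sequences along which $\nu_{\max}=\sqrt{\lambda_{\max}(GK)}\to\infty$. I would take (for $\mathcal V=\Z$, the case $\mathcal V=\N$ being analogous) $\Lambda^{(n)}=[-n,n]$ and $\Lambda_0^{(n)}=[-n,c_n]$ with $c_n\to\infty$ and $c_n=o(n)$: these lie in $\mathcal L$, are exhaustive, have $|\partial\Lambda_0^{(n)}|=1$, and place the cut at graph distance $\to\infty$ from both endpoints.

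The divergence is driven by the closing gap $\lambda_1=4\sin^2\!\big(\tfrac{\pi}{2(N+1)}\big)\sim(\pi/N)^2\to 0$: the delocalised soft mode $\phi_1$ gives $h_\Lambda^{-1/2}$ an eigenvalue of order $N$ and hence long-range $\langle q_xq_y\rangle$ correlations across the cut. To turn this into a lower bound on $\nu_{\max}$ I would use the Schur-complement identity for $A:=h_\Lambda^{1/2}$: since $G^{-1}=A_{\Lambda_0\Lambda_0}-A_{\Lambda_0\Lambda_0^c}A_{\Lambda_0^c\Lambda_0^c}^{-1}A_{\Lambda_0^c\Lambda_0}$, one has $K-G^{-1}=\Sigma:=A_{\Lambda_0\Lambda_0^c}A_{\Lambda_0^c\Lambda_0^c}^{-1}A_{\Lambda_0^c\Lambda_0}\succeq 0$ and therefore $\nu_{\max}^2=\lambda_{\max}(G^{1/2}KG^{1/2})=1+\lambda_{\max}(G^{1/2}\Sigma G^{1/2})$, which I would bound below by a Rayleigh quotient using a test vector built from the restricted soft mode, evaluating the overlaps in the sine basis. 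The transparent quantitative signal of the effect is the bulk diagonal entry $(h_\Lambda^{-1/2})_{xx}=\tfrac{2}{N+1}\sum_k \sin^2\!\big(\tfrac{k\pi x}{N+1}\big)\big/\big(2\sin\tfrac{k\pi}{2(N+1)}\big)\sim\tfrac1\pi\log N$ (from $\sum_k k^{-1}$, using $\sum_k k^{-1}\cos(2\pi k\theta)=O(1)$), whereas $(h_\Lambda^{1/2})_{xx}$ stays bounded above and below by positive constants for bulk $x$.

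The hard part will be precisely this lower bound on $\lambda_{\max}(G^{1/2}\Sigma G^{1/2})$. The direction that inflates $G$---the soft mode---is the same low-energy direction that makes $A=h_\Lambda^{1/2}$, and hence $\Sigma$, small, so the two effects partly cancel; consequently soft operator-monotonicity arguments (for instance $G^{-1}\preceq K$) only reproduce the trivial bound $\nu_{\max}\geq 1$. Establishing the genuine growth therefore requires tracking the subleading $\log N$ asymptotics of the cross-cut entries of $h_\Lambda^{\pm1/2}$ (equivalently of $\Sigma$ on the restricted soft mode) rather than any monotonicity estimate; I expect this spectral/Green's-function computation to be the technical heart, the remaining steps being routine. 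Summing the contributions of all soft modes, rather than only $\nu_{\max}$, should in fact recover the sharp logarithmic growth $S\asymp\tfrac16\log N$.
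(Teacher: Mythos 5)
Your setup is correct and matches the paper's: the ground-state covariance is $h_\Lambda^{-1/2}\oplus h_\Lambda^{1/2}$ with $h_\Lambda=h_\Lambda^{(q)}$ the Dirichlet Laplacian, the reduced state is Gaussian with truncated covariance $G\oplus K$, and by the entropy formula it suffices to drive one symplectic eigenvalue to infinity. Your Schur-complement identity $K=G^{-1}+\Sigma$ with $\Sigma=A_{\Lambda_0\Lambda_0^c}A_{\Lambda_0^c\Lambda_0^c}^{-1}A_{\Lambda_0^c\Lambda_0}\succeq 0$, giving $\nu_{\max}^2=1+\lambda_{\max}(G^{1/2}\Sigma G^{1/2})$, is a clean reformulation not used in the paper. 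But the proposal stops exactly where the theorem begins: you never prove that $\lambda_{\max}(G^{1/2}\Sigma G^{1/2})\to\infty$. You correctly observe that operator-monotonicity only yields $\nu_{\max}\geq 1$ and that the soft mode inflates $G$ while simultaneously suppressing $A=h_\Lambda^{1/2}$ and hence $\Sigma$, so the proposed Rayleigh-quotient test vector is not guaranteed to beat this cancellation; you then declare the remaining computation to be ``the technical heart'' and defer it. The divergence of a diagonal entry of $h_\Lambda^{-1/2}$ like $\tfrac1\pi\log N$, which you do compute, does not by itself control any eigenvalue of the product $GK$. As written, the argument establishes only $\nu_{\max}\geq 1$, i.e.\ nothing.

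For comparison, the paper closes this gap as follows. First, $f(x)\geq\log x$ converts the entropy into $\tfrac12\log$ of either $\det$ or $\lambda_{\max}$ of the relevant product; second, the compression inequality $(B^{[\alpha/2]})^2\leq B^{[\alpha]}$ (Lemma~\ref{lem:eigenvalues_lower_bound}) replaces the inaccessible square roots $K^{1/2}=(h_\Lambda^{[1/2]})^{1/2}$ by the explicitly computable truncations $h_\Lambda^{[1/4]}$, $h_\Lambda^{[-1/4]}$, reducing everything to matrix elements of fractional powers of the Laplacian. For $\mathcal V=\Z$ these truncations are asymptotically Toeplitz with symbol $|2\sin(x/2)|^{2\alpha}$, and the strong Szeg\H{o} limit theorem yields divergence of $\log\det$; for $\mathcal V=\N$ explicit Gamma- and digamma-function formulas for the entries of $h_\N^{[1/4]}$ and $h_\N^{[-1/2]}$, combined with a sign analysis, bound a single diagonal entry of $h^{[1/4]}h^{[-1/2]}h^{[1/4]}$ below by $\mathrm{const}\cdot\log n$. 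In both cases a quantitative finite-volume error estimate dictates how fast $|\Lambda^{(n)}|$ must grow relative to $|\Lambda_0^{(n)}|$ (e.g.\ $m_n=\mathcal O(n^{3+\epsilon})$ resp.\ $\mathcal O(n^{4+\epsilon})$) --- a point your proposal also does not address. To complete your route you would need an honest asymptotic lower bound on $\langle\psi,G^{1/2}\Sigma G^{1/2}\psi\rangle$ for an explicit $\psi$, which is a computation of the same order of difficulty as the paper's.
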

The pinning is responsible for the breaking of translation symmetry for finite $ \Lambda $, which in particular ensures the validity of $ h_\Lambda^{(q)} > 0 $ in Assumption~\ref{ass:exhaustive}. The divergence of the entanglement entropy may in both cases be traced to the closing of the spectral gap in the underlying one-particle Hamiltonian, cf.~\eqref{eq:orth_spec_decomp}. This behavior stands in contrast to the area law established in~\cite{Cramer2006} for the ground state of the periodic chain which is artificially modified so to exhibit a spectral gap above its ground state (see also~\cite{Cramer2006a}).

Lower bounds on the logarithmic negativity of ground states in quantum oscillator systems have been derived before~\cite{Audenaert2002}. The logarithmic negativity is however only an upper bound on the entanglement entropy.

%%%%%%%%%%%%%%%%%%%%%%%%%%%%%%%%%%%%%%%%%%%%%%%%%%%%%%%%%%%%%%%%%%%%%%%%%%%%%%%%%%%%
\subsection{Outline of the method}
\label{sec:outline}
%%%%%%%%%%%%%%%%%%%%%%%%%%%%%%%%%%%%%%%%%%%%%%%%%%%%%%%%%%%%%%%%%%%%%%%%%%%%%%%%%%%%

The main novel point of this note consists in demonstrating that the entanglement entropy may be estimated directly, without invoking the upper bound in terms of the logarithmic negativity. Our proofs rely on an explicit formula for the entanglement entropy of Gaussian states, which has the benefit to also yield lower bounds. We show how to obtain such lower bounds in the examples of~Theorem~\ref{theorem:entropy_grow}.
\bigskip\\
\noindent\textbf{Symplectic eigenvalues.} The commutation relations \eqref{eq:commutation} imply the Heisenberg matrix uncertainty inequality
\begin{align}\label{eq:heisenberg}
	\Gamma_\Lambda + i \Omega_\Lambda \geq 0,
\end{align}
where
\begin{align*}
	\Omega_\Lambda=\begin{pmatrix} 0&-\1\\	\1& 0	\end{pmatrix}\in\R^{2|\Lambda|\times 2|\Lambda|}
\end{align*}
defines a \emph{symplectic form} over $\R^{2|\Lambda|}$ through $(u,v)\mapsto \langle u,\Omega_\Lambda v\rangle$. The  \emph{symplectic group}
\begin{align*}
 	\mathsf{SP}(2|\Lambda|,\R)=\left\{S\in\R^{2|\Lambda|\times 2|\Lambda|}\,
 		\middle|\,S^T\Omega_\Lambda S=\Omega_\Lambda\right\}
\end{align*}
consists of all linear transformations $S$ of the vector $r=q\oplus p$ that conserve the commutation relations \eqref{eq:commutation}: $\left[(Sr)_k,(Sr)_l\right] = [r_k,r_l]=-i\left(\Omega_\Lambda\right)_{kl}$.

The covariance matrix associated to the ground state of a Hamiltonian of the form \eqref{eq:finite_volume_uncorrelated} admits the explicit expression
\begin{align}\label{eq:m}
	\Gamma_\Lambda=\begin{pmatrix}
		(h_\Lambda^{(p)})^{1/2}h_\Lambda^{-1/2}(h_\Lambda^{(p)})^{1/2}& 0\\
		0& (h_\Lambda^{(p)})^{-1/2}h_\Lambda^{1/2}(h_\Lambda^{(p)})^{-1/2}
		\end{pmatrix},
\end{align}
where $h_\Lambda$ is the one-particle operator defined in \eqref{eq:effective_single_particle}. By Assumption \ref{ass:exhaustive}, both $h_\Lambda$ and $\Gamma_\Lambda$ are symmetric and positive definite. This allows for the following spectral representation due to Williamson \cite{Williamson1936}.
\begin{proposition}[Proposition 3.2 in \cite{Nachtergaele2013}]\label{prop:williamson}
	Let $n\in\N$ and $\Gamma\in\R^{2n\times 2n}$ be symmetric and positive definite. Then there exists a symplectic matrix $S\in\mathsf{SP}(2n,\R)$ such that
	\begin{align}\label{eq:diag_williamson}
		S^T\Gamma S=\begin{pmatrix} \mathcal{G}& 0\\	0& \mathcal{G}	\end{pmatrix},
	\end{align}
	where $\mathcal{G}=\diag(\gamma_1,\dots,\gamma_n)>0$. The symplectic eigenvalues $\sigma_\mathrm{symp}(\Gamma):=      \{\gamma_k\}_{k=1}^n$ can be computed as the positive eigenvalues of $i\Gamma^{1/2}\Omega_n \Gamma^{1/2}$ or as the imaginary part of the eigenvalues of $\Gamma\Omega_n$. Furthermore,
	\begin{align}\label{eq:heisenberg_matrix}
	\Gamma+i\Omega_n\geq 0
	\end{align}
	if and only if $\gamma_k\geq 1$ for all $k=1,\dots,n$.
\end{proposition}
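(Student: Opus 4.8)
The plan is to reduce the statement to the real canonical (block-diagonal) form of a skew-symmetric matrix. Since $\Gamma$ is symmetric and positive definite, its positive square root $\Gamma^{1/2}$ is well defined, and I would introduce the matrix $A:=\Gamma^{1/2}\Omega_n\Gamma^{1/2}$. Because $\Omega_n^T=-\Omega_n$ and $\Gamma^{1/2}$ is symmetric, $A$ is real skew-symmetric, and it is invertible as a product of invertible matrices. The spectral theory of real skew-symmetric matrices then shows that its eigenvalues are purely imaginary and occur in conjugate pairs $\pm i\gamma_k$ with $\gamma_k>0$ (strict positivity because $A$ is invertible), so that $iA=i\Gamma^{1/2}\Omega_n\Gamma^{1/2}$ is Hermitian with eigenvalues $\pm\gamma_k$; the positive ones are the claimed symplectic eigenvalues. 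The second characterization is then immediate from the similarity $\Gamma\Omega_n=\Gamma^{1/2}A\Gamma^{-1/2}$, which forces $\Gamma\Omega_n$ and $A$ to share the spectrum $\{\pm i\gamma_k\}$, so that the $\gamma_k$ are exactly the imaginary parts of the eigenvalues of $\Gamma\Omega_n$.

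For the diagonalization I would invoke the real normal form for skew-symmetric matrices: there is an orthogonal $O\in O(2n)$ with $O^TAO=\mathcal{D}^{1/2}\Omega_n\mathcal{D}^{1/2}$, where $\mathcal{D}=\diag(\mathcal{G},\mathcal{G})$ and $\mathcal{G}=\diag(\gamma_1,\dots,\gamma_n)$. This is just the standard pairing of canonical $2\times2$ blocks, reordered so that the block structure matches that of $\Omega_n$. Setting $S:=\Gamma^{-1/2}O\mathcal{D}^{1/2}$, one checks directly that $S^T\Gamma S=\mathcal{D}^{1/2}O^TO\mathcal{D}^{1/2}=\mathcal{D}$, which is precisely \eqref{eq:diag_williamson}. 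That $S$ is symplectic I would verify by inverting the canonical-form identity: using $O^{-1}=O^T$ and $\Omega_n^{-1}=-\Omega_n$, taking the inverse of $O^TAO=\mathcal{D}^{1/2}\Omega_n\mathcal{D}^{1/2}$ yields $O^T\Gamma^{-1/2}\Omega_n\Gamma^{-1/2}O=\mathcal{D}^{-1/2}\Omega_n\mathcal{D}^{-1/2}$, and substituting this into $S^T\Omega_nS=\mathcal{D}^{1/2}O^T\Gamma^{-1/2}\Omega_n\Gamma^{-1/2}O\mathcal{D}^{1/2}$ collapses it to $\Omega_n$.

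Finally, for the Heisenberg criterion I would use that congruence by the real invertible symplectic matrix $S$ preserves positive semidefiniteness of the Hermitian matrix $\Gamma+i\Omega_n$. Since $S^T\Omega_nS=\Omega_n$, one gets $S^T(\Gamma+i\Omega_n)S=\mathcal{D}+i\Omega_n$, so $\Gamma+i\Omega_n\geq0$ if and only if $\mathcal{D}+i\Omega_n\geq0$. The latter couples only the coordinate $k$ with the coordinate $n+k$, so after the obvious permutation it is block-diagonal with $2\times2$ blocks $\left(\begin{smallmatrix}\gamma_k&-i\\ i&\gamma_k\end{smallmatrix}\right)$, whose eigenvalues are $\gamma_k\pm1$; positive semidefiniteness therefore holds exactly when $\gamma_k\geq1$ for every $k$.

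The verifications that $S^T\Gamma S=\mathcal{D}$ and that $S$ is symplectic are routine once the setup is in place; the one genuinely load-bearing input is the real orthogonal normal form of the skew-symmetric matrix $A$, and the main thing to get right is the sign and ordering bookkeeping in normalizing $O^TAO$ to the precise shape $\mathcal{D}^{1/2}\Omega_n\mathcal{D}^{1/2}$ (rather than a permuted variant), since it is exactly this matching that produces the $\diag(\mathcal{G},\mathcal{G})$ structure in \eqref{eq:diag_williamson}.
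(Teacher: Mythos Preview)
The paper does not prove this proposition; it is quoted from \cite{Nachtergaele2013} (and ultimately goes back to Williamson \cite{Williamson1936}) and is used as a black box. There is therefore no ``paper's own proof'' to compare against.

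Your argument is the standard one and is correct. Introducing $A=\Gamma^{1/2}\Omega_n\Gamma^{1/2}$ and using the real orthogonal normal form of a nondegenerate skew-symmetric matrix is exactly how Williamson's theorem is usually derived; the similarity $\Gamma\Omega_n=\Gamma^{1/2}A\Gamma^{-1/2}$ cleanly gives both spectral characterizations, and your congruence argument for the Heisenberg criterion is the expected one. The only point worth flagging is the one you already acknowledge: the standard canonical form of a real skew-symmetric matrix is a direct sum of $2\times2$ blocks $\bigl(\begin{smallmatrix}0&-\gamma_k\\ \gamma_k&0\end{smallmatrix}\bigr)$, whereas you need the permuted form $\bigl(\begin{smallmatrix}0&-\mathcal{G}\\ \mathcal{G}&0\end{smallmatrix}\bigr)=\mathcal{D}^{1/2}\Omega_n\mathcal{D}^{1/2}$. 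Since the required permutation is orthogonal, it can be absorbed into $O$, so this is only a bookkeeping issue, not a gap.
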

\begin{remarks}
	1. Let us emphasize that, in the above definition, the symplectic spectrum of $\Gamma$ only comprises the eigenvalues of $\mathcal{G}$, though with multiplicity. To avoid redundancy, the fact that $\mathcal{G}$ appears twice in the symplectic diagonalization of $\Gamma$ remains unrecorded.
	
	\noindent 2. Gaussian functionals are in general characterized by a positive semidefinite covariance matrix, cf. Appendix~\ref{app:gaussian_entropy}. However, they fail to be states whenever \eqref{eq:heisenberg} is not satisfied. The uncertainty relation \eqref{eq:heisenberg} implies that the covariance matrix is positive definite.
\end{remarks}
\begin{lemma}\label{corollary:symplectic}
	Let $A,B\in\R^{n\times n}$ be symmetric, positive definite matrices and
	\begin{align}\label{eq:block_matrix}
		\Gamma = \begin{pmatrix} A & 0\\ 0 & B	\end{pmatrix}\in\R^{2n\times 2n}.
	\end{align}
	Then, $\sigma_\mathrm{symp}(\Gamma)$ consists of the positive square roots of the eigenvalues of $AB$, $A^{1/2}BA^{1/2}$ or $B^{1/2}AB^{1/2}$, counted with multiplicity.
\end{lemma}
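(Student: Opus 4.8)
The plan is to read the symplectic spectrum off directly from the characterization provided by Proposition~\ref{prop:williamson}, namely that $\sigma_\mathrm{symp}(\Gamma)$ is the imaginary part of the spectrum of $\Gamma\Omega_n$. Exploiting the block-diagonal structure~\eqref{eq:block_matrix} of $\Gamma$ together with the explicit form of $\Omega_n$, I would first compute
\[
	\Gamma\Omega_n=\begin{pmatrix} 0 & -A\\ B & 0\end{pmatrix},
\]
and then determine its characteristic polynomial. Since $A$ and $B$ are positive definite, both $\Gamma$ and $\Omega_n$ are invertible, so $0\notin\sigma(\Gamma\Omega_n)$ and I may assume $\mu\neq 0$ throughout.

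The computational core is a block-determinant reduction. Treating the lower-right block $-\mu\1$ as invertible, the Schur-complement formula gives
\[
	\det(\Gamma\Omega_n-\mu\1)=\det(-\mu\1)\,\det\!\big(-\mu\1-\tfrac1\mu AB\big)=\det(\mu^2\1+AB),
\]
where the scalar factors $(-\mu)^n$ and $(-1/\mu)^n$ cancel. Hence $\mu\in\sigma(\Gamma\Omega_n)$ if and only if $-\mu^2\in\sigma(AB)$.

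The remaining ingredient is purely algebraic. Although $AB$ need not be symmetric, it is similar to the symmetric positive-definite matrix $A^{1/2}BA^{1/2}$ through conjugation by $A^{1/2}$, and likewise $BA$ is similar to $B^{1/2}AB^{1/2}$; since $AB$ and $BA$ share their characteristic polynomial, all three matrices $AB$, $A^{1/2}BA^{1/2}$ and $B^{1/2}AB^{1/2}$ have the same, strictly positive, eigenvalues $\{\lambda_k\}_{k=1}^n$ (with multiplicity). Feeding this into the previous step, the roots of the characteristic polynomial are the purely imaginary pairs $\mu=\pm i\sqrt{\lambda_k}$, and taking positive imaginary parts yields $\sigma_\mathrm{symp}(\Gamma)=\{\sqrt{\lambda_k}\}_{k=1}^n$ with the correct multiplicities, which is the claim.

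I expect the only delicate point to be the bookkeeping in the Schur-complement step—tracking the powers of $\mu$ and the signs so that one genuinely lands on $\det(\mu^2\1+AB)$ rather than a spurious sign—together with the observation that one must pass through the similarity transform to conclude that the non-symmetric product $AB$ has real positive spectrum. As an equivalent route that sidesteps the asymmetry altogether, I could instead invoke the second characterization in Proposition~\ref{prop:williamson} and diagonalize the manifestly Hermitian matrix
\[
	i\Gamma^{1/2}\Omega_n\Gamma^{1/2}=\begin{pmatrix} 0 & -iA^{1/2}B^{1/2}\\ iB^{1/2}A^{1/2}& 0\end{pmatrix},
\]
whose square is block diagonal with blocks $A^{1/2}BA^{1/2}$ and $B^{1/2}AB^{1/2}$; this makes the reality and positivity of the relevant eigenvalues transparent and leads to the same conclusion.
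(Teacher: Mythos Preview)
Your proof is correct. Your primary route---computing the characteristic polynomial of $\Gamma\Omega_n$ via the Schur complement to obtain $\det(\Gamma\Omega_n-\mu\1)=\det(\mu^2\1+AB)$---is a genuinely different argument from the paper's, which instead works with the Hermitian matrix $i\Gamma^{1/2}\Omega_n\Gamma^{1/2}$, observes it is $i$ times an antisymmetric matrix so that its spectrum comes in pairs $\pm\gamma_k$, and then squares it to obtain the block-diagonal matrix with blocks $A^{1/2}BA^{1/2}$ and $B^{1/2}AB^{1/2}$. In other words, the paper's proof is precisely the ``equivalent route'' you sketch in your final paragraph. Your Schur-complement approach is slightly more direct in that it never requires forming $\Gamma^{1/2}$, while the paper's approach has the advantage that Hermiticity and the $\pm\gamma_k$ pairing are manifest from the outset, so no bookkeeping of signs or powers of $\mu$ is needed. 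Both routes rely on the same similarity observation to identify the spectra of $AB$, $A^{1/2}BA^{1/2}$ and $B^{1/2}AB^{1/2}$.
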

\begin{proof}
	The matrices $AB$, $A^{1/2}BA^{1/2}$ and $B^{1/2}AB^{1/2}$ being similar, their spectra coincide and by assumption only consist of (strictly) positive eigenvalues. By Proposition \ref{prop:williamson} it now suffices to show that the positive eigenvalues of $i\Gamma^{1/2}\Omega_n\Gamma^{1/2}$ agree with the square root of the eigenvalues of $A^{1/2}BA^{1/2}$, all counted with multiplicity.
	
	In fact, since $\Gamma^{1/2} = A^{1/2}\oplus B^{1/2}$ is symmetric and $\Omega_n$ antisymmetric, the matrix $\Gamma^{1/2}\Omega_n\Gamma^{1/2}$ is antisymmetric. Its eigenvalues are thus grouped in pairs $\pm i\gamma_k$ with $\gamma_k\geq 0$ for all $k=1,\ldots,n$. Accordingly the spectrum of $i\Gamma^{1/2}\Omega_n\Gamma^{1/2}$ with multiplicity reads $\left\lbrace\pm\gamma_k\right\rbrace_{k=1}^n$. Observing finally that
	\begin{align*}
		\left(i\Gamma^{1/2}\Omega_n\Gamma^{1/2}\right)^2
			= \begin{pmatrix} A^{1/2}BA^{1/2} & 0 \\ 0&B^{1/2}AB^{1/2}\end{pmatrix}
	\end{align*}
	where $A^{1/2}BA^{1/2}$ and $B^{1/2}AB^{1/2}$ are similar, one concludes that all $\gamma_k>0$ and that they coincide with the positive square root of the eigenvalues of $A^{1/2}BA^{1/2}$.
\end{proof}
\begin{remark}
	As a direct consequence, the symplectic eigenvalues of $\Gamma_\Lambda$ in \eqref{eq:m} are all $1$, since the diagonal blocks are inverse of each other.
\end{remark}

Gaussian states remain Gaussian under partial traces, with covariance matrix truncated correspondingly. 
\begin{proposition}\label{prop:reduction_Gaussian}
	Let $\Lambda_0\subset\Lambda$ be finite subsets of $\mathcal{V}$ and $\rho_\Lambda$ the density matrix of a Gaussian state on $\mathcal{H}_\Lambda$ with covariance matrix $\Gamma_\Lambda$. Then, the reduced state on $\mathcal{H}_{\Lambda_0}$ with density matrix $\rho_{\Lambda_0}=\trace_{\Lambda_0^c}\left(\rho_\Lambda\right)$ is Gaussian with covariance matrix 
	\begin{align}\label{eq:restricted_cov_mat}
		\Gamma_{\Lambda_0} 
			= \left(\iota_{\Lambda_0}^{*}\oplus\iota_{\Lambda_0}^{*}\right)
				\Gamma_\Lambda\left(\iota_{\Lambda_0}\oplus\iota_{\Lambda_0}\right)
				\:\in\R^{2|\Lambda_0|\times 2|\Lambda_0|}
	\end{align}
	given in terms of the canonical embedding $\iota_{\Lambda_0}: \R^{|\Lambda_0|}\hookrightarrow\R^{|\Lambda|}$. The uncertainty relation~\eqref{eq:heisenberg} holds with the truncated symplectic form $\Omega_{\Lambda_0} = (\iota_{\Lambda_0}^{*}\oplus\iota_{\Lambda_0}^{*})\Omega_\Lambda\left(\iota_{\Lambda_0}\oplus\iota_{\Lambda_0}\right)$.
\end{proposition}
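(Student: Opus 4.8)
The plan is to characterise the Gaussian state $\rho_\Lambda$ through its quantum characteristic function and to show that the partial trace over $\Lambda_0^c$ acts on it simply by restriction to the phase space of the subsystem $\Lambda_0$. Recall from Appendix~\ref{app:gaussian_entropy} that $\rho_\Lambda$ being Gaussian means its characteristic function
\[
	\chi_\Lambda(\xi) = \Tr{\rho_\Lambda\, W_\Lambda(\xi)},\qquad \xi\in\R^{2|\Lambda|},
\]
is of Gaussian form, i.e. the exponent of $\chi_\Lambda$ is an (at most) quadratic polynomial in $\xi$ whose purely quadratic part is the form $\xi\mapsto\langle\xi,\Gamma_\Lambda\xi\rangle$ up to a fixed normalisation. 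Here $W_\Lambda(\xi)$ denotes the corresponding Weyl (displacement) operator on $\hilb_\Lambda$. Since the characteristic function determines the state uniquely, it suffices to compute $\chi_{\Lambda_0}$ and to recognise it as Gaussian.

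First I would use the tensor factorisation $\hilb_\Lambda = \hilb_{\Lambda_0}\otimes\hilb_{\Lambda_0^c}$, under which the Weyl operators factorise as $W_\Lambda(\xi) = W_{\Lambda_0}(\xi_0)\otimes W_{\Lambda_0^c}(\xi_0^c)$ whenever $\xi\in\R^{2|\Lambda|}$ is decomposed into its $\Lambda_0$- and $\Lambda_0^c$-parts $\xi_0$ and $\xi_0^c$; the embedding $\iota_{\Lambda_0}\oplus\iota_{\Lambda_0}$ implements the inclusion of the $\Lambda_0$-part, acting separately on the position and momentum blocks. Using $W_{\Lambda_0^c}(0)=\1$ together with the defining property of the partial trace, one then obtains for every $\xi_0\in\R^{2|\Lambda_0|}$
\[
	\chi_{\Lambda_0}(\xi_0) = \Tr{\rho_{\Lambda_0}\, W_{\Lambda_0}(\xi_0)}
		= \Tr{\rho_\Lambda\,\bigl(W_{\Lambda_0}(\xi_0)\otimes\1\bigr)}
		= \chi_\Lambda\bigl((\iota_{\Lambda_0}\oplus\iota_{\Lambda_0})\xi_0\bigr),
\]
so that tracing out $\Lambda_0^c$ is the same as setting the $\Lambda_0^c$-components of the argument to zero.

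Substituting the Gaussian form of $\chi_\Lambda$ and reading off the quadratic part then yields $\langle(\iota_{\Lambda_0}\oplus\iota_{\Lambda_0})\xi_0,\Gamma_\Lambda(\iota_{\Lambda_0}\oplus\iota_{\Lambda_0})\xi_0\rangle = \langle\xi_0,\Gamma_{\Lambda_0}\xi_0\rangle$ with $\Gamma_{\Lambda_0}$ exactly as in \eqref{eq:restricted_cov_mat}; hence $\chi_{\Lambda_0}$ is again Gaussian and $\rho_{\Lambda_0}$ is a Gaussian state with the asserted covariance matrix. The same conclusion for the entries of $\Gamma_{\Lambda_0}$ can be reached directly from the definition \eqref{eq:cov_matrix}: for indices $k,l$ associated with sites in $\Lambda_0$ the operator $r_k r_l + r_l r_k$ is of the form $(\,\cdot\,)\otimes\1$ on $\hilb_{\Lambda_0}\otimes\hilb_{\Lambda_0^c}$, so that $\Tr{\rho_{\Lambda_0}(r_k r_l+r_l r_k)}=\Tr{\rho_\Lambda(r_k r_l+r_l r_k)}$, which is precisely the $(k,l)$-entry of the truncation $(\iota_{\Lambda_0}^{*}\oplus\iota_{\Lambda_0}^{*})\Gamma_\Lambda(\iota_{\Lambda_0}\oplus\iota_{\Lambda_0})$.

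Finally, the uncertainty relation follows by compression. Using that $\Gamma_{\Lambda_0}$ and $\Omega_{\Lambda_0}$ are the same compression of $\Gamma_\Lambda$ and $\Omega_\Lambda$, one has $\Gamma_{\Lambda_0}+i\Omega_{\Lambda_0} = (\iota_{\Lambda_0}^{*}\oplus\iota_{\Lambda_0}^{*})(\Gamma_\Lambda+i\Omega_\Lambda)(\iota_{\Lambda_0}\oplus\iota_{\Lambda_0})$, and a positive semidefinite Hermitian matrix stays positive semidefinite under such a compression; equivalently, $\rho_{\Lambda_0}$ is a genuine state and must therefore satisfy \eqref{eq:heisenberg}. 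A short inspection of the block form of $\Omega_\Lambda$ moreover shows that $\Omega_{\Lambda_0}$ is again the canonical symplectic form on $\R^{2|\Lambda_0|}$, so that Proposition~\ref{prop:williamson} applies to the reduced state without change. I expect the only genuinely delicate point to be the marginalisation identity of the second paragraph---in particular the fact that, in this infinite-dimensional bosonic setting, a Gaussian characteristic function still determines a unique Gaussian state---while the ensuing covariance and uncertainty bookkeeping is routine.
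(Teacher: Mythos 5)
Your argument is correct and essentially coincides with the paper's proof in Appendix~\ref{app:derivation}: there, too, one evaluates the reduced state on Weyl operators, uses $W(f_{\Lambda_0})\otimes\1 = W(f_{\Lambda_0})\otimes W(0_{\Lambda_0^c})$ to reduce to $\chi_\Lambda$ evaluated at $\tilde f_{\Lambda_0}\oplus 0$, and reads off the truncated covariance matrix, with the uncertainty relation following by compression. The uniqueness concern you flag is not actually needed here, since being Gaussian is by definition a property of the state's action on the Weyl algebra, which is exactly what the computation exhibits.
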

A proof can be found in Appendix~\ref{app:gaussian_entropy}. Notice the following notational rule: quantities inherent to the Hamiltonian $H_\Lambda$ on the Hilbert space $\hilb_\Lambda$ feature the subscript $(\cdot)_\Lambda$, whereas restricted or reduced quantities onto $\Lambda_0$ exhibit the subscript $(\cdot)_{\Lambda_0}$. The latter should in particular not be confused with quantities inherent to the Hamiltonian $H_{\Lambda_0}$ on $\hilb_{\Lambda_0}$, which will never appear in this work. In particular, the symplectic eigenvalues of a restricted covariance matrix $\Gamma_{\Lambda_0}$ generally differ from $1$, even though the underlying unrestricted $\Gamma_\Lambda$ may be of the form \eqref{eq:m}.
\bigskip\\
\noindent\textbf{Explicit formula for the entanglement entropy.} Our results rest on the following explicit expression relating the von Neumann entropy of a Gaussian state to the symplectic spectrum of its covariance matrix. Thanks to Proposition \ref{prop:reduction_Gaussian}, this result also applies to the bipartite entanglement entropy of any Gaussian state.
\begin{proposition}\label{prop:entropy_formula}
	Let $\Lambda\subset\mathcal{V}$ be a finite subset and $\rho_\Lambda$ the density matrix of a Gaussian state with (positive definite) covariance matrix $\Gamma_\Lambda$. The von Neumann entropy of $\rho_\Lambda$ satisfies
	\begin{align}\label{eq:von_neumann_entropy}
		S\left(\rho_\Lambda\right) = - \Tr{\rho_\Lambda \log\rho_\Lambda}
			= \sum\limits_{\gamma\in\sigma_\mathrm{symp}(\Gamma_\Lambda)} f(\gamma)
	\end{align}
	where $\sigma_\mathrm{symp}(\cdot)$ denotes the symplectic spectrum with multiplicity (see Proposition~\ref{prop:williamson}) and
	\begin{align}\label{eq:entropy_function}
		f(x) = \frac{x+1}{2}\log\left(\frac{x+1}{2}\right) - \frac{x-1}{2}\log\left(\frac{x-1}{2}\right)
	\end{align}
	for all $x\in(1,\infty)$ and $f(1)=0$. In particular, the bipartite entanglement entropy of $\rho_\Lambda$ is given by
	\begin{align}\label{eq:bipartite_entanglement}
		S\left(\rho_\Lambda;\Lambda_0\right) = \sum\limits_{\gamma\in\sigma_\mathrm{symp}(\Gamma_{\Lambda_0})} f(\gamma)
	\end{align}
	for any bipartition $\Lambda=\Lambda_0\cup\Lambda_0^c$.
\end{proposition}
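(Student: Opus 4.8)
The plan is to diagonalize the covariance matrix by a symplectic transformation via Williamson's normal form (Proposition~\ref{prop:williamson}), to exploit that such transformations are implemented by entropy-preserving unitaries on $\hilb_\Lambda$, and thereby to reduce the computation to a tensor product of single-mode thermal states whose entropies are evaluated explicitly.

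First I would invoke the metaplectic correspondence: every $S\in\mathsf{SP}(2|\Lambda|,\R)$ is realized by a unitary $U_S$ on $\hilb_\Lambda$ with $U_S^*\,r\,U_S = Sr$, under which the covariance matrix transforms by symplectic congruence. Since the von Neumann entropy is invariant under unitary conjugation, and by Proposition~\ref{prop:williamson} a suitable $S$ brings $\Gamma_\Lambda$ to the normal form $\mathcal{G}\oplus\mathcal{G}$ with $\mathcal{G}=\diag(\gamma_1,\dots,\gamma_{|\Lambda|})$ and $\gamma_k\in\sigma_\mathrm{symp}(\Gamma_\Lambda)$, I may assume without loss of generality that $\Gamma_\Lambda$ already has this form. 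As the entropy is moreover displacement-invariant, I take $\rho_\Lambda$ centred.

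Because $\mathcal{G}\oplus\mathcal{G}$ carries no correlations between distinct modes, the associated Gaussian state factorizes as $\rho_\Lambda = \bigotimes_{k=1}^{|\Lambda|}\rho_k$, where $\rho_k$ is the single-mode Gaussian state with covariance matrix $\gamma_k\1_2$. Each $\rho_k$ is a thermal state of one harmonic oscillator with mean occupation $\bar n_k = (\gamma_k-1)/2$; the bound $\gamma_k\geq 1$ from Proposition~\ref{prop:williamson} ensures $\bar n_k\geq 0$, and $\gamma_k=1$ yields the pure vacuum. In the number basis $\rho_k$ has the geometric eigenvalue distribution $p_m = \bar n_k^{\,m}(\bar n_k+1)^{-(m+1)}$, and a direct evaluation of $-\sum_{m\geq 0}p_m\log p_m$ gives $(\bar n_k+1)\log(\bar n_k+1) - \bar n_k\log\bar n_k = f(\gamma_k)$, with the convention $f(1)=0$ matching the vanishing entropy of the vacuum.

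By additivity of the von Neumann entropy over tensor products I would then conclude
\begin{align*}
	S(\rho_\Lambda) = \sum_{k=1}^{|\Lambda|} S(\rho_k) = \sum_{k=1}^{|\Lambda|} f(\gamma_k) = \sum_{\gamma\in\sigma_\mathrm{symp}(\Gamma_\Lambda)} f(\gamma),
\end{align*}
which is \eqref{eq:von_neumann_entropy}. The bipartite identity \eqref{eq:bipartite_entanglement} follows immediately by applying this formula to the reduced state $\rho_{\Lambda_0}$, which by Proposition~\ref{prop:reduction_Gaussian} is Gaussian with covariance matrix $\Gamma_{\Lambda_0}$. The main obstacle is the metaplectic step rather than the single-mode computation: one must justify that the symplectic diagonalization lifts to an honest entropy-preserving unitary on $\hilb_\Lambda$ and that the normal form factorizes across modes. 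This is secured by the Stone--von Neumann theorem, according to which any irreducible representation of the finitely many canonical commutation relations \eqref{eq:commutation} is unitarily equivalent to the Schr\"odinger representation, so that the Bogolubov transformation provided by Williamson's $S$ is indeed implemented unitarily.
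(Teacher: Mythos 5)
Your proposal is correct and follows essentially the same route as the paper: Williamson normal form, unitary (metaplectic) implementation of the symplectic diagonalization, factorization into single-mode thermal states with geometric spectrum, explicit single-mode entropy computation yielding $f(\gamma)$, and additivity. The only difference is one of bookkeeping --- the paper justifies the factorization by constructing the product density matrix explicitly in the Hermite basis and identifying it with the transformed state via uniqueness of Gaussian states on the Weyl algebra, whereas you assert the factorization directly; the underlying argument is the same.
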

This statement seems to date back to at least~\cite{Holevo1999}, albeit without proof. For the convenience of the reader, Appendix~\ref{app:gaussian_entropy} contains a mathematical proof. Let us emphasize that, by~\eqref{eq:heisenberg} and Proposition~\ref{prop:williamson}, the symplectic eigenvalues of $\Gamma_{\Lambda}$ (and $\Gamma_{\Lambda_0}$) are all greater or equal to $1$, and the above formulae are thus well defined. Theorems~\ref{theorem:area_law} and \ref{theorem:entropy_grow} now rest on suitable estimates of the entropy function $f$ and bounds on the symplectic eigenvalues.
\begin{lemma}\label{lem:function}
	The function $f: [1,\infty) \to \R $ defined in \eqref{eq:entropy_function} satisfies:
 	\begin{enumerate}[label=(\roman*)]
 		\item $f$ is continuous, strictly monotone increasing and concave with $\lim_{x\to 1} f'(x) = \infty$ \label{it:increasing}.
 		\item there exists $C\in(0,1]$ such that $ f(x)\leq C \sqrt{x^2-1}$, for all $x\in[1,\infty)$ \label{it:upper_bound}.
 		\item $f(x)\geq \log(x)$ for all $x\in[1,\infty)$.\label{it:lower_bound}
 	\end{enumerate}
\end{lemma}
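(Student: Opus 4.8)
The plan is to reduce all three items to elementary one-variable calculus once the derivatives of $f$ are in hand. Differentiating \eqref{eq:entropy_function} term by term (the additive constants coming from $\frac{d}{dx}[t\log t]=\log t+1$ cancel) I would first record, for $x>1$,
\begin{align*}
	f'(x)=\tfrac12\log\frac{x+1}{x-1},\qquad f''(x)=-\frac{1}{x^2-1}.
\end{align*}
Item \ref{it:increasing} is then immediate: continuity on $[1,\infty)$ holds because $t\mapsto t\log t$ extends continuously by $0$ at $t=0$, so the second term of $f$ vanishes as $x\to1^+$ and $f(x)\to0=f(1)$; strict monotonicity follows from $f'>0$ (since $\frac{x+1}{x-1}>1$), concavity from $f''<0$, and $\lim_{x\to1^+}f'(x)=+\infty$ from $\frac{x+1}{x-1}\to+\infty$.

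For the lower bound \ref{it:lower_bound} I would compare $f$ with $\log$ through their derivatives. Since $f(1)=0=\log 1$, it suffices to prove $f'(x)\geq 1/x$ on $(1,\infty)$, equivalently $\phi(x):=\log\frac{x+1}{x-1}-\frac{2}{x}\geq0$. A direct computation gives $\phi'(x)=-\frac{2}{x^2(x^2-1)}<0$ together with $\lim_{x\to\infty}\phi(x)=0$, so $\phi$ decreases to $0$ and hence stays strictly positive. Integrating $f'(s)\geq 1/s$ over $[1,x]$ (the logarithmic singularity of $f'$ at $1$ is integrable, so the fundamental theorem of calculus applies with $f$ continuous at $1$) yields $f(x)\geq\log x$.

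The upper bound \ref{it:upper_bound} follows the same template with comparison function $\sqrt{x^2-1}$, whose derivative is $x/\sqrt{x^2-1}$ and which also vanishes at $x=1$. It therefore suffices to show $f'(x)\leq x/\sqrt{x^2-1}$ on $(1,\infty)$; integrating then gives $f(x)\leq\sqrt{x^2-1}$, so the choice $C=1\in(0,1]$ works. To verify the derivative inequality I would study $G(x):=\frac{x}{\sqrt{x^2-1}}-f'(x)$ and compute $G'(x)=\bigl(\sqrt{x^2-1}-1\bigr)(x^2-1)^{-3/2}$, which is negative on $(1,\sqrt2)$ and positive on $(\sqrt2,\infty)$. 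Hence $G$ attains its global minimum at $x=\sqrt2$, where $G(\sqrt2)=\sqrt2-\log(1+\sqrt2)>0$. I expect this last step---reducing the entire inequality to the single numerical fact $\sqrt2>\log(1+\sqrt2)$---to be the only genuine obstacle; the rest is routine. Since $G\geq G(\sqrt2)>0$, the bound $f'\leq x/\sqrt{x^2-1}$ holds and the claim follows.
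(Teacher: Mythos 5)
Your proof is correct and follows essentially the same route as the paper: all three items are reduced to the derivatives $f'(x)=\tfrac12\log\tfrac{x+1}{x-1}$ and $f''(x)=-\tfrac{1}{x^2-1}$, and items (ii), (iii) are obtained by comparing $f'$ with the derivative of the respective comparison function, both of which vanish at $x=1$. The only difference lies in how the pointwise derivative inequalities are verified: the paper uses the elementary bounds $y\geq 2\log y$ and $\log y\geq 2\tfrac{y-1}{y+1}$, whereas you minimize the differences directly (locating the minimum of $x/\sqrt{x^2-1}-f'(x)$ at $x=\sqrt2$ with value $\sqrt2-\log(1+\sqrt2)>0$, and showing that $\log\tfrac{x+1}{x-1}-\tfrac2x$ decreases monotonically to $0$); both verifications are sound.
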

\begin{proof}
	The continuity in \ref{it:increasing} is immediate as $\lim_{x\to 0}x\log(x) = 0$. The remaining assertions follow from
	\begin{align*}
		f'(x) = \frac{1}{2}\log\left(\frac{x+1}{x-1}\right) > 0,
			\qquad f''(x) = -\frac{1}{x^2-1} <0.
	\end{align*}	
	We show \ref{it:upper_bound} for $C=1$. Let $g(x) = \sqrt{x^2-1}$. Then $g(1)=0$ and for all $x\in(1,\infty)$ we have
	\begin{align*}
		2g'(x) = \frac{2x}{\sqrt{x^2-1}} \geq \frac{x+1}{\sqrt{x^2-1}}
			\geq 2\log\left(\frac{x+1}{\sqrt{x^2-1}}\right)
			= \log\left(\frac{x+1}{x-1}\right) = 2f'(x),
	\end{align*}
	where in the second inequality we used that $y\geq 2\log(y)$ for all $y\in(0,\infty)$. Similarly, \ref{it:lower_bound} follows from $\log(1) = 0$ and from the inequality $\log(y)\geq 2\tfrac{y-1}{y+1}$ for all $y\in[1,\infty)$, which implies
	\begin{align*}
		f'(x) = \frac{1}{2}\log\left(\frac{x+1}{x-1}\right)
			\geq \left.\frac{y-1}{y+1}\right\vert_{y=\tfrac{x+1}{x-1}} = \frac{1}{x} = \log'(x)
	\end{align*}
	for all $x\in(1,\infty)$.
\end{proof}
\begin{remarks}
	1. One can show that for each $\alpha\in(0,1)$ there exists a constant $C_\alpha\in(0,\infty)$ such that $f(x)\leq C_\alpha (x^2-1)^\alpha$. By \ref{it:increasing} this constant blows up in the limit $\alpha\to\lbrace 0,1\rbrace$.
	
	\noindent 2. While the coefficient of the lower bound \ref{it:lower_bound} cannot be improved, a more careful analysis of \ref{it:upper_bound} yields the optimal constant	$C=\sqrt{x_0^2-1}(\log(2)-\log(\sqrt{x_0^2-1}))\simeq 0.56447$,	where $x_0 \simeq 1.6367$ is the unique solution different from $1$ of $x\log\bigl(\frac{x^2-1}{4}\bigr) = \log\bigl(\frac{x-1}{x+1}\bigr)$.
\end{remarks}

%%%%%%%%%%%%%%%%%%%%%%%%%%%%%%%%%%%%%%%%%%%%%%%%%%%%%%%%%%%%%%%%%%%%%%%%%%%%%%%%%%%%
\section{Upper bound and area law with disorder}
\label{sec:upper_bound}
%%%%%%%%%%%%%%%%%%%%%%%%%%%%%%%%%%%%%%%%%%%%%%%%%%%%%%%%%%%%%%%%%%%%%%%%%%%%%%%%%%%%

The following bound relates the entanglement entropy of ground states of quite general oscillator systems to decay properties of the underlying single-particle Hamiltonian.
\begin{lemma}
	Let $H_\Lambda$ be a Hamiltonian of the form \eqref{eq:finite_volume_uncorrelated}, which satisfies Assumption \ref{ass:exhaustive}, and $\rho_\Lambda$ the density matrix associated to its (unique) ground state. For any $\Lambda_0\subset\Lambda$ we have
	\begin{equation}
		S\left(\rho_\Lambda;\Lambda_0\right)
			\leq D^{1/2} \sum\limits_{\substack{x\in\Lambda_0\\y\in\Lambda\setminus\Lambda_0}}
				\vert\langle\delta_x,
				(h_\Lambda^{(p)})^{1/2}h_\Lambda^{-1/2}(h_\Lambda^{(p)})^{1/2}\delta_y\rangle\vert^{1/2}
				\qquad\mathbb{P}\textnormal{-a.s.}, \label{eq:bound_mat_element}
	\end{equation}
	where $D$ denotes the constant from Assumption \ref{ass:exhaustive}.
\end{lemma}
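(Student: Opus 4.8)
The plan is to feed the explicit entropy formula \eqref{eq:bipartite_entanglement} into estimate \ref{it:upper_bound} of Lemma \ref{lem:function} and then to re-express the resulting sum over symplectic eigenvalues through the off-diagonal block of the $q$-part of the covariance matrix. Write $A = (h_\Lambda^{(p)})^{1/2}h_\Lambda^{-1/2}(h_\Lambda^{(p)})^{1/2}$ and $B = (h_\Lambda^{(p)})^{-1/2}h_\Lambda^{1/2}(h_\Lambda^{(p)})^{-1/2}$ for the two diagonal blocks of $\Gamma_\Lambda$ in \eqref{eq:m}, and observe that $B = A^{-1}$. By Proposition~\ref{prop:reduction_Gaussian} the reduced covariance matrix is again block-diagonal, $\Gamma_{\Lambda_0} = A_{\Lambda_0}\oplus B_{\Lambda_0}$ with $A_{\Lambda_0} = \iota_{\Lambda_0}^* A\,\iota_{\Lambda_0}$ and $B_{\Lambda_0} = \iota_{\Lambda_0}^* B\,\iota_{\Lambda_0}$, so Lemma~\ref{corollary:symplectic} identifies the symplectic eigenvalues as $\gamma_k = \lambda_k(A_{\Lambda_0}B_{\Lambda_0})^{1/2}$. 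Since $f(\gamma)\leq\sqrt{\gamma^2-1}$ by Lemma~\ref{lem:function}\ref{it:upper_bound}, the claim reduces to a bound on $\sum_k\sqrt{\gamma_k^2-1}$, i.e.\ on the (nonnegative) eigenvalues of $A_{\Lambda_0}B_{\Lambda_0}-\1$.

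The structural heart of the argument is the identity
\begin{align*}
	A_{\Lambda_0}B_{\Lambda_0} - \1
		= -\,\iota_{\Lambda_0}^* A\,(\1 - \iota_{\Lambda_0}\iota_{\Lambda_0}^*)\,B\,\iota_{\Lambda_0}
		= -A_{0c}B_{c0},
\end{align*}
where $A_{0c} = \iota_{\Lambda_0}^* A\,\iota_{\Lambda_0^c}$ and $B_{c0} = \iota_{\Lambda_0^c}^* B\,\iota_{\Lambda_0}$, with $\iota_{\Lambda_0^c}$ the canonical embedding of the complement. It follows at once from $AB=\1$ upon inserting the resolution $\1 = \iota_{\Lambda_0}\iota_{\Lambda_0}^* + \iota_{\Lambda_0^c}\iota_{\Lambda_0^c}^*$ between $A$ and $B$. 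Crucially, the entries of $A_{0c}$ are exactly the matrix elements $\langle\delta_x, A\delta_y\rangle$ with $x\in\Lambda_0$, $y\in\Lambda\setminus\Lambda_0$ appearing on the right-hand side of \eqref{eq:bound_mat_element}, so that the bipartition enters only through this single off-diagonal block.

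It then remains to pass from the real nonnegative eigenvalues $\gamma_k^2-1$ of $-A_{0c}B_{c0}$ to the entries of $A_{0c}$. First I would bound eigenvalues by singular values: since $t\mapsto e^{t/2}$ is convex and increasing, Weyl's majorant theorem gives $\sum_k\sqrt{\gamma_k^2-1} = \sum_k\sqrt{|\lambda_k(A_{0c}B_{c0})|}\leq\sum_k\sqrt{s_k(A_{0c}B_{c0})}$ with $s_k$ the singular values. Using $s_k(A_{0c}B_{c0})\leq\norm{B_{c0}}\,s_k(A_{0c})\leq\norm{B}\,s_k(A_{0c})$ (a compression has norm at most $\norm B$) together with the subadditivity of $X\mapsto\sum_k s_k(X)^{1/2}$, applied to the decomposition of $A_{0c}$ into its rank-one matrix units, one arrives at
\begin{align*}
	\sum_k\sqrt{\gamma_k^2-1}
		\leq \norm{B}^{1/2}\sum_k\sqrt{s_k(A_{0c})}
		\leq \norm{B}^{1/2}\!\!\sum_{\substack{x\in\Lambda_0\\ y\in\Lambda\setminus\Lambda_0}}\!\!
			|\langle\delta_x, A\delta_y\rangle|^{1/2}.
\end{align*}

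The one quantitative point deserving care---and the step that produces the stated constant $D^{1/2}$ rather than a worse power of $D$---is the bound on $\norm{B}$. Here I would avoid the crude estimate $\norm{B}\leq\norm{(h_\Lambda^{(p)})^{-1}}\norm{h_\Lambda}^{1/2}\leq D^{2}$ and instead use operator monotonicity of the square root: from $h_\Lambda = (h_\Lambda^{(p)})^{1/2}h_\Lambda^{(q)}(h_\Lambda^{(p)})^{1/2}\leq D\,h_\Lambda^{(p)}$ one gets $h_\Lambda^{1/2}\leq D^{1/2}(h_\Lambda^{(p)})^{1/2}$, whence $B\leq D^{1/2}(h_\Lambda^{(p)})^{-1/2}$ and therefore $\norm{B}\leq D^{1/2}\norm{(h_\Lambda^{(p)})^{-1}}^{1/2}\leq D$. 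Substituting $\norm{B}^{1/2}\leq D^{1/2}$ yields \eqref{eq:bound_mat_element}. The entire argument is deterministic on the almost-sure event on which Assumption~\ref{ass:exhaustive} holds, so the bound is valid $\mathbb{P}$-a.s. The main obstacle I anticipate is precisely this sharp control of $\norm{B}$ together with the eigenvalue-to-singular-value-to-entry passage; by contrast, the algebraic identity for $A_{\Lambda_0}B_{\Lambda_0}-\1$ and the reduction via the entropy formula are routine.
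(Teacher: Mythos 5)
Your proposal is correct and follows essentially the same route as the paper's proof: reduce via \eqref{eq:m}, \cref{prop:reduction_Gaussian}, \cref{corollary:symplectic} and Lemma~\ref{lem:function}\ref{it:upper_bound} to the eigenvalues of $\iota_{\Lambda_0}^{*}A_\Lambda\iota_{\Lambda_0}\iota_{\Lambda_0}^{*}A_\Lambda^{-1}\iota_{\Lambda_0}-\1$, identify this with the off-diagonal block $-P_{\Lambda_0}A_\Lambda P_{\Lambda_0}^\perp A_\Lambda^{-1}$, and pass to the entry-wise Schatten-$1/2$ sum. The only cosmetic differences are that you obtain the entry-wise bound from subadditivity of $\Vert\cdot\Vert_{1/2}^{1/2}$ over matrix units rather than the Jensen--Peierls/Cauchy--Schwarz argument, and you spell out the operator-monotonicity derivation of $\Vert A_\Lambda^{-1}\Vert\leq D$ that the paper only cites from Assumption~\ref{ass:exhaustive}.
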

\begin{proof}
	The ground state of $H_\Lambda$ is Gaussian and thus fully characterized by the covariance matrix $\Gamma_\Lambda$ explicitly given here in \eqref{eq:m} by
	\begin{align*}
	\Gamma_\Lambda = \begin{pmatrix} A_\Lambda & 0 \\ 0 & A_\Lambda^{-1}\end{pmatrix}
		\quad\textnormal{with}\quad A_\Lambda = (h_\Lambda^{(p)})^{1/2}h_\Lambda^{-1/2}(h_\Lambda^{(p)})^{1/2}.
	\end{align*}
	By Proposition~\ref{prop:reduction_Gaussian}, the reduced state on $\Lambda_0$ is Gaussian, with truncated covariance matrix
	\begin{align}\label{eq:restricted_cov_mat_2}
		\Gamma_{\Lambda_0} = \begin{pmatrix} \iota_{\Lambda_0}^{*}A_\Lambda\iota_{\Lambda_0} & 0 \\
			 0 & \iota_{\Lambda_0}^{*}A_\Lambda^{-1}\iota_{\Lambda_0}\end{pmatrix},
	\end{align}
	where $\iota_{\Lambda_0}: \R^{|\Lambda_0|}\hookrightarrow\R^{|\Lambda|}$ denotes the canonical embedding. According to Proposition~\ref{prop:entropy_formula} and Lemma~\ref{lem:function}, the entanglement entropy of the ground state over the bipartition $\Lambda = \Lambda_0\cup\Lambda_0^c$ admits the upper bound
	\begin{align}\label{eq:bipartite_entanglement_2}
		S\left(\rho_\Lambda;\Lambda_0\right) = \sum\limits_{\gamma\in\sigma_\mathrm{symp}(\Gamma_{\Lambda_0})} f(\gamma)
			\:\leq \sum\limits_{\gamma\in\sigma_\mathrm{symp}(\Gamma_{\Lambda_0})} \sqrt{\gamma^2 - 1}.
	\end{align}
	Applying Lemma~\ref{corollary:symplectic} to the matrix~\eqref{eq:restricted_cov_mat_2}, the symplectic eigenvalues $\gamma\in\sigma_\mathrm{symp}(\Gamma_{\Lambda_0})$ are the square roots of the (positive) eigenvalues of
	\begin{align*}
		\iota_{\Lambda_0}^{*} A_\Lambda \iota_{\Lambda_0} \iota_{\Lambda_0}^{*} A_\Lambda^{-1} \iota_{\Lambda_0}
			&= \iota_{\Lambda_0}^{*} A_\Lambda P_{\Lambda_0} A_\Lambda^{-1} \iota_{\Lambda_0} 
			= \mathds{1}_{\Lambda_0} + \iota_{\Lambda_0}^{*}[A_\Lambda,P_{\Lambda_0}]A_\Lambda^{-1}\iota_{\Lambda_0},
	\end{align*}
	where $\iota_{\Lambda_0}\iota_{\Lambda_0}^{*} = P_{\Lambda_0}$ denotes the orthogonal projection onto $\R^{|\Lambda_0|}$ and $\iota_{\Lambda_0}^{*}\iota_{\Lambda_0} = \mathds{1}_{\Lambda_0}$ is the identity on $\R^{|\Lambda_0|}$. Recall that $\Gamma_{\Lambda_0}$ fulfills the Heisenberg uncertainty relation \eqref{eq:heisenberg}. By Proposition \ref{prop:williamson}, its symplectic spectrum is thus contained in $[1,\infty)$, whence the eigenvalues of $\iota_{\Lambda_0}^{*}[A_\Lambda,P_{\Lambda_0}]A_\Lambda^{-1}\iota_{\Lambda_0}$ are all nonnegative. Inserting in the right-hand side of~\eqref{eq:bipartite_entanglement_2} yields
	\begin{align*}
		\sum\limits_{\gamma\in\sigma_\mathrm{symp}(\Gamma_{\Lambda_0})} \sqrt{\gamma^2 - 1}
			&\leq \Tr{\vert \iota_{\Lambda_0}^{*}[A_\Lambda,P_{\Lambda_0}]A_\Lambda^{-1}\iota_{\Lambda_0} \vert^{1/2}}\\
			&\leq D^{1/2} \Vert P_{\Lambda_0} A_\Lambda P_{\Lambda_0}^\perp\Vert_{1/2}^{1/2}
			\qquad\textnormal{$\mathbb{P}$-a.s.},
	\end{align*}
	where $P_{\Lambda_0}^\perp = \mathds{1}_\Lambda - P_{\Lambda_0}$ and $\Vert\cdot\Vert_{1/2}$ with $\Vert O \Vert_{1/2}^{1/2} = \Tr{\vert O\vert^{1/2}}$ denotes the Schatten $1/2$-quasinorm. In the last step we used that $\Vert O_1 O_2\Vert_{1/2} \leq \Vert O_1\Vert_{1/2} \Vert O_2\Vert$ for any two operators $O_1,O_2$, followed by the $\mathbb{P}$-almost sure bound $\Vert A_\Lambda^{-1}\Vert\leq D$ from Assumption~\ref{ass:exhaustive}, as well as
$\iota_{\Lambda_0}^{*}[A_\Lambda,P_{\Lambda_0}] =\iota_{\Lambda_0}^{*}P_{\Lambda_0}[A_\Lambda,P_{\Lambda_0}]= -\iota_{\Lambda_0}^{*} P_{\Lambda_0} A_\Lambda P_{\Lambda_0}^\perp$.
	The claim then follows from
	\begin{align*}
		\Vert P_{\Lambda_0} A_\Lambda P_{\Lambda_0}^\perp\Vert_{1/2}^{1/2}
			&\leq \sum\limits_{x,y\in\Lambda}
				\vert\langle\delta_x,P_{\Lambda_0} A_\Lambda P_{\Lambda_0}^\perp\delta_y\rangle\vert^{1/2}
			= \sum\limits_{\substack{x\in\Lambda_0\\y\in\Lambda\setminus\Lambda_0}}
				\vert\langle\delta_x,A_\Lambda\delta_y\rangle\vert^{1/2},
	\end{align*}
	where the inequality derives from a more general Schatten quasinorm estimate proven hereafter. Let $A=\left(A(j,k)\right)_{j,k=1}^n\in\C^{n\times n}$ and $\alpha\in(0,1]$, the Jensen--Peierls inequality yields
	\begin{align*}
		\|A\|_\alpha^\alpha = \Tr{|A|^\alpha} = \sum\limits_{k=1}^n |A|^{\alpha}(k,k)
			\leq \sum\limits_{k=1}^n |A|(k,k)^\alpha .
	\end{align*}
	Using the polar decomposition $A = U|A|$ for a suitable unitary $U$, we conclude through the Cauchy--Schwarz inequality that
	\begin{align*}
		\|A\|_\alpha^\alpha &\leq \sum\limits_{k=1}^n \left[(U^{*}A)(k,k)\right]^\alpha
			\leq \sum\limits_{k=1}^n \Vert U^{*}A(\cdot,k)\Vert^\alpha
			= \sum\limits_{k=1}^n \Vert A(\cdot,k)\Vert^\alpha
			\leq \sum\limits_{j,k=1}^{n}|A(j,k)|^\alpha.
	\end{align*}
	Here, $A(\cdot,k)$ denotes the $k$-th column of $A$, $\|\cdot\|$ is the Euclidean norm and the last inequality is by $(a+b)^\alpha \leq a^\alpha + b^\alpha$ for all $a,b\geq 0$.
	This concludes the proof.
\end{proof}
We may now present the short proof of Theorem~\ref{theorem:area_law}.
\begin{proof}[Proof of Theorem~\ref{theorem:area_law}]
	By combining \eqref{eq:bound_mat_element} with Jensen's inequality and the localization condition \eqref{eq:expectation_decay}, we obtain
	\begin{align*}
		\Expec{S(\rho_\Lambda;\Lambda_0)} 
			&\leq D^{1/2} \sum\limits_{x\in\Lambda_0}\sum\limits_{y\in\Lambda\setminus\Lambda_0}
				\Expec{\vert\langle\delta_x,
				(h_\Lambda^{(p)})^{1/2}h_\Lambda^{-1/2}(h_\Lambda^{(p)})^{1/2}\delta_y\rangle\vert}^{1/2}\\
			&\leq (c\,D)^{1/2} \sum\limits_{x\in\Lambda_0}
				\sum\limits_{y\in\Lambda\setminus\Lambda_0}
				e^{-\frac{\nu}{2}d(x,y)}.
	\end{align*}
	Now, since the underlying graph $\mathbb{G}=(\mathcal{V},\mathcal{E})$ is of degree bounded by $\mathcal{N}$, the following sum is finite for any $\mu\in(\log \mathcal{N},\infty)$:
\begin{align*}
	\sup\limits_{x\in\mathcal{V}}\sum\limits_{y\in\mathcal{V}} e^{-\mu d(x,y)}
		=:D_\mu<\infty.
\end{align*}
	By assumption, this holds in particular for $\mu=\nu/2$. Hence, for any $\Lambda_0\subset\Lambda\subset\mathcal{V}$ we have
	\begin{align*}
		\sum_{x\in\Lambda_0}^{}\sum_{y\in\Lambda\setminus\Lambda_{0}}e^{-\frac{\nu}{2} d(x,y)}
			\leq (D_{\nu/2})^2|\partial\Lambda_{0}|,
	\end{align*}
cf.~Lemma 4.2 in \cite{Nachtergaele2013}, yielding the claimed area law.
\end{proof}

%%%%%%%%%%%%%%%%%%%%%%%%%%%%%%%%%%%%%%%%%%%%%%%%%%%%%%%%%%%%%%%%%%%%%%%%%%%%%%%%%%%%
\section{Instance of entropy growth without disorder}
\label{sec:lower_bound}
%%%%%%%%%%%%%%%%%%%%%%%%%%%%%%%%%%%%%%%%%%%%%%%%%%%%%%%%%%%%%%%%%%%%%%%%%%%%%%%%%%%%

To contrast with the established area law in disordered oscillators systems, we consider the family of ordered Hamiltonians $\lbrace\overline{H}_\Lambda\rbrace_{\Lambda\in\mathcal{L}}$ \eqref{eq:hamiltonian_explicit_real} on the one-dimensional lattice $\mathcal{V}\in\lbrace\N,\Z\rbrace$. As stated in Theorem~\ref{theorem:entropy_grow} and proved in this section, the bipartite entanglement entropy of the ground state diverges with the size of the subsystem $\Lambda_0\subset\Lambda$, rather than with the size of its boundary. The Hamiltonian $\overline{H}_\Lambda$ is expressed as a quadratic form
\begin{equation}\label{eq:hamiltonian_explicit_real2}
	\overline{H}_\Lambda = \begin{pmatrix} q^T& p^T \end{pmatrix}\begin{pmatrix} \overline{h}_\Lambda^{(q)}& 0\\
			0& \overline{h}_\Lambda^{(p)}\end{pmatrix}		
			\begin{pmatrix} q\\ p \end{pmatrix}
\end{equation}
in terms of the real sequences
\begin{align}\label{eq:sequences_explicit_real}
	\overline{h}^{(p)}_{xy} = \delta_{xy} \qquad\textnormal{and}\qquad
		\overline{h}^{(q)}_{xy} = \begin{cases} 2, & \textnormal{for }x=y,\\ -1, &\textnormal{for }|x-y|=1,\\
		0, &\textnormal{otherwise},\end{cases}
\end{align}
along with the prescription $\overline{h}^{(\sharp)}_\Lambda = \left\lbrace \overline{h}^{(\sharp)}_{xy} \right\rbrace_{x,y\in\Lambda}$ for $\sharp\in\lbrace p,q\rbrace$.

Since for any $\Lambda\in\mathcal{L}$ the matrices $\overline{h}_\Lambda^{(p)},\overline{h}_\Lambda^{(q)}\in\R^{|\Lambda|\times|\Lambda|}$ are the identity, respectively the negative discrete Laplacian on $\ell^2(\Lambda)$, the spectral properties of $\overline{H}_\Lambda$ are easily obtained and collected in Subsection~\ref{subsec:properties_hamilt}. The proof of Theorem~\ref{theorem:entropy_grow} then proceeds by distinguishing the cases $\mathcal{V}=\Z$ and $\mathcal{V}=\N$ in Subsections~\ref{subsec:proof_Z} and \ref{subsec:proof_N}. This distinction is due to the presence or absence of translation invariance in the limit of large systems. For $\mathcal{V}=\Z$ the strong Szeg\H{o} limit theorem implies the divergence of the bipartite entanglement entropy, albeit without any indication on the behavior as a function of the subsystem's size. For $\mathcal{V}=\N$ an explicit analysis of matrix elements provides a quantitative lower bound depending on the subsystem's size.

%%%%%%%%%%%%%%%%%%%%%%%%%%%%%%%%%%%%%%%%%%%%%%%%%%%%%%%%%%%%%%%%%%%%%%%%%%%%%%%%%%%%
\subsection{Properties of $\overline{H}_\Lambda$}
\label{subsec:properties_hamilt}
%%%%%%%%%%%%%%%%%%%%%%%%%%%%%%%%%%%%%%%%%%%%%%%%%%%%%%%%%%%%%%%%%%%%%%%%%%%%%%%%%%%%

The orthogonal spectral decomposition of $\overline{h}_\Lambda^{(q)}$ is well known by Fourier analysis and reads
\begin{align}\label{eq:orth_spec_decomp}
	\overline{h}_\Lambda^{(q)}=ODO^T,
\end{align}
with matrix elements
\begin{align*}
	D_{jk} = 4\sin^2\left(\frac{j\pi}{2(|\Lambda|+1)}\right)\delta_{jk},\qquad
	O_{jk} = \sqrt{\frac{2}{|\Lambda|+1}}\sin\left(\frac{jk\pi}{|\Lambda|+1}\right)
\end{align*}
for $j,k\in\lbrace 1,\ldots,|\Lambda|\rbrace$. In particular, $0 < \overline{h}_\Lambda^{(q)} < 4$ for all $\Lambda\in\mathcal{L}$ and thus Assumption~\ref{ass:exhaustive} is satisfied. However, since in the limit $|\Lambda|\to\infty$ the spectrum of $\overline{h}_\Lambda^{(q)}$ covers the whole interval $[0,4]$ and
\begin{equation*}
	\overline{h}_\Lambda := \big(\overline{h}_\Lambda^{(p)}\big)^{1/2} \overline{h}_\Lambda^{(q)}
		\big(\overline{h}_\Lambda^{(p)}\big)^{1/2}
		= \overline{h}_\Lambda^{(q)},
\end{equation*}
inverse powers of $\overline{h}_\Lambda$ are not uniformly bounded in $|\Lambda|$, in particular not on average. Hence, the localization condition \eqref{eq:expectation_decay} fails deterministically. Nevertheless, as the spectrum of $\overline{h}_\Lambda^{(q)}$ does not include $0$ for any $\Lambda\in\mathcal{L}$, the covariance matrix associated to the ground state of $\overline{H}_\Lambda$ is still well defined and given by \eqref{eq:m} as
\begin{align*}
	\overline{\Gamma}_\Lambda = \begin{pmatrix} (\overline{h}_\Lambda^{(q)})^{-1/2} & 0 \\
		0 & (\overline{h}_\Lambda^{(q)})^{1/2}
	\end{pmatrix},
\end{align*}
where $(\overline{h}_\Lambda^{(q)})^{\pm 1/2}$ may be computed using the spectral decomposition \eqref{eq:orth_spec_decomp}.

Following Proposition~\ref{prop:reduction_Gaussian} for any $\Lambda_0\subset\Lambda\in\mathcal{L}$, the reduction of the (Gaussian) ground state of $\overline{H}_\Lambda$ to the tensor component associated with $\Lambda_0$ is still Gaussian and fully characterized in terms of the covariance matrix
\begin{align}\label{eq:cov_matrix_expl_real_red}
	\overline{\Gamma}_{\Lambda_0}
		= \begin{pmatrix} \iota_{\Lambda_0}^{*}(\overline{h}_\Lambda^{(q)})^{-1/2}\iota_{\Lambda_0} & 0 \\
		0 & \iota_{\Lambda_0}^{*}(\overline{h}_\Lambda^{(q)})^{1/2}\iota_{\Lambda_0}
	\end{pmatrix}.
\end{align}
Note that the truncated diagonal blocks in \eqref{eq:cov_matrix_expl_real_red} remain positive definite. With the explicit formula \eqref{eq:bipartite_entanglement} the bipartite entanglement entropy of the ground state is given in terms of the symplectic eigenvalues of $\overline{\Gamma}_{\Lambda_0}$. By Lemma~\ref{corollary:symplectic}, the latter may be computed as the square roots of the eigenvalues of
\begin{align*}
	\left(h_{\Lambda}^{[-1/2]}\right)^{1/2} h_{\Lambda}^{[1/2]} \left(h_{\Lambda}^{[-1/2]}\right)^{1/2}
		\qquad\textnormal{or}\qquad
		\left(h_{\Lambda}^{[1/2]}\right)^{1/2} h_{\Lambda}^{[-1/2]} \left(h_{\Lambda}^{[1/2]}\right)^{1/2},
\end{align*}
where here and in the sequel we use the shorthand 
\begin{align}\label{eq:shorthand_hamiltonian}
	h_{\Lambda}^{[\alpha]} :=
		\iota_{\Lambda_0}^{*}(\overline{h}_\Lambda^{(q)})^\alpha\iota_{\Lambda_0}
		\qquad\textnormal{for } \alpha\in\R.
\end{align}
To close this subsection, we show the following two useful lemmas.
\begin{lemma}\label{lem:eigenvalues_lower_bound}
	Let $k,n\in\N$, $k<n$ and $A\in\R^{k\times k}$, $B\in\R^{n\times n}$ with $A,B>0$. Let furthermore $\iota_k:\R^k\hookrightarrow\R^n$ denote any embedding and set $B^{[\alpha]} := \iota_k^{*}B^\alpha\iota_k$ for any $\alpha\in\R$. Then, we have
	\begin{align}\label{eq:eigenvalue_low_bound}
		\lambda_j \left(B^{[\alpha/2]}A B^{[\alpha/2]}\right)
			\leq \lambda_j\left((B^{[\alpha]})^{1/2}A(B^{[\alpha]})^{1/2}\right)
	\end{align}
	for any $\alpha\in\R$ and all $j=1,\ldots,k$, where $\lambda_1(\cdot)\leq\ldots\leq\lambda_n(\cdot)$ denote eigenvalues in increasing order.
\end{lemma}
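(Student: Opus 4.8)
The plan is to deduce the claimed eigenvalue inequality from a single operator inequality between the matrices $\bigl(B^{[\alpha/2]}\bigr)^2$ and $B^{[\alpha]}$. Reading $\iota_k$ as an isometric embedding, so that $\iota_k^{*}\iota_k=\1_k$ and $P:=\iota_k\iota_k^{*}$ is the orthogonal projection onto the range of $\iota_k$ with $0\leq P\leq\1_n$, I would first write
\begin{align*}
	\bigl(B^{[\alpha/2]}\bigr)^2
		= \iota_k^{*}B^{\alpha/2}\,\iota_k\iota_k^{*}\,B^{\alpha/2}\iota_k
		= \iota_k^{*}B^{\alpha/2}\,P\,B^{\alpha/2}\iota_k
\end{align*}
and compare this with $B^{[\alpha]}=\iota_k^{*}B^{\alpha/2}B^{\alpha/2}\iota_k$. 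Since $P\leq\1_n$, sandwiching by $B^{\alpha/2}\iota_k$ on the right and $\iota_k^{*}B^{\alpha/2}$ on the left preserves the order, and therefore
\begin{align*}
	\bigl(B^{[\alpha/2]}\bigr)^2 \leq B^{[\alpha]},
\end{align*}
where both sides are symmetric positive definite $k\times k$ matrices (here the positive definiteness of $B$ enters, and that of $A$ will be needed below).

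The crucial point is that I would use precisely this inequality between the \emph{squares} and resist the tempting detour through $B^{[\alpha/2]}\leq\bigl(B^{[\alpha]}\bigr)^{1/2}$ followed by squaring: the map $X\mapsto X^2$ fails to be operator monotone, so that route does not close. Working directly with the squares sidesteps the issue entirely, and this is the only genuinely delicate step of the argument.

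It then remains to transport $\bigl(B^{[\alpha/2]}\bigr)^2\leq B^{[\alpha]}$ through the congruence by $A$. Setting $M=A^{1/2}B^{[\alpha/2]}$ and invoking the standard fact that $M^{*}M$ and $MM^{*}$ share their spectrum, the matrix $B^{[\alpha/2]}AB^{[\alpha/2]}=M^{*}M$ has the same eigenvalues as $MM^{*}=A^{1/2}\bigl(B^{[\alpha/2]}\bigr)^2A^{1/2}$; analogously $\bigl(B^{[\alpha]}\bigr)^{1/2}A\bigl(B^{[\alpha]}\bigr)^{1/2}$ has the same eigenvalues as $A^{1/2}B^{[\alpha]}A^{1/2}$. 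Conjugating the square inequality by $A^{1/2}$ yields
\begin{align*}
	A^{1/2}\bigl(B^{[\alpha/2]}\bigr)^2A^{1/2} \leq A^{1/2}B^{[\alpha]}A^{1/2},
\end{align*}
so Weyl's monotonicity principle (the min--max characterisation of the increasingly ordered eigenvalues of symmetric matrices) gives $\lambda_j\bigl(A^{1/2}(B^{[\alpha/2]})^2A^{1/2}\bigr)\leq\lambda_j\bigl(A^{1/2}B^{[\alpha]}A^{1/2}\bigr)$ for every $j=1,\ldots,k$. Combining this with the two spectral identifications above produces exactly \eqref{eq:eigenvalue_low_bound}.

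In summary, after the initial observation that $P\leq\1_n$ furnishes the inequality between the squares, the remainder is a routine chain of similarity ($M^{*}M\sim MM^{*}$), congruence monotonicity, and the min--max principle; the positive definiteness of $A$ and $B$ ensures that all square roots and the ordering of eigenvalues are well defined throughout.
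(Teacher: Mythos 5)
Your proof is correct and follows essentially the same route as the paper: the key inequality $\bigl(B^{[\alpha/2]}\bigr)^2\leq B^{[\alpha]}$ via $P\leq\1_n$, followed by the $M^{*}M\sim MM^{*}$ identification and eigenvalue monotonicity under congruence. The aside explaining why one must compare the squares rather than pass through $B^{[\alpha/2]}\leq(B^{[\alpha]})^{1/2}$ is a sensible clarification but does not change the argument.
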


\begin{proof}
	Denoting by $P_k = \iota_k\iota_k^{*}$ the orthogonal projection onto $\R^k\subset\R^n$ and exploiting that $P_k\leq\1$, we have
	\begin{align*}
		\left(B^{[\alpha/2]}\right)^2 = \iota_k^{*} B^{\alpha/2} P_k B^{\alpha/2} \iota_k
			\leq \iota_k^{*} B^\alpha \iota_k = B^{[\alpha]}.
	\end{align*}
	Since for any $C\in\C^{n\times n}$ the eigenvalues of $CC^{*}$ and $C^{*}C$ counted with multiplicities coincide, we have for all $j=1,\ldots,k$
	\begin{align*}
		\lambda_j\left(B^{[\alpha/2]}A B^{[\alpha/2]}\right)
			&=\, \lambda_j \left(A^{1/2}\left(B^{[\alpha/2]}\right)^2A^{1/2}\right)
			\leq \lambda_j \left(A^{1/2}B^{[\alpha]}A^{1/2}\right)\\
			&=\, \lambda_j\left((B^{[\alpha]})^{1/2}A(B^{[\alpha]})^{1/2}\right),
	\end{align*}
	which concludes the proof.
\end{proof}
\begin{lemma}\label{lem:lower_bound_det}
	Let $\Lambda\in\mathcal{L}$ and $\overline{H}_\Lambda$ be the Hamiltonian defined in \eqref{eq:hamiltonian_explicit_real}, with ground state density matrix $\overline{\rho}_\Lambda$. For any $\Lambda_0\subset\Lambda$, we have the lower bounds
	\begin{align}
		S(\overline{\rho}_\Lambda;\Lambda_0)
			&\geq\, \frac{1}{2}\max\limits_M\log\left(\det\left(M\right)\right),\label{eq:lower_bound_formula1}\\
		S(\overline{\rho}_\Lambda;\Lambda_0)
			&\geq\, \frac{1}{2}\max\limits_M\log\left(\lambda_\mathrm{max}(M)\right),\label{eq:lower_bound_formula2}
	\end{align}
	where maxima are taken over $M\in\left\lbrace h_{\Lambda}^{[-1/4]}h_{\Lambda}^{[1/2]}h_{\Lambda}^{[-1/4]}, h_{\Lambda}^{[1/4]}h_{\Lambda}^{[-1/2]}h_{\Lambda}^{[1/4]}\right\rbrace$ and $\lambda_\mathrm{max}(M)$ denotes the largest eigenvalue of $M$.
\end{lemma}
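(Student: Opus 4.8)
The plan is to feed the entropy formula~\eqref{eq:bipartite_entanglement} into the pointwise lower bound $f(x)\geq\log(x)$ from Lemma~\ref{lem:function}\ref{it:lower_bound}, and then to replace the squared symplectic eigenvalues of $\overline{\Gamma}_{\Lambda_0}$ by the eigenvalues of the matrices $M$ by means of the comparison in Lemma~\ref{lem:eigenvalues_lower_bound}. Concretely, Proposition~\ref{prop:entropy_formula} gives $S(\overline{\rho}_\Lambda;\Lambda_0)=\sum_{\gamma} f(\gamma)$ over $\gamma\in\sigma_\mathrm{symp}(\overline{\Gamma}_{\Lambda_0})$, and Lemma~\ref{corollary:symplectic} applied to the block-diagonal form~\eqref{eq:cov_matrix_expl_real_red} identifies each $\gamma$ as the positive square root of an eigenvalue of $(h_\Lambda^{[-1/2]})^{1/2} h_\Lambda^{[1/2]} (h_\Lambda^{[-1/2]})^{1/2}$, equivalently of the similar matrix $(h_\Lambda^{[1/2]})^{1/2} h_\Lambda^{[-1/2]} (h_\Lambda^{[1/2]})^{1/2}$. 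Writing $\gamma_j^2=\lambda_j(\cdot)$ for these eigenvalues in increasing order, I would record the termwise estimate $f(\gamma_j)\geq\log(\gamma_j)=\tfrac12\log(\gamma_j^2)$.

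Next I would invoke Lemma~\ref{lem:eigenvalues_lower_bound} with $B=\overline{h}_\Lambda^{(q)}$ and $\iota_k=\iota_{\Lambda_0}$, so that $B^{[\alpha]}=h_\Lambda^{[\alpha]}$. For the first choice $M=h_\Lambda^{[-1/4]} h_\Lambda^{[1/2]} h_\Lambda^{[-1/4]}$ I take $\alpha=-1/2$ and $A=h_\Lambda^{[1/2]}$, which is positive definite as a compression of $(\overline{h}_\Lambda^{(q)})^{1/2}>0$; then $B^{[\alpha/2]}=h_\Lambda^{[-1/4]}$ and $B^{[\alpha]}=h_\Lambda^{[-1/2]}$, so \eqref{eq:eigenvalue_low_bound} reads $\lambda_j(M)\leq\lambda_j\bigl((h_\Lambda^{[-1/2]})^{1/2} h_\Lambda^{[1/2]} (h_\Lambda^{[-1/2]})^{1/2}\bigr)=\gamma_j^2$. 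For the second choice $M=h_\Lambda^{[1/4]} h_\Lambda^{[-1/2]} h_\Lambda^{[1/4]}$ the symmetric selection $\alpha=1/2$, $A=h_\Lambda^{[-1/2]}$ gives $\lambda_j(M)\leq\lambda_j\bigl((h_\Lambda^{[1/2]})^{1/2} h_\Lambda^{[-1/2]} (h_\Lambda^{[1/2]})^{1/2}\bigr)=\gamma_j^2$, where I use that the two similar forms share the same spectrum. In either case $\gamma_j^2\geq\lambda_j(M)>0$ for all $j=1,\dots,|\Lambda_0|$.

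From here the two bounds follow by elementary manipulation. For \eqref{eq:lower_bound_formula1} I would sum $f(\gamma_j)\geq\tfrac12\log(\gamma_j^2)\geq\tfrac12\log(\lambda_j(M))$ over $j$ and convert the sum of logarithms into $\tfrac12\log\det(M)$ via $\det M=\prod_j\lambda_j(M)$; retaining the better of the two admissible $M$ produces the maximum. For \eqref{eq:lower_bound_formula2} I would instead discard all but the top eigenvalue: since $f$ is monotone increasing and nonnegative on $[1,\infty)$ (Lemma~\ref{lem:function}\ref{it:increasing} together with $f(1)=0$ and $\gamma_j\geq1$), one has $S(\overline{\rho}_\Lambda;\Lambda_0)\geq f(\gamma_\mathrm{max})\geq\tfrac12\log(\gamma_\mathrm{max}^2)\geq\tfrac12\log(\lambda_\mathrm{max}(M))$, again maximizing over $M$.

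I expect the only real subtlety to be the bookkeeping: matching the exponent $\alpha$ and the roles of $A$ and $B$ in Lemma~\ref{lem:eigenvalues_lower_bound} to the precise triple products defining each $M$, and verifying positive definiteness of the chosen $A$ in both instances. Beyond the monotonicity of $f$ and its logarithmic lower bound---both already supplied by Lemma~\ref{lem:function}---no analytic difficulty should arise.
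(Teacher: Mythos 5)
Your argument is correct and follows essentially the same route as the paper's proof: the entropy formula~\eqref{eq:bipartite_entanglement} combined with $f(x)\geq\log(x)$, the identification of the symplectic eigenvalues of $\overline{\Gamma}_{\Lambda_0}$ via Lemma~\ref{corollary:symplectic}, the termwise eigenvalue comparison of Lemma~\ref{lem:eigenvalues_lower_bound}, and finally $\sum_j\log\lambda_j=\log\det$ (resp.\ discarding all but the largest, nonnegative term). Your explicit bookkeeping of the choices $\alpha=\mp 1/2$ and $A=h_\Lambda^{[\pm 1/2]}$ is exactly the instantiation the paper leaves implicit.
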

\begin{proof}
	The explicit formula \eqref{eq:bipartite_entanglement} and the lower bound from Lemma~\ref{lem:function} \ref{it:lower_bound} imply
	\begin{align}\label{eq:lower:bound_0}
		S(\overline{\rho}_\Lambda;\Lambda_0)
			\geq \sum\limits_{\gamma\in\sigma_\mathrm{symp}(\overline{\Gamma}_{\Lambda_0})} \log(\gamma)
			= \frac{1}{2}\sum\limits_{\gamma\in\sigma(N)}\log(\gamma)
			\geq \max\limits_{\gamma\in\sigma(N)} \frac{1}{2}\log(\gamma)
	\end{align}
	with
	\begin{align*}
		N \in\left\lbrace \left(h_{\Lambda}^{[-1/2]}\right)^{1/2} h_{\Lambda}^{[1/2]} 
			\left(h_{\Lambda}^{[-1/2]}\right)^{1/2},\left(h_{\Lambda}^{[1/2]}\right)^{1/2} h_{\Lambda}^{[-1/2]}
			 \left(h_{\Lambda}^{[1/2]}\right)^{1/2}\right\rbrace.
	\end{align*}
	Here, $\sigma(N)$ stands for the spectrum of $N$ and all sums over $\sigma(N)$ are meant with multiplicities. The equality in \eqref{eq:lower:bound_0} uses the characterization of the symplectic spectrum of $\overline{\Gamma}_{\Lambda_0}$ given in Lemma~\ref{corollary:symplectic}; while the last inequality is a consequence of the Heisenberg uncertainty relations in Proposition~\ref{prop:williamson}, implying $\sigma(N)\subset [1,\infty)$.	We complete the proof by relating the eigenvalues of $M$ and $N$ via Lemma~\ref{lem:eigenvalues_lower_bound} and using $\sum\limits_{\gamma\in\sigma(M)}\log(\gamma) = \log(\det(M))$.
\end{proof}

%%%%%%%%%%%%%%%%%%%%%%%%%%%%%%%%%%%%%%%%%%%%%%%%%%%%%%%%%%%%%%%%%%%%%%%%%%%%%%%%%%%%
\subsection{Proof of Theorem~\ref{theorem:entropy_grow} for $\mathcal{V}=\Z$}
\label{subsec:proof_Z}
%%%%%%%%%%%%%%%%%%%%%%%%%%%%%%%%%%%%%%%%%%%%%%%%%%%%%%%%%%%%%%%%%%%%%%%%%%%%%%%%%%%%

Either for $\mathcal{V}=\Z$ or for $\mathcal{V}=\N$, a first step towards a proof of Theorem~\ref{theorem:entropy_grow} consists in noticing that for suitable $\Lambda_0\subset\Lambda$ and large enough $\Lambda$ some of the matrices $h_{\Lambda}^{[\alpha]}$ take, up to small corrections, a simple form.
\begin{lemma}\label{lem:hamilton_estimate_Z}
	For any finite and connected $\Lambda_0\subset\Lambda\subset\Z$ let $h_{\Lambda}^{[\alpha]}$ be the matrix~\eqref{eq:shorthand_hamiltonian}. Let moreover for any $\alpha > -1/2$ the sequence $h_\Z^{[\alpha]}$ be defined by its entries
	\begin{align}\label{eq:hamiltonian_Z}
		\left(h_\Z^{[\alpha]}\right)_{jk}
			= 4^\alpha\int_0^1\sin^{2\alpha}\left(\frac{\pi x}{2}\right)\cos(|j-k|\pi x)\,\mathrm{d}x,
			\qquad j,k\in\Z.
	\end{align}
	For $\Lambda_0 = [-n,n]\cap\Z$ and $\Lambda = [-m,m]\cap\Z$ with $n,m\in\N$, $n\leq m$, we then have
	\begin{align}
		\left(h_{\Lambda}^{[\alpha]}\right)_{jk} &=\, \left(h_\Z^{[\alpha]}\right)_{jk}
			+ \mathcal{O}\left(\frac{|\Lambda_0|}{|\Lambda|}\right),&&\alpha \geq 0\label{eq:hamilton_Z_1/2}\\
		\left(h_{\Lambda}^{[\alpha]}\right)_{jk} &=\, \left(h_\Z^{[\alpha]}\right)_{jk}
			+ \mathcal{O}\left(\frac{|\Lambda|^{-2\alpha}+|\Lambda_0|^{1-2\alpha}}{|\Lambda|}\right),
			&&\alpha\in(-1/2,0)\label{eq:hamilton_Z_-1/4}
	\end{align}
	with error terms uniform in $j,k\in\lbrace 1,\ldots,|\Lambda_0|\rbrace$.
\end{lemma}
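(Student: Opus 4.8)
The plan is to compute the left-hand side exactly from the spectral decomposition \eqref{eq:orth_spec_decomp} and then recognise the outcome as a Riemann sum for the integral defining $h_\Z^{[\alpha]}$ in \eqref{eq:hamiltonian_Z}. Set $N=|\Lambda|+1$ and relabel the sites of the centred intervals $\Lambda_0=[-n,n]\cap\Z$ and $\Lambda=[-m,m]\cap\Z$ by $\{1,\dots,|\Lambda_0|\}$ and $\{1,\dots,N-1\}$; a site with $\Lambda_0$-index $j$ then carries the $\Lambda$-index $j+(m-n)$. Inserting $(\overline{h}_\Lambda^{(q)})^\alpha=OD^\alpha O^T$ into \eqref{eq:shorthand_hamiltonian} and using $2\sin A\sin B=\cos(A-B)-\cos(A+B)$, I would obtain
\begin{align*}
	\left(h_\Lambda^{[\alpha]}\right)_{jk}
		= \frac1N\sum_{l=1}^{N-1}
			\left[\cos\!\left(\tfrac{(j-k)l\pi}{N}\right)-(-1)^l\cos\!\left(\tfrac{r\,l\pi}{N}\right)\right]
			\left[4\sin^2\!\left(\tfrac{l\pi}{2N}\right)\right]^{\alpha},
\end{align*}
where $r=2(n+1)-j-k$; the factor $(-1)^l$ arises because the reflected index $j+k+2(m-n)$ equals $N-r$. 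Both $|j-k|$ and $|r|$ are bounded by $2n=\mathcal{O}(|\Lambda_0|)$, and this uniform control is what eventually makes the error terms uniform in $j,k$.

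The first (Toeplitz) summand is a Riemann sum of mesh $1/N$ for the integrand of $(h_\Z^{[\alpha]})_{jk}$, since $\cos((j-k)\pi x)=\cos(|j-k|\pi x)$. I would bound its discrepancy from the integral by splitting $[0,1]$ into a neighbourhood $[0,x_\ast]$ of the origin with $x_\ast\sim 1/|\Lambda_0|$ and the bulk $[x_\ast,1]$. On the bulk the integrand is smooth, so a standard estimate bounds the Riemann-sum error by $|\Lambda|^{-1}$ times the supremum of the derivative of the integrand; differentiating $\cos(|j-k|\pi x)$ brings down the factor $|j-k|\leq 2n$. For $\alpha\geq 0$ the weight $\sin^{2\alpha}(\tfrac{\pi x}{2})$ and its relevant derivative are bounded, so this is $\mathcal{O}(|\Lambda_0|/|\Lambda|)$. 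For $\alpha\in(-1/2,0)$ the largest derivative occurs near the cutoff and scales like $x_\ast^{2\alpha}\cdot|j-k|\sim|\Lambda_0|^{1-2\alpha}$, producing the $|\Lambda_0|^{1-2\alpha}/|\Lambda|$ contribution. Near the origin I would compare the first grid terms directly with $\int_0^{x_\ast}(\pi x)^{2\alpha}\,\mathrm{d}x$: for $\alpha\geq 0$ the integrand vanishes at $0$ and this region is harmless, whereas for $\alpha\in(-1/2,0)$ the integrable singularity (finite precisely because $2\alpha>-1$) leaves a discrepancy dominated by the first term, of order $N^{-2\alpha-1}=|\Lambda|^{-2\alpha}/|\Lambda|$.

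The second (image) summand I would control through its $(-1)^l$ alternation. Summation by parts against the slowly varying weight $l\mapsto\cos(rl\pi/N)[4\sin^2(l\pi/2N)]^\alpha$ bounds the alternating sum by $(\text{supremum}+\text{total variation})/|\Lambda|$. Away from the origin the oscillation of $\cos(rl\pi/N)$ contributes total variation $\mathcal{O}(|\Lambda_0|)$, hence $\mathcal{O}(|\Lambda_0|/|\Lambda|)$, while near the origin for $\alpha\in(-1/2,0)$ the alternating sum of the singular weight is again dominated by its first term and yields $\mathcal{O}(|\Lambda|^{-2\alpha}/|\Lambda|)$. In every case these contributions are dominated by, or equal to, those already produced by the Toeplitz part, so that \eqref{eq:hamilton_Z_1/2} and \eqref{eq:hamilton_Z_-1/4} follow.

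I expect the genuine obstacle to be the non-smoothness of $x\mapsto\sin^{2\alpha}(\tfrac{\pi x}{2})$ at the origin for $\alpha<1/2$: a cusp when $0<\alpha<1/2$ and an integrable singularity when $\alpha\in(-1/2,0)$. This is exactly why the two error regimes \eqref{eq:hamilton_Z_1/2}--\eqref{eq:hamilton_Z_-1/4} appear, why the threshold sits at $\alpha=0$, and why the hypothesis $\alpha>-1/2$ (equivalently $2\alpha>-1$) is required for the defining integral to converge at all; the cutoff scale $x_\ast\sim 1/|\Lambda_0|$ must be chosen to balance the near-origin and bulk errors. Since all constants depend only on $\alpha$ and the index dependence enters solely through $|j-k|,|r|\leq 2n$, the resulting bounds are uniform in $j,k\in\{1,\dots,|\Lambda_0|\}$, as claimed.
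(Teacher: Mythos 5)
Your decomposition is exactly the one the paper uses: the spectral formula \eqref{eq:orth_spec_decomp} together with the product-to-sum identity splits $(h_\Lambda^{[\alpha]})_{jk}$ into a Toeplitz Riemann sum for \eqref{eq:hamiltonian_Z} and an image term carrying the factor $(-1)^l$, and your exact expression (with $r=2(n+1)-j-k$, so $|r|\le 2n$) agrees with the paper's after the identity $\cos(a+l\pi)=(-1)^l\cos(a)$. Your handling of the negative-$\alpha$ singularity (monotone telescoping near the origin giving $|\Lambda|^{-2\alpha}/|\Lambda|$, a derivative bound of order $|\Lambda_0|^{1-2\alpha}$ on the bulk) reproduces the paper's estimates, and your Abel-summation bound on the image term is a clean alternative to the paper's device of splitting that sum into its even and odd sub-Riemann sums, each converging to the same integral.

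The one step that does not hold up as written is the near-origin region for $\alpha\in(0,1/2)$. The assertion that ``the integrand vanishes at $0$ and this region is harmless'' only yields, via the crude bound (measure of $[0,x_*]$ times the supremum of the integrand there), a discrepancy of order $x_*^{1+2\alpha}\sim|\Lambda_0|^{-1-2\alpha}$, which is \emph{not} $\mathcal{O}(|\Lambda_0|/|\Lambda|)$ once $|\Lambda|\gg|\Lambda_0|^{2+2\alpha}$ --- and the regime $|\Lambda|\sim|\Lambda_0|^{3+\epsilon}$ is precisely the one used later in the proof of Theorem~\ref{theorem:entropy_grow}. The repair is the same mechanism you already invoke for $\alpha<0$: on $[0,x_*]$ the integrand has total variation $\mathcal{O}\bigl(x_*^{2\alpha}+|j-k|\,x_*\bigr)=\mathcal{O}(1)$, since the singular part $x^{2\alpha-1}$ of its derivative is integrable, so the Riemann-sum discrepancy there is $\mathcal{O}(1/|\Lambda|)$. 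Equivalently, one can drop the cutoff altogether for $\alpha\ge 0$, as the paper does: the integrand is of bounded variation on all of $(0,1]$ with total variation $\mathcal{O}(|\Lambda_0|)$, which gives \eqref{eq:hamilton_Z_1/2} in one stroke. With that substitution your argument is complete.
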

\begin{proof}
	By \eqref{eq:shorthand_hamiltonian} the matrix $h_{\Lambda}^{[\alpha]}$ is the truncation of the matrix $(h_\Lambda^{(q)})^\alpha$ according to the inclusion $ \Lambda_0 \subset \Lambda $. We index the matrix elements of $h_{\Lambda}^{[\alpha]}$ and $(h_\Lambda^{(q)})^\alpha$ by $j,k\in\lbrace 1,\ldots,|\Lambda_0|\rbrace$ and $\bar{j},\bar{k}\in\lbrace 1,\ldots,|\Lambda|\rbrace$, respectively. In particular, for the special choices $\Lambda_0 = [-n,n]\cap\Z$ and $\Lambda = [-m,m]\cap\Z$ with $n,m\in\N$, $n\leq m$, we have the relation
	\begin{equation*}
		\left(h_{\Lambda}^{[\alpha]}\right)_{jk} = \left((h_\Lambda^{(q)})^\alpha\right)_{\bar{j}\bar{k}}
			\quad\textnormal{with}\quad
			\lbrace\bar{j},\bar{k}\rbrace = \lbrace j,k\rbrace + \frac{1}{2}(|\Lambda|-|\Lambda_0|)
	\end{equation*}
	for all $j,k\in\lbrace 1,\ldots,|\Lambda_0|\rbrace$. By spectral calculus and the explicit diagonalization~\eqref{eq:orth_spec_decomp}, the matrix elements of $h_{\Lambda}^{[\alpha]}$ may now be written as
	\begin{align}
		\left(h_{\Lambda}^{[\alpha]}\right)_{jk}
			=&\,\frac{4^{\alpha}}{|\Lambda|+1} \sum\limits_{l=1}^{|\Lambda|}
				\sin^{2\alpha}\left(\frac{l\pi}{2(|\Lambda|+1)}\right)\cos\left(\frac{|j-k|l\pi}{|\Lambda|+1}\right)
				\label{eq:hamiltonian_Z_explicit_1}\\
			&\,-\frac{4^{\alpha}}{|\Lambda|+1} \sum\limits_{l=1}^{|\Lambda|}
				\sin^{2\alpha}\left(\frac{l\pi}{2(|\Lambda|+1)}\right)
				\cos\left(\frac{(j+k+|\Lambda|-|\Lambda_0|)l\pi}{|\Lambda|+1}\right),\label{eq:hamiltonian_Z_explicit_2}
	\end{align}
	where we used $2\sin(a)\sin(b)=\cos(a-b)-\cos(a+b)$. For $\alpha > -1/2$ we show that (i) the Riemann sum in \eqref{eq:hamiltonian_Z_explicit_1} converges to the integral \eqref{eq:hamiltonian_Z} while (ii) the sum \eqref{eq:hamiltonian_Z_explicit_2} vanishes, both with the rates given in \eqref{eq:hamilton_Z_1/2} and \eqref{eq:hamilton_Z_-1/4}.
	\medskip\\
	\noindent(i) The function $(0,1]\ni x\mapsto f(x):=\sin^{2\alpha}\left(\tfrac{\pi x}{2}\right)\cos(|j-k|\pi x)$ is of bounded variation for all $\alpha\geq 0$ and all $j,k\in\lbrace 1,\ldots,|\Lambda_0|\rbrace$. Integrating the modulus of the derivative, its total variation is at most of order $\mathcal{O}(|\Lambda_0|)$, whence the convergence to the integral \eqref{eq:hamiltonian_Z} occurs with rate $\mathcal{O}(|\Lambda_0|/|\Lambda|)$.
	 
	 For $\alpha\in(-1/2,0)$ the function $f$ is positive and strictly monotone decreasing either on all $(0,1]$ if $j=k$, or up to its first zero at $x=(2|j-k|)^{-1}$ if $j\neq k$. Setting $\beta=1$ for $j=k$ and $\beta= (2|j-k|)^{-1}$ for $j\neq k$, the rate of convergence of the Riemann sum on $(0,\beta]$ is at most of order $\mathcal{O}(|\Lambda|^{-1-2\alpha})$. In fact, setting $n=|\Lambda|+1$ and using the bound $f(x)\leq x^{2\alpha}$ for any $x\in(0,\beta]$, we obtain
	 \begin{align*}
	 	&\left\vert \frac{1}{n}\sum\limits_{k=1}^{\lfloor n\beta\rfloor}f\left(\frac{k}{n}\right)
	 		-\int_0^{\beta}f(x)\,\mathrm{d}x\right\vert
	 		= \sum\limits_{k=1}^{\lfloor n\beta\rfloor}
	 			\int_{\tfrac{k-1}{n}}^{\tfrac{k}{n}} \left(f(x)-f\left(\frac{k}{n}\right)\right)\mathrm{d}x
	 			+ \int_{\lfloor n\beta\rfloor/n}^\beta \mkern-10mu f(x) dx\\
	 		&\leq\, \int_0^{\tfrac{1}{n}}\left(f(x)-f\left(\frac{1}{n}\right)\right)\mathrm{d}x
	 			+ \frac{1}{n}\sum\limits_{k=2}^{\lfloor n\beta\rfloor}\left(
	 			f\left(\frac{k-1}{n}\right)-f\left(\frac{k}{n}\right)\right)
	 			+\frac{1}{n} f\left(\frac{\lfloor n\beta\rfloor}{n}\right) \\
	 		&\leq \int_0^{\tfrac{1}{n}} x^{2\alpha}\mathrm{d}x
	 		= \mathcal{O}\left(|\Lambda|^{-1-2\alpha}\right).
	 \end{align*} 
	 Here, $\lfloor n\beta\rfloor$ denotes the integer part of $n\beta$.
	 
	 On $(\beta,1]$ the function $f$ is Lipschitz continuous. Bounding its derivative, the associated constant is of order $\mathcal{O}(|\Lambda_0|^{1-2\alpha})$, whence the convergence rate of the Riemann sum on this interval is at most of order $\mathcal{O}(|\Lambda_0|^{1-2\alpha}/|\Lambda|)$.
	 \medskip\\
	 \noindent(ii) Defining the function $(0,1]\ni x\mapsto g(x) = \sin^{2\alpha}\left(\tfrac{\pi x}{2}\right)\cos((j+k-1-|\Lambda_0|)\pi x)$ and using the identity $\cos(a+l\pi) = (-1)^l\cos(a)$ for any $l\in\Z$, the Riemann sum in \eqref{eq:hamiltonian_Z_explicit_2} reads
	 \begin{align}
	 	\frac{4^{\alpha}}{|\Lambda|+1} &\sum\limits_{l=1}^{|\Lambda|}(-1)^l g\left(\frac{l}{|\Lambda|+1}\right)\nonumber\\
	 		&=\, \frac{4^\alpha}{|\Lambda|+1}\sum\limits_{m=1}^{\tfrac{|\Lambda|-1}{2}}
	 			g\left(\frac{2m}{|\Lambda|+1}\right)
	 			- \frac{4^\alpha}{|\Lambda|+1}\sum\limits_{m=0}^{\tfrac{|\Lambda|-1}{2}}
	 			g\left(\frac{2m+1}{|\Lambda|+1}\right).\label{eq:double_riemann_sum}
	 \end{align}
	 Each of the sums in \eqref{eq:double_riemann_sum} converges as $|\Lambda|\to\infty$ to the integral $2^{2\alpha - 1}\int_0^1 g(x)\mathrm{d}x$. From item (i), the convergence rates are those given in \eqref{eq:hamilton_Z_1/2} and \eqref{eq:hamilton_Z_-1/4}, depending on whether $\alpha\geq 0$ or $\alpha\in(-1/2,0)$. 
\end{proof}
Since $(h_\Z^{[\alpha]})_{jk}$ only depends on $|j-k|$, any restriction of the index set $j,k\in\Z$ to some finite, connected subset $\Lambda\subset\Z$ defines a symmetric \emph{Toeplitz matrix}. Determinants of increasingly large Toeplitz matrices satisfy well-known relations, as we briefly detail. Let $\widehat{\phi}:\Z\to\R$ be an even function, so that for any $n\in\N$
\begin{align*}
	T_n(\phi) = \left\lbrace \widehat{\phi}(j-k)\right\rbrace_{j,k=1}^n
\end{align*}
is a symmetric $n\times n$ Toeplitz matrix. The sequence $\lbrace\widehat{\phi}(k)\rbrace_{k\in\Z}$ gives the Fourier coefficients of a function $\phi:[-\pi,\pi]\to\R$, referred to as the \emph{symbol} of the Toeplitz matrices $\lbrace T_n(\phi)\rbrace_{n\in\N}$, where we use the prefactor convention:
\begin{align}\label{eq:Toeplitz_symbol}
	\widehat{\phi}(k) = \frac{1}{2\pi}\int_{-\pi}^{\pi} \phi(x) e^{-ikx}\,\mathrm{d}x,\qquad
		\phi(x) = \sum_{k\in\Z}\widehat{\phi}(k) e^{ikx}.
\end{align}
The following result on the asymptotic behavior of $\det(T_n(\phi))$ as $n\to\infty$ is known as \emph{Szeg\H{o}'s second limit theorem} or \emph{strong Szeg\H{o} limit theorem}, which we state in a version due to Ibragimov \cite{Ibragimov1968}.
\begin{proposition}[Szeg\H{o}'s second limit theorem]\label{theorem:szego}
	Let $\lbrace T_n(\phi)\rbrace_{n\in\N}$ be a sequence of Toeplitz matrices with symbol $\phi$ and set
	\begin{align*}
		G(\phi) = \exp\left((\widehat{\log\circ\phi})(0)\right).
	\end{align*}
	If $\phi\in L^1([-\pi,\pi])$ and $\log\circ\phi\in L^1([-\pi,\pi])$, we have
	\begin{align*}
		\lim\limits_{n\to\infty} \frac{\det\left(T_n(\phi)\right)}{G(\phi)^n}
			= \exp\left(\sum\limits_{k=1}^\infty k \left\vert(\widehat{\log\circ\phi})(k)\right\vert^2\right).
	\end{align*}
	In particular, if the right-hand side diverges, then so does the left-hand side.
\end{proposition}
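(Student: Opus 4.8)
The plan is to establish this classical result by the operator-theoretic route, reducing the asymptotics of the Toeplitz determinants to the trace-class behaviour of Hankel operators built from the symbol. Throughout I write $\psi:=\log\circ\phi$, so that $G(\phi)=\exp(\widehat{\psi}(0))$ and the asserted limit is $E(\phi):=\exp\bigl(\sum_{k\geq 1}k\,|\widehat{\psi}(k)|^2\bigr)\in(0,\infty]$. The first step is a harmless normalisation: replacing $\phi$ by $\phi/G(\phi)$ multiplies $\det T_n(\phi)$ by $G(\phi)^{-n}$ and, since subtracting a constant from $\psi$ only alters its zeroth Fourier mode, leaves every $\widehat{\psi}(k)$ with $k\neq 0$—and hence $E(\phi)$—untouched. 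Without loss of generality I may therefore assume $\widehat{\psi}(0)=0$, i.e.\ $G(\phi)=1$, so that the claim reduces to $\det T_n(\phi)\to E(\phi)$.

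The core of the argument is an exact finite-$n$ identity, which I would first prove under the additional regularity $\psi\in L^\infty$ and $\sum_k|k|\,|\widehat{\psi}(k)|^2<\infty$. Splitting $\psi=\psi_++\psi_-$ into its analytic and anti-analytic parts (carrying the strictly positive, resp.\ negative, Fourier modes) yields the Wiener--Hopf factorisation $\phi=\phi_-\phi_+$ with $\phi_\pm=\exp(\psi_\pm)$, where $\phi_+$ has only nonnegative and $\phi_-$ only nonpositive Fourier modes, each with constant term $1$. Viewing $T_n(\phi)=P_nT(\phi)P_n$ for the Toeplitz operator $T(\phi)$ on $\ell^2(\N)$ and $P_n$ the projection onto the first $n$ coordinates, this factorisation makes $T(\phi_-)$ lower- and $T(\phi_+)$ upper-triangular with unit diagonal. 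Exploiting this triangular structure one arrives at the Borodin--Okounkov identity (equivalently Geronimo--Case, or Widom's operator form)
\begin{align*}
	\det T_n(\phi)=E(\phi)\,\det\!\bigl(1-Q_n\,H(b)\,H(\widetilde{c})\,Q_n\bigr),\qquad b=\phi_-\phi_+^{-1},\quad c=\phi_+\phi_-^{-1},
\end{align*}
where $Q_n=1-P_n$ projects onto the tail, $H(\cdot)$ denotes a Hankel operator, $\widetilde{\cdot}$ the reflected symbol, and $\det(\cdot)$ is a Fredholm determinant.

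Given this identity, the smooth case is quickly finished. Under the stated regularity each Hankel operator is Hilbert--Schmidt, with Hilbert--Schmidt norm controlled precisely by $\sum_k|k|\,|\widehat{\psi}(k)|^2$, so the product $H(b)H(\widetilde{c})$ is trace class; since $Q_n\to 0$ strongly, the truncated operator $Q_nH(b)H(\widetilde{c})Q_n$ tends to $0$ in trace norm, whence its Fredholm determinant converges to $1$ and $\det T_n(\phi)\to E(\phi)$. I would also record here the monotonicity $\phi_1\leq\phi_2\Rightarrow T_n(\phi_1)\leq T_n(\phi_2)$ in the form sense (immediate from $\langle v,T_n(\phi)v\rangle=\tfrac1{2\pi}\int\phi\,|\sum_jv_je^{ijx}|^2\,\mathrm{d}x$), which will be the main comparison tool below.

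The principal obstacle—and exactly the content of Ibragimov's sharpening of Szeg\H{o}'s original theorem—is to descend from this smooth class to the stated minimal hypotheses $\phi,\log\phi\in L^1$, where $\phi$ need be neither bounded nor bounded away from zero and where $E(\phi)$ may be infinite. Here I would argue by approximation: choose smooth positive symbols $\phi_m$ with $\log\phi_m\to\psi$ in the relevant norm, apply the convergent case to each $\phi_m$, and then interchange the limits $n\to\infty$ and $m\to\infty$. The two delicate points are (a) securing enough uniform continuity of $\psi\mapsto\det T_n(\phi)$ and of $\psi\mapsto E(\phi)$, using positivity together with the monotonicity above, to justify the interchange without any boundedness assumption; and (b) the divergent case: when $\sum_k k|\widehat{\psi}(k)|^2=\infty$ there is no finite limit to approximate, so I would instead apply the convergent case to finite Fourier truncations of $\psi$ and invoke a lower-semicontinuity/monotone-exhaustion argument to obtain $\liminf_n\det T_n(\phi)\geq E(\phi_N)$ for each truncation level $N$, then let $N\to\infty$ to conclude that the determinants diverge together with $E(\phi)$. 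I expect this passage to minimal regularity, and in particular the semicontinuity step in the divergent case, to be the genuinely hard part of the proof.
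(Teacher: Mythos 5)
The paper does not actually prove this proposition: it is quoted as a known result, in Ibragimov's form, with a citation to \cite{Ibragimov1968}. So there is no in-paper argument to compare against; what can be assessed is whether your sketch would stand on its own. The smooth-case half of your plan is sound and is indeed the modern route: normalise $G(\phi)=1$, factor $\phi=\phi_-\phi_+$ via $\psi=\psi_-+\psi_+$, invoke the Borodin--Okounkov/Widom identity, note that $H(b)H(\widetilde{c})$ is trace class when $\sum_k|k|\,|\widehat{\psi}(k)|^2<\infty$ and $\psi\in L^\infty$, and use that $Q_nKQ_n\to 0$ in trace norm for trace-class $K$. All of that is correct (granting that the identity itself still has to be derived, which is a nontrivial but standard computation).

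The genuine gap is in the descent to the hypotheses $\phi,\log\phi\in L^1$, and in particular in the divergent case---which is the only case the paper actually uses ($\phi_\alpha(x)=|2\sin(x/2)|^{2\alpha}$ has $\sum_k k|\widehat{\log\phi_\alpha}(k)|^2=\sum_k\alpha^2/k=\infty$). Your only comparison tool is the form monotonicity $\phi_1\leq\phi_2\Rightarrow T_n(\phi_1)\leq T_n(\phi_2)$, but the approximants you propose to compare with are Fourier truncations $\psi_N$ of $\psi=\log\phi$, and truncating a Fourier series does \emph{not} preserve pointwise order: in general neither $e^{\psi_N}\leq\phi$ nor the reverse holds, so the claimed bound $\liminf_n\det T_n(\phi)\geq E(\phi_N)$ does not follow from what you have recorded. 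The standard repair (and the heart of Ibragimov's and Simon's treatments) is a monotonicity in $n$ rather than in the symbol: by the extremal characterisation $D_n(\phi)/D_{n-1}(\phi)=\min_p\frac{1}{2\pi}\int|p|^2\phi\,\mathrm{d}x$ over monic polynomials of degree $n$, the ratio $D_n/D_{n-1}$ is nonincreasing and bounded below by $G(\phi)$ (Szeg\H{o}'s first limit theorem via Jensen's inequality), so $D_{n-1}(\phi)G(\phi)^{-n}$ is nondecreasing in $n$ and the limit exists a priori in $(0,\infty]$; one then identifies it with $E(\phi)$ by a lower-semicontinuity argument applied to genuinely order-compatible or weak-$*$ approximations of $\phi$. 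Without this (or an equivalent) ingredient, your argument does not establish the divergence statement that the paper relies on.
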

\begin{proof}[Proof of Theorem~\ref{theorem:entropy_grow} for $\mathcal{V}=\Z$]
	By Lemma~\ref{lem:lower_bound_det}, the entanglement entropy of the ground state of $\overline{H}_\Lambda$ with respect to the bipartition $\Lambda = \Lambda_0\cup\left(\Lambda\setminus\Lambda_0\right)$ has the lower bound
	\begin{align*}
		S\left(\overline{\rho}_{\Lambda};\Lambda_0\right)
			\geq \frac{1}{2}\log\det\left(h_{\Lambda}^{[1/2]}\right)
				+ \log\det\left(h_{\Lambda}^{[-1/4]}\right).
	\end{align*}
	We show that for suitable choices of $\Lambda_0\subset\Lambda$ both terms on the right-hand side diverge as $|\Lambda|, |\Lambda_0|\to\infty$. To make use of the estimates from Lemma~\ref{lem:hamilton_estimate_Z} the following perturbative argument applies. Let $A,B>0$ be two $|\Lambda_0|\times|\Lambda_0|$ matrices, then
	\begin{align*}
		\log\det(A+B) &=\, \log\det(A) + \trace\log(\1+A^{-1/2}BA^{-1/2})\\
			&\leq\, \log\det(A) +|\Lambda_0|\Vert A^{-1}\Vert \Vert  B\Vert,
	\end{align*}
	where we used $\log\det(C)=\trace\log(C)$ for any (finite) matrix $C>0$ and $\log(1+x)\leq x$ for all $x\in(-1,\infty)$.
	
	Let $h_\Z^{[\alpha]}$ stand in this proof for the $|\Lambda_0|\times|\Lambda_0|$ matrix from \eqref{eq:hamilton_Z_1/2} or \eqref{eq:hamilton_Z_-1/4}. Setting $B_1 = h_\Z^{[1/2]} - h_{\Lambda}^{[1/2]}$ and $B_2 = h_\Z^{[-1/4]} - h_{\Lambda}^{[-1/4]}$ we obtain
	\begin{align}
		S\left(\overline{\rho}_{\Lambda};\Lambda_0\right)
			\geq &\,\frac{1}{2}\log\det\left(h_\Z^{[1/2]}\right) + \log\det\left(h_\Z^{[-1/4]}\right)
				\nonumber\\
				&-\,|\Lambda_0|\left(\Vert (h_{\Lambda}^{[1/2]})^{-1}\Vert \Vert B_1\Vert
				+\Vert (h_{\Lambda}^{[-1/4]})^{-1}\Vert \Vert B_2\Vert\right).\label{eq:entropy_lower_bound_error}
	\end{align}
	For any $\alpha > -1/2$ a comparison between \eqref{eq:hamiltonian_Z} and \eqref{eq:Toeplitz_symbol} shows that $\lbrace h_\Z^{[\alpha]}\rbrace_{|\Lambda_0|\in\N}$ defines a sequence of Toeplitz matrices with symbol $\phi_\alpha(x):=|2\sin(x/2)|^{2\alpha}$. An explicit computation yields
	\begin{align*}
		(\widehat{\log\circ\phi_\alpha})(k)
			= 2\alpha (\widehat{\log\circ\phi_{1/2}})(k)
			=  \begin{cases} 0, & \textnormal{for }k = 0,\\ -\frac{\alpha}{k}, & \textnormal{for }k\in\N. \end{cases}
	\end{align*}
	Applying Proposition~\ref{theorem:szego} with $G(\phi_\alpha) = 1$ and, since for any $\alpha\in(-1/2,\infty)\setminus\lbrace 0\rbrace$
	\begin{align*}
		\sum\limits_{k=1}^{N} k\left\vert(\widehat{\log\circ\phi_\alpha})(k)\right\vert^2
			= \sum\limits_{k=1}^{N} \frac{|\alpha|^2}{k}\quad \xrightarrow{N\to\infty}\quad\infty,
	\end{align*}
	$\det(h_\Z^{[1/2]})$ and $\det(h_\Z^{[-1/4]})$ diverge as $n\to\infty$ for any sequence of finite and connected subsets $\lbrace\Lambda_0^{(n)}\rbrace_{n\in\N}$ of $\Z$ with $|\Lambda_0^{(n)}|\to\infty$.
	
	It remains to handle the error term~\eqref{eq:entropy_lower_bound_error}. For any $n, m_n\in\N$ let us introduce the notation $\Lambda_0^{(n)}=[-n,n]\cap\Z$ and $\Lambda^{(n)}= [-m_n,m_n]\cap\Z$. In the sequel, we consider sequences of subsets $\lbrace\Lambda_0^{(n)}\rbrace_{n\in\N},\lbrace\Lambda^{(n)}\rbrace_{n\in\N}$ with $n\leq m_n$ to be specified later and the associated entanglement entropy $S(\overline{\rho}_{\Lambda^{(n)}};\Lambda_0^{(n)})$. By Lemma~\ref{lem:hamilton_estimate_Z}, $B_1$ and $B_2$ in \eqref{eq:entropy_lower_bound_error} satisfy
	\begin{align*}
		\Vert B_1\Vert =\mathcal{O}\left(\frac{|\Lambda_0^{(n)}|^2}{|\Lambda^{(n)}|}\right)
			\qquad\textnormal{and}\qquad
			\Vert B_2\Vert = \mathcal{O}\left(\frac{|\Lambda_0^{(n)}|}{|\Lambda^{(n)}|^{1/2}}
			+ \frac{|\Lambda_0^{(n)}|^{5/2}}{|\Lambda^{(n)}|}\right).
	\end{align*}
	Since the spectrum of $\overline{h}_\Lambda^{(q)}$ is contained in $(0,4)$ for any finite, connected $\Lambda\subset\Z$, we deduce from \eqref{eq:shorthand_hamiltonian} that $\Vert(h_{\Lambda}^{[-1/4]})^{-1}\Vert \leq 1/\sqrt{2}$ for any $\Lambda_0\subset\Lambda$. As for $\Vert(h_{\Lambda^{(n)}}^{[1/2]})^{-1}\Vert$, we first observe that by \eqref{eq:hamiltonian_Z}
	\begin{align*}
		\left(h_\Z^{[1/2]}\right)_{jk} = -\frac{1}{\pi}\frac{1}{(j-k)^2 - 1/4},\qquad
			\textnormal{for all }j,k\in\lbrace 1,\ldots,|\Lambda_0^{(n)}|\rbrace.
	\end{align*}
	By Gershgorin's circle theorem (see, e.g.~\cite{Bhatia1996}), $h_\Z^{[1/2]}$ has thus the lower bound
	\begin{align*}
		\min\limits_{1\leq j\leq |\Lambda_0^{(n)}|} \left(\left(h_\Z^{[1/2]}\right)_{jj}
			- \sum\limits_{k\neq j} \left\vert \left(h_\Z^{[1/2]}\right)_{jk}\right\vert\right)
			\geq \frac{4}{\pi} - \frac{2}{\pi} \sum\limits_{l=1}^{|\Lambda_0^{(n)}|}\frac{1}{l^2-1/4}
			= \mathcal{O}\left(|\Lambda_0^{(n)}|^{-1}\right),
	\end{align*}
	whence also $h_{\Lambda^{(n)}}^{[1/2]}\geq\mathcal{O}(|\Lambda_0^{(n)}|^{-1})$, since the correction of order $\mathcal{O}(|\Lambda_0^{(n)}|^2/|\Lambda^{(n)}|)$ can be chosen arbitrarily small. We conclude that $\Vert(h_{\Lambda^{(n)}}^{[1/2]})^{-1}\Vert = \mathcal{O}(|\Lambda_0^{(n)}|)$ and thus the error term~\eqref{eq:entropy_lower_bound_error} is at most of order $\mathcal{O}(|\Lambda_0^{(n)}|/|\Lambda^{(n)}|^{1/2})+\mathcal{O}(|\Lambda_0^{(n)}|^3/|\Lambda^{(n)}|)$. This vanishes in the limit $n\to\infty$ whenever $m_n$ is chosen such that $m_n =\mathcal{O}( n^{3+\epsilon})$ for some $\epsilon>0$.
\end{proof}

%%%%%%%%%%%%%%%%%%%%%%%%%%%%%%%%%%%%%%%%%%%%%%%%%%%%%%%%%%%%%%%%%%%%%%%%%%%%%%%%%%%%
\subsection{Proof of Theorem~\ref{theorem:entropy_grow} for $\mathcal{V}=\N$}
\label{subsec:proof_N}
%%%%%%%%%%%%%%%%%%%%%%%%%%%%%%%%%%%%%%%%%%%%%%%%%%%%%%%%%%%%%%%%%%%%%%%%%%%%%%%%%%%%

Let us begin with the counterpart of Lemma~\ref{lem:hamilton_estimate_Z} for $\mathcal{V}=\N$.
\begin{lemma}\label{lem:hamilton_estimate_N}
	For any finite and connected $\Lambda_0\subset\Lambda\subset\N$ let $h_{\Lambda}^{[\alpha]}$ be the matrix~\eqref{eq:shorthand_hamiltonian}. Let moreover for $\alpha\geq -1/2$ the sequence $h_\N^{[\alpha]}$ be given by its entries
	\begin{align}\label{eq:hamiltonian_N}
		\left(h_\N^{[\alpha]}\right)_{jk}
			= 2^{1+2\alpha}\int_0^1\sin\left(\frac{\pi x}{2}\right)^{2\alpha}\sin(j\pi x)\sin(k\pi x)\,\mathrm{d}x,
			\qquad j,k\in\N.
	\end{align}
	For $\Lambda_0 = [1,n]\cap\N$ and $\Lambda = [1,m]\cap\N$ with $n,m\in\N$, $n\leq m$, we then have
	\begin{align}
		\left(h_{\Lambda}^{[\alpha]}\right)_{jk} &=\, \left(h_\N^{[\alpha]}\right)_{jk}
			+ \mathcal{O}\left(\frac{|\Lambda_0|}{|\Lambda|}\right),&&\alpha \geq 0\label{eq:hamilton_N_approx_1}\\
		\left(h_{\Lambda}^{[\alpha]}\right)_{jk} &=\, \left(h_\N^{[\alpha]}\right)_{jk}
			+ \mathcal{O}\left(\frac{|\Lambda_0|}{|\Lambda|}^{1-2\alpha}\right),&&\alpha\in[-1/2,0)\label{eq:hamilton_N_approx_2}	
	\end{align}
	with error terms uniform in $j,k\in\lbrace 1,\ldots,|\Lambda_0|\rbrace$.
\end{lemma}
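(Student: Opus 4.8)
The plan is to reduce the claim to a single quantitative Riemann-sum estimate, closely mirroring part~(i) of the proof of Lemma~\ref{lem:hamilton_estimate_Z}. First I would exploit a structural simplification that is genuinely new to the case $\mathcal{V}=\N$: since $\Lambda_0=[1,n]\cap\N$ and $\Lambda=[1,m]\cap\N$ \emph{share the same left endpoint}, the embedding $\iota_{\Lambda_0}$ in \eqref{eq:shorthand_hamiltonian} selects the top-left block with no index shift, so that $(h_\Lambda^{[\alpha]})_{jk}=((\overline h_\Lambda^{(q)})^\alpha)_{jk}$ directly for $j,k\in\{1,\dots,|\Lambda_0|\}$. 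Inserting the spectral decomposition~\eqref{eq:orth_spec_decomp} and applying spectral calculus then gives
\begin{align*}
	\left(h_\Lambda^{[\alpha]}\right)_{jk}
		= \frac{2\cdot 4^{\alpha}}{|\Lambda|+1}\sum_{l=1}^{|\Lambda|}
			\sin^{2\alpha}\!\left(\frac{l\pi}{2(|\Lambda|+1)}\right)
			\sin\!\left(\frac{jl\pi}{|\Lambda|+1}\right)\sin\!\left(\frac{kl\pi}{|\Lambda|+1}\right),
\end{align*}
which is precisely a Riemann sum of step $1/(|\Lambda|+1)$ for the integral~\eqref{eq:hamiltonian_N}. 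In contrast to the $\Z$ case, no product-to-sum manipulation is needed and the auxiliary ``folded'' sum~\eqref{eq:hamiltonian_Z_explicit_2} never appears, so the whole argument collapses to controlling this one Riemann-sum error.

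Writing $n=|\Lambda|+1$ and $F(x)=\sin^{2\alpha}(\pi x/2)\sin(j\pi x)\sin(k\pi x)$, I would then estimate $\bigl|\tfrac1n\sum_{l=1}^{n-1}F(l/n)-\int_0^1 F(x)\,\mathrm dx\bigr|$. For $\alpha\geq0$ the integrand is of bounded variation on $[0,1]$; differentiating and integrating $|F'|$, the total variation is of order $\mathcal O(|\Lambda_0|)$, the factors $j,k\leq|\Lambda_0|$ coming from the oscillating sine terms, and the standard bounded-variation bound yields the rate~\eqref{eq:hamilton_N_approx_1} uniformly in $j,k$.

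For $\alpha\in[-1/2,0)$ the weight $\sin^{2\alpha}(\pi x/2)\sim x^{2\alpha}$ is singular at the origin, and I would split $[0,1]$ at a cutoff $\beta\sim 1/\max(j,k)$ (the scale on which the fastest sine begins to oscillate), in analogy with the splitting at $\beta=(2|j-k|)^{-1}$ in Lemma~\ref{lem:hamilton_estimate_Z}. The decisive observation---and the reason the endpoint $\alpha=-1/2$ may now be \emph{included}, whereas Lemma~\ref{lem:hamilton_estimate_Z} only allows the open interval $(-1/2,0)$---is that the Dirichlet boundary condition forces $\sin(j\pi x)\sin(k\pi x)$ to vanish to second order at $x=0$, so that $F(x)=\mathcal O\bigl(jk\,x^{2+2\alpha}\bigr)$ with $2+2\alpha\geq1$; hence $F$ stays bounded (indeed vanishes) at the origin even at $\alpha=-1/2$, exactly where the cosine factor in the $\Z$ case left a genuine $x^{2\alpha}$ singularity. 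On $[0,\beta]$, where $F$ is essentially monotone, the error is controlled by $\tfrac1n\,|F(\beta)|=\mathcal O(|\Lambda_0|^{-2\alpha}/|\Lambda|)$; on $[\beta,1]$ a Lipschitz estimate with constant $\mathcal O(|\Lambda_0|^{1-2\alpha})$ (the weight contributing $\beta^{2\alpha}\sim|\Lambda_0|^{-2\alpha}$ and the oscillation a factor $\max(j,k)\sim|\Lambda_0|$) gives $\mathcal O(|\Lambda_0|^{1-2\alpha}/|\Lambda|)$, and the second contribution dominates, producing the stated rate~\eqref{eq:hamilton_N_approx_2}.

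I expect the main obstacle to be precisely the bookkeeping in the $\alpha\in[-1/2,0)$ regime: choosing the cutoff $\beta$ so as to balance the near-singularity contribution against the Lipschitz remainder, and tracking the joint dependence of the error on $|\Lambda_0|$ (through the frequencies $j,k$) and on $|\Lambda|$ (through the step size) uniformly in $j,k\in\{1,\dots,|\Lambda_0|\}$, all while the weight is singular. The crude bounded-variation bound is not sharp enough here---it overcounts by a power of $|\Lambda|$---so the monotonicity-plus-Lipschitz split, rather than a global total-variation estimate, is what must be carried out with care. Everything else is a routine adaptation of the already-established $\Z$-case estimate of Lemma~\ref{lem:hamilton_estimate_Z}.
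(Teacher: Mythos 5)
Your proposal is correct and follows the paper's strategy: identify $h_\Lambda^{[\alpha]}$ as an unshifted corner block, write its entries via \eqref{eq:orth_spec_decomp} as a Riemann sum for \eqref{eq:hamiltonian_N}, and extract the rates from the regularity of the integrand, with the crucial observation that the Dirichlet factors $\sin(j\pi x)\sin(k\pi x)$ vanish to second order at $x=0$ and thereby tame the $x^{2\alpha}$ singularity, which is exactly why the endpoint $\alpha=-1/2$ is admissible here. The one place you diverge is the regime $\alpha\in[-1/2,0)$: you import the cutoff-at-$\beta$, monotone-plus-Lipschitz splitting from Lemma~\ref{lem:hamilton_estimate_Z} on the grounds that a global one-piece estimate ``overcounts by a power of $|\Lambda|$.'' That anticipated obstacle is a phantom: precisely because of the double zero you identified, the integrand is globally Lipschitz on $[0,1]$ for all $\alpha\geq -1/2$ (indeed for $\alpha\geq -1$), with Lipschitz constant $\mathcal{O}(|\Lambda_0|^{1-2\alpha})$ obtained by bounding $|F'|$ uniformly --- the near-origin term $jk\,x^{2\alpha+1}$ stays bounded since $2\alpha+1\geq 0$ --- and the standard one-line Riemann-sum bound then gives the stated rate $\mathcal{O}(|\Lambda_0|^{1-2\alpha}/|\Lambda|)$ in \eqref{eq:hamilton_N_approx_2} directly. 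This is what the paper does; your split is not wrong and yields the same rate, but the case analysis at the cutoff $\beta\sim 1/\max(j,k)$ is unnecessary bookkeeping, and the genuinely singular behavior that forced the splitting in the $\Z$ case (the nonvanishing cosine factor at the origin) simply does not occur here.
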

\begin{remark}
The integral \eqref{eq:hamiltonian_N} is in fact well defined for $\alpha>-3/2$ and the integrand Lipschitz continuous for $\alpha\geq -1$. \end{remark}
\begin{proof}
	Using spectral calculus and the explicit diagonalization~\eqref{eq:orth_spec_decomp} the matrix elements of $h_{\Lambda}^{[\alpha]}$ read
	\begin{align*}
		\left(h_{\Lambda}^{[\alpha]}\right)_{jk}
			= \frac{2^{1+2\alpha}}{|\Lambda|+1} \sum\limits_{l=1}^{|\Lambda|}
				\sin\left(\frac{l\pi}{2(|\Lambda|+1)}\right)^{2\alpha}\sin\left(\frac{jl\pi}{|\Lambda|+1}\right)
				\sin\left(\frac{kl\pi}{|\Lambda|+1}\right)
	\end{align*}
	for all $j,k\in\lbrace 1,\ldots,|\Lambda_0|\rbrace$. This converges as a Riemann sum to the integral~\eqref{eq:hamiltonian_N} for any $\alpha\geq -1/2$. As the function
	\begin{align*}
		[0,1]\ni x\mapsto \sin\left(\frac{\pi x}{2}\right)^{2\alpha}\sin(j\pi x)\sin(k\pi x)
	\end{align*}
	is Lipschitz continuous with constant denoted by $C_{\alpha}(j,k)$, the rate of convergence of the Riemann sum is at least of order $\mathcal{O}(C_{\alpha}(j,k)/|\Lambda|)$. By elementary bounds on the derivatives there exist $C_1,C_2\in(0,\infty)$ such that $C_\alpha(j,k)\leq C_1|\Lambda_0|$ for $\alpha\geq 0$ and $C_\alpha(j,k)\leq C_2|\Lambda_0|^{1-2\alpha}$ for $\alpha\in[-1/2,0)$ and all $j,k\in\lbrace 1,\ldots,|\Lambda_0|\rbrace$.
\end{proof}
Our analysis requires a further understanding of $h_\N^{[\alpha]}$. To ease notation we use the shorthands
\begin{align}\label{eq:shorthand_RS}
	R_{jk}=\left(h_\N^{[1/4]}\right)_{jk}\qquad\textnormal{and}\qquad S_{jk}=\left(h_\N^{[-1/2]}\right)_{jk}
\end{align}
for all $j,k\in\lbrace 1,\ldots,|\Lambda_0|\rbrace$. Explicit expressions for $R_{jk}$ and $S_{jk}$ are given in Lemma~\ref{lem:matrix_elements_N}. Here, we content ourselves with the following list of properties needed for the proof of Theorem~\ref{theorem:entropy_grow} below.
\begin{lemma}\label{lem:properties_RS}
	Let $n\in\N\setminus\{1\}$ and $j,k\in\lbrace 1,\ldots,n\rbrace$. Then, we have
	\begin{enumerate}[label=(\roman*)]
	\item\label{it:RS_1} $R_{jj}>0$, $R_{jk}<0$ if $j\neq k$ and $|R_{jk}|\leq 2\sqrt{2}$.
	\item\label{it:RS_1bis} $R_{nn-1}$ is a decreasing function of $n\in\N\setminus\{1\}$.
	\item\label{it:RS_2} $2\sum\limits_{k=1}^{n-1}R_{kn}\geq -R_{nn}$.
	\item\label{it:RS_3} $S_{jk} > \frac{1}{\pi}\log\left(\frac{j+k+\tfrac{1}{2}}{|j-k|+\tfrac{1}{2}}\right)>0$.
	\item\label{it:RS_4} $S_{jk} < \frac{1}{\pi}\log\left(\frac{j+k-\tfrac{1}{2}}{|j-k|-\tfrac{1}{2}}\right)$ for $j\neq k$ and $S_{jj}< \frac{2}{\pi} + \frac{1}{\pi}\log(4j-1)$.
	\item\label{it:RS_5} $S_{jn}$ is an increasing function of $1\leq j\leq n$.
	\end{enumerate}
\end{lemma}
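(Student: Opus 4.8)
\emph{Strategy and the matrix $S$.}
The six claims decouple along the two matrices: items~\ref{it:RS_1}--\ref{it:RS_2} concern $R=h_\N^{[1/4]}$ and items~\ref{it:RS_3}--\ref{it:RS_5} concern $S=h_\N^{[-1/2]}$, and I would treat them separately because after one product-to-sum step the integral~\eqref{eq:hamiltonian_N} collapses to genuinely different closed forms. For $S$ the substitution $\phi=\pi x/2$ turns~\eqref{eq:hamiltonian_N} into $S_{jk}=\tfrac{2}{\pi}\int_0^{\pi/2}\sin^{-1}\phi\,\sin(2j\phi)\sin(2k\phi)\,\mathrm d\phi$. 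The engine is the Dirichlet-type identity $\sin(2j\phi)/\sin\phi=2\sum_{l=1}^{j}\cos((2l-1)\phi)$, which telescopes from $2\sin\phi\cos((2l-1)\phi)=\sin(2l\phi)-\sin((2l-2)\phi)$; inserting it, applying once more a product-to-sum, and using $\int_0^{\pi/2}\sin(N\phi)\,\mathrm d\phi=1/N$ for odd $N$, I expect the clean formula, for $j\leq k$,
\[
  S_{jk}=\frac{2}{\pi}\sum_{q}\frac1q,\qquad q\ \text{odd},\ 2|j-k|+1\leq q\leq 2(j+k)-1 .
\]
Positivity (second half of~\ref{it:RS_3}) is then immediate. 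For the logarithmic bounds I would compare this sum of reciprocals of odd integers with the integral of $1/t$ over \emph{width-$2$} intervals: comparing $1/q$ with $\tfrac12\int_{q}^{q+2}t^{-1}\mathrm dt$ telescopes ($\prod (q+2)/q$) to the lower bound~\ref{it:RS_3}, while comparing with $\tfrac12\int_{q-2}^{q}t^{-1}\mathrm dt$ (legitimate once $q\geq3$, i.e.\ $j\neq k$) telescopes ($\prod q/(q-2)$) to the upper bound~\ref{it:RS_4}; the diagonal is handled by peeling off the $q=1$ term, producing the extra $2/\pi$. Monotonicity~\ref{it:RS_5} also drops out: passing from $j$ to $j+1$ (with $j<n$) enlarges the index set by exactly the two odd endpoints $2(n-j)-1$ and $2(n+j)+1$, so $S_{j+1,n}-S_{jn}=\tfrac{2}{\pi}\bigl(\tfrac{1}{2(n-j)-1}+\tfrac{1}{2(n+j)+1}\bigr)>0$.

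\emph{The matrix $R$.}
The same substitution gives $R_{jk}=\sqrt2\,(c_{|j-k|}-c_{j+k})$ with the cosine moments $c_m=\int_0^1\sin^{1/2}(\tfrac{\pi x}{2})\cos(m\pi x)\,\mathrm dx$. I would evaluate these via the classical Beta-integral for $\int_0^{\pi/2}\sin^{1/2}\phi\cos(2m\phi)\,\mathrm d\phi$ together with the reflection formula to fix signs, obtaining $c_m=-\tfrac{1}{4\sqrt\pi}\,\Gamma(m-\tfrac14)/\Gamma(m+\tfrac54)$. Since $\Gamma(-\tfrac14)<0$ this yields $c_0>0$ while $c_m<0$ for $m\geq1$, and the ratio of consecutive terms equals $(m-\tfrac14)/(m+\tfrac54)<1$, so $\{c_m\}_{m\geq1}$ is strictly increasing towards $0$. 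These two facts give item~\ref{it:RS_1}: $R_{jj}=\sqrt2\,(c_0-c_{2j})>0$, and for $j\neq k$ one has $1\leq|j-k|<j+k$, hence $R_{jk}=\sqrt2\,(c_{|j-k|}-c_{j+k})<0$; the bound $|R_{jk}|\leq2\sqrt2$ is the trivial estimate of the integrand by $1$ on $[0,1]$. Item~\ref{it:RS_1bis} is the same monotonicity applied to $R_{n,n-1}=\sqrt2\,(c_1-c_{2n-1})$.

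\emph{The main obstacle, item~\ref{it:RS_2}.}
Written out, the claim is the diagonal-dominance inequality $R_{nn}\geq 2\sum_{k=1}^{n-1}|R_{kn}|$, which after substituting the expressions above becomes $T_n:=b_0+b_{2n}+2\sum_{m=n+1}^{2n-1}b_m-2\sum_{m=1}^{n-1}b_m\geq0$, where $b_0=c_0$ and $b_m=-c_m>0$. \textbf{This is the delicate point}: the inequality is asymptotically tight (the margin tends to $0$ as $n\to\infty$), so naively discarding the positive terms fails. The resolution I would use is the Fourier identity obtained by evaluating the even, continuous, bounded-variation extension of $\sin^{1/2}(\tfrac{\pi x}{2})$ at $x=0$ through the Dirichlet--Jordan test, namely $c_0+2\sum_{m\geq1}c_m=\sin^{1/2}0=0$, i.e.\ $b_0=2\sum_{m\geq1}b_m$. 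Substituting this for $b_0$ cancels the negative sum and leaves
\[
  T_n=2\sum_{m\geq n}b_m+2\sum_{m=n+1}^{2n-1}b_m+b_{2n},
\]
a sum of strictly positive terms, whence~\ref{it:RS_2} follows. I expect recognising this Fourier cancellation to be the crux; everything else is bookkeeping on the two closed forms.
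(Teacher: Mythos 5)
Your argument is correct and, in substance, follows the same route as the paper: both proofs first reduce the integrals~\eqref{eq:hamiltonian_N} to closed forms (your sum of reciprocals of odd integers for $S_{jk}$ and your $c_m=-\tfrac{1}{4\sqrt{\pi}}\,\Gamma(m-\tfrac14)/\Gamma(m+\tfrac54)$ reproduce exactly the expressions of Lemma~\ref{lem:matrix_elements_N}), and the telescoping integral comparisons for the logarithmic bounds in (iv)--(v) as well as the two-new-terms argument for (vi) are the paper's arguments essentially verbatim. The one place where you genuinely diverge is item (iii): the paper anchors both sides of the inequality to the common constant $-2\Gamma(\tfrac34)/\Gamma(\tfrac54)$ by invoking the closed-form series $\sum_{k\geq 1}\Gamma(k-\tfrac14)/\Gamma(k+\tfrac54)=2\Gamma(\tfrac34)/\Gamma(\tfrac54)$ (cited to Mathematica) and discarding a positive remainder on one side and a negative one on the other; you instead derive the equivalent identity $c_0+2\sum_{m\geq 1}c_m=0$ by evaluating the Fourier series of the even periodic extension of $\sin^{1/2}(\pi x/2)$ at $x=0$ via Dirichlet--Jordan, and then rewrite the entire deficit $R_{nn}-2\sum_{k<n}|R_{kn}|$ as a manifestly positive sum. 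These are the same cancellation in different clothing, but your derivation is self-contained where the paper's key series identity is not, and it makes transparent why the inequality is asymptotically tight. (Your remark that ``naively discarding the positive terms fails'' slightly oversells the difficulty: the paper does discard terms, just symmetrically about the shared constant, which is precisely the anchoring you also perform.) I find no gaps.
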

\begin{proof}
	\noindent \ref{it:RS_1}-\ref{it:RS_1bis} $R_{jj} >0$ and $|R_{jk}|\leq 2\sqrt{2}$ follow directly from the definition \eqref{eq:hamiltonian_N} with $\alpha=1/4$. The remaining property, $R_{jk}<0$ if $j\neq k$, is immediate if one rewrites the explicit formula from Lemma~\ref{lem:matrix_elements_N}\ref{it:matel_1} in the form
	\begin{align*}
		\frac{1}{2\sqrt{2\pi}}\left[\left(\prod\limits_{l=1}^{2\min\lbrace j,k\rbrace}
			\frac{(j+k-l-\tfrac{1}{4})}{(j+k-l+\tfrac{5}{4})}\right)-1\right]
			\frac{\Gamma(|j-k|-\tfrac{1}{4})}{\Gamma(|j-k|+\tfrac{5}{4})},
	\end{align*}
	where we used the recursion $\Gamma(n+1)=n\Gamma(n)$. In fact, the quotient of Gamma functions is positive for $j\neq k$ and the square bracket is negative. Setting $j=n$ and $k=n-1$, item \ref{it:RS_1bis} follows.

	\noindent \ref{it:RS_2} For any $q>0$ we have the identity \cite{Mathematica2018}
	\begin{align*}
		\sum\limits_{k=1}^\infty \frac{\Gamma(k-\tfrac{q}{2})}{\Gamma(k+1+\tfrac{q}{2})}
			= \frac{\Gamma(1-\tfrac{q}{2})}{q\,\Gamma(1+\tfrac{q}{2})}.
	\end{align*}
	Hence, estimating the formula from Lemma~\ref{lem:matrix_elements_N}\ref{it:matel_1} by discarding positive or negative terms, we obtain on the one hand
	\begin{align*}
		2\sqrt{2\pi}\sum\limits_{k=1}^{n-1} R_{kn} \geq 
			-\sum\limits_{k=1}^{n-1} \frac{\Gamma(n-k-\tfrac{1}{4})}{\Gamma(n-k+\tfrac{5}{4})}
			\geq -\sum\limits_{k=1}^\infty \frac{\Gamma(k-\tfrac{1}{4})}{\Gamma(k+\tfrac{5}{4})}
			= -2\frac{\Gamma(\tfrac{3}{4})}{\Gamma(\tfrac{5}{4})};
	\end{align*}
	and on the other hand
	\begin{align*}
		-\sqrt{2\pi} R_{nn}
			= -\frac{1}{2}\left(\frac{\Gamma(2n-\tfrac{1}{4})}{\Gamma(2n+\tfrac{5}{4})}
			-\frac{\Gamma(-\tfrac{1}{4})}{\Gamma(\tfrac{5}{4})}\right)
			\leq \frac{1}{2}\frac{\Gamma(-\frac{1}{4})}{\Gamma(\tfrac{5}{4})}
			= -2\frac{\Gamma(\tfrac{3}{4})}{\Gamma(\tfrac{5}{4})}.
	\end{align*}
	
	\noindent \ref{it:RS_3}-\ref{it:RS_5} Using Lemma~\ref{lem:matrix_elements_N}\ref{it:matel_2} we have
	\begin{align*}
		\pi S_{jk} = \sum\limits_{l=|j-k|}^{j+k-1}\frac{1}{l+\tfrac{1}{2}}
			> \int_{|j-k|}^{j+k}\frac{\mathrm{d}x}{x+\tfrac{1}{2}}
			= \log\left(\frac{j+k+\tfrac{1}{2}}{|j-k|+\tfrac{1}{2}}\right)>0.
	\end{align*}
	The upper bounds on $S_{jk}$ follow in the same manner, though with a distinction for $j\neq k$ and $j= k$. Item \ref{it:RS_5} is immediate from the explicit formula for $S_{jn}$.
\end{proof}
\begin{proof}[Proof of Theorem~\ref{theorem:entropy_grow} for $\mathcal{V}=\N$]
	By Lemma~\ref{lem:lower_bound_det}, the entanglement entropy of the ground state of $\overline{H}_\Lambda$ with respect to the bipartition $\Lambda = \Lambda_0\cup\left(\Lambda\setminus\Lambda_0\right)$ has the lower bound
	\begin{align*}
		S\left(\overline{\rho}_{\Lambda};\Lambda_0\right)
			\geq \frac{1}{2}\log\left[ \lambda_\mathrm{max}
			\left(h_{\Lambda}^{[1/4]} h_{\Lambda}^{[-1/2]} h_{\Lambda}^{[1/4]}\right)\right],
	\end{align*}
	where $\lambda_\mathrm{max}(A)$ denotes the largest eigenvalue of the matrix $A$. By the min-max principle, this implies in particular
	\begin{align}\label{eq:entropy_lower_bound_matel}
		S\left(\overline{\rho}_{\Lambda};\Lambda_0\right)
			\geq \frac{1}{2}\log\left[
				\left(h_{\Lambda}^{[1/4]} h_{\Lambda}^{[-1/2]} h_{\Lambda}^{[1/4]}\right)_{|\Lambda_0||\Lambda_0|}\right].
	\end{align}
	Let $\Lambda_0 = [1,n]\cap\N$ and $\Lambda = [1,m]\cap\N$ with $n,m\in\N$ and $n\leq m$. By Lemma~\ref{lem:hamilton_estimate_N} the $nn$-matrix element on the right hand side of~\eqref{eq:entropy_lower_bound_matel} satisfies
	\begin{align}\label{eq:main_and_error_terms}
		\left(h_{\Lambda}^{[1/4]} h_{\Lambda}^{[-1/2]} h_{\Lambda}^{[1/4]}\right)_{nn}
			= \sum\limits_{j,k=1}^{n} R_{nj}S_{jk}R_{kn} + \mathcal{O}\left(\frac{n^4}{m}\right),
	\end{align}
	where we used the shorthand notation \eqref{eq:shorthand_RS}. In fact, by Lemma~\ref{lem:properties_RS} all $R_{jk}$ are bounded and $|S_{jk}|\leq\mathcal{O}(\log n)$. The leading contribution to the error thus arises from the correction of order $\mathcal{O}(n^2/m)$ to $S$ and two additional powers of $n$ from the sum over $j$ and $k$.
	
	As $S$ is positive definite, so is $\sum_{j,k=1}^l  R_{nj}S_{jk}R_{kn} \geq 0$ for any $1\leq l\leq n$. Choosing $l=n-2$ this yields
	\begin{align}
		(RSR)_{nn} \geq&\, \left(R_{nn}\right)^2 S_{nn} + \left(R_{n n-1}\right)^2 S_{n-1 n-1}\nonumber\\
			&+\, 2 R_{nn}\sum\limits_{j=1}^{n-1} R_{n j} S_{jn}
			+ 2 R_{n n-1}\sum\limits_{j=1}^{n-2} R_{n j} S_{j n-1}.\label{eq:RSR_lower_bound}
	\end{align}
	We proceed by repeated use of Lemma~\ref{lem:properties_RS}. By \ref{it:RS_1} and \ref{it:RS_3} the last term in \eqref{eq:RSR_lower_bound} is positive and may thus be discarded for a lower bound. Similarly, the second but last term is negative and satisfies by \ref{it:RS_2} and \ref{it:RS_5}
	\begin{align*}
		2 R_{nn}\sum\limits_{j=1}^{n-1} R_{n j} S_{jn}
			\geq 2 R_{nn}S_{n n-1}\sum\limits_{j=1}^{n-1} R_{jn}
			\geq - \left(R_{nn}\right)^2 S_{n n-1}.
	\end{align*}
	Finally, we obtain
	\begin{align}
		(RSR)_{nn} &\geq\, \left(R_{nn}\right)^2 \left[S_{nn} - S_{n n-1}\right] + \left(R_{n n-1}\right)^2 S_{n-1 n-1}
			\nonumber\\
			&\geq\, \frac{1}{\pi}\left(R_{2 1}\right)^2 \log(4n-3),\label{eq:lower_bound_final}
	\end{align}
	where the square bracket is positive by \ref{it:RS_5} and the last inequality follows by \ref{it:RS_1bis} and \ref{it:RS_3}.
	
	Consider now the sequences $\lbrace\Lambda_0^{(n)}\rbrace_{n\in\N}$ and $\lbrace\Lambda^{(n)}\rbrace_{n\in\N}$ with $\Lambda_0^{(n)} = [1,n]\cap\N$ and $\Lambda^{(n)} = [1,m_n]\cap\N$, where $m_n$ is chosen such that $m_n =\mathcal{O}(n^{4+\epsilon})$ for some $\epsilon>0$. Then, combining \eqref{eq:entropy_lower_bound_matel}, \eqref{eq:main_and_error_terms} and \eqref{eq:lower_bound_final}, the entanglement entropy $S(\overline{\rho}_{\Lambda^{(n)}};\Lambda_0^{(n)})$ diverges as $\log\log(n)$ for $n\to\infty$.
\end{proof}

%%%%%%%%%%%%%%%%%%%%%%%%%%%%%%%%%%%%%%%%%%%%%%%%%%%%%%%%%%%%%%%%%%%%%%%%%%%%%%%%%%%%
\appendix
\section{On Gaussian states and entanglement entropy}
\label{app:gaussian_entropy}
%%%%%%%%%%%%%%%%%%%%%%%%%%%%%%%%%%%%%%%%%%%%%%%%%%%%%%%%%%%%%%%%%%%%%%%%%%%%%%%%%%%%

This appendix aims at giving a short overview of the fundamentals on Gaussian states and their entanglement entropy. The interested reader is refered to more comprehensive works on the subject for further details \cite{Bratteli1981, Manuceau1968}.

%%%%%%%%%%%%%%%%%%%%%%%%%%%%%%%%%%%%%%%%%%%%%%%%%%%%%%%%%%%%%%%%%%%%%%%%%%%%%%%%%%%%
\subsection{Gaussian states}\label{ssec:gaussian}
%%%%%%%%%%%%%%%%%%%%%%%%%%%%%%%%%%%%%%%%%%%%%%%%%%%%%%%%%%%%%%%%%%%%%%%%%%%%%%%%%%%%

Let $\mathcal{H}$ be a (separable) Hilbert space. A state $\omega$ on $\mathcal{H}$ is a positive linear functional of norm $1$ on $\mathcal{B}(\mathcal{H})$, the Banach space of bounded operators on $\mathcal{H}$. In particular, to each density matrix $\rho$---i.e. positive semidefinite hermitian operator of trace $1$---corresponds a state $\omega_\rho$ through the identification
\begin{equation*}
	\omega_\rho(A) := \Tr{\rho A}\qquad\textnormal{for all }A\in\mathcal{B}(\mathcal{H}).
\end{equation*}
It suffices to characterize a state by its action on a dense set of bounded operators. One such is given by a  suitable representation of the \emph{Weyl algebra} $\mathcal{W}(V)$, for some linear space $V$ endowed with a real symplectic form $\sigma$. The Weyl algebra is (uniquely) characterized in terms of its elements and generators $W(f), f\in V$ by
\begin{equation*}
	W(f)^*=W(-f)\quad\text{and}\quad W(f)W(g)=e^{-i\sigma(f,g)/2}W(f+g), \qquad\textnormal{for all }f,g\in V.
\end{equation*}
A functional $\omega: \mathcal{W}(V)\to\C$ is called \emph{quasi-free} or \emph{Gaussian} if up to automorphism of the Weyl algebra $\mathcal{W}(V)$ we have
\begin{equation*}
	\omega(W(f)) = e^{-s(f,f)/4}\qquad\textnormal{for all }f\in V,
\end{equation*}
where $s$ is a real symmetric bilinear positive semidefinite form. A quasi-free functional is a state if and only if for the underlying symplectic form
\begin{equation*}
	\sigma(f,g)^2 \leq s(f,f)s(g,g)\qquad\textnormal{for all }f,g\in V.
\end{equation*}
We turn to the concrete setting described in Section~\ref{sec:model}. A realization of the Weyl algebra on the Hilbert space $\mathcal{H}_\Lambda = \bigotimes\limits_{x\in\Lambda}L^2(\R,\mathrm{d}q_x)$ is given by the \emph{Schr\"odinger representation}
\begin{equation}\label{eq:Weyl_schroedinger_repr}
	W(f) = \exp\left(i\sum\limits_{x\in\Lambda}\left(\mathrm{Re}[f_x]q_x + \mathrm{Im}[f_x]p_x\right)\right)
		\qquad \textnormal{for all }f\in\ell^2(\Lambda),
\end{equation}
with the associated symplectic form $\sigma(f,g) = \mathrm{Im}\langle f,g\rangle$. An explicit computation shows that for the ground state density matrix $\rho_\Lambda$ of $H_\Lambda$ and any $f\in\ell^2(\Lambda)$ the action of the associated state on $W(f)$ reads
\begin{equation}\label{eq:gaussian_state}
	\omega_{\rho_\Lambda}(W(f)) := \Tr{\rho_\Lambda W(f)}
		= \exp\left(-\frac{1}{4}\langle \tilde{f},\Gamma_\Lambda\tilde{f}\rangle\right),
\end{equation}
where $\tilde{f}=(\mathrm{Re}[f],\mathrm{Im}[f])^T\in\R^{|\Lambda|}\oplus\R^{|\Lambda|}$ and $\Gamma_\Lambda$ denotes the covariance matrix associated to $\rho_\Lambda$, cf.~definition \eqref{eq:cov_matrix}. In particular, $\Gamma_\Lambda$ is here positive definite, cf.~\eqref{eq:m}.

%%%%%%%%%%%%%%%%%%%%%%%%%%%%%%%%%%%%%%%%%%%%%%%%%%%%%%%%%%%%%%%%%%%%%%%%%%%%%%%%%%%%
\subsection{The entanglement entropy of Gaussian states}
\label{app:derivation}
%%%%%%%%%%%%%%%%%%%%%%%%%%%%%%%%%%%%%%%%%%%%%%%%%%%%%%%%%%%%%%%%%%%%%%%%%%%%%%%%%%%%

We begin by giving a proof of Proposition~\ref{prop:reduction_Gaussian}, according to which the reduction of a Gaussian state remains Gaussian.

\begin{proof}[Proof of \cref{prop:reduction_Gaussian}]
	Consider the bipartition $\Lambda = \Lambda_0\cup\Lambda_0^c$. Denoting by $f_A$, $A\subset\Lambda$, the restriction of $f\in\ell^2(\Lambda)$ to $\ell^2(A)$ we have
	\begin{equation*}
		\omega_{\rho_\Lambda}(W(f)) = \Tr{\rho_\Lambda W(f_{\Lambda_0})\otimes W(f_{\Lambda_0^c}))}
	\end{equation*}
	Let $\rho_{\Lambda_0}=\trace_{\Lambda_0^c}(\rho_\Lambda)$ be the reduced density matrix on $\ell^2(\Lambda_0)$. With the definition of the partial trace we obtain
	\begin{align*}
		\omega_{\rho_{\Lambda_0}}(W(f_{\Lambda_0}))
			&:=\, \trace\left(\rho_{\Lambda_0}W(f_0)\right)
			= \Tr{\rho_\Lambda W(f_{\Lambda_0})\otimes\mathds{1}_{\Lambda_0^c}}
			= \Tr{\rho_\Lambda W(f_{\Lambda_0})\otimes W(0_{\Lambda_0^c})}\\
			&=\, \exp\left(-\frac{1}{4}\langle \tilde{f}_{\Lambda_0}\oplus 0_{\Lambda_0^c},
				\Gamma_\Lambda\tilde{f}_{\Lambda_0}\oplus 0_{\Lambda_0^c}\rangle\right)
			= \exp\left(-\frac{1}{4}\langle \tilde{f}_{\Lambda_0},
				\Gamma_{\Lambda_0}\tilde{f}_{\Lambda_0}\rangle\right),
	\end{align*}
	where we used the relation~\eqref{eq:gaussian_state}. Hence, the reduced state is still Gaussian with truncated covariance matrix $\Gamma_{\Lambda_0}$. The statement on the uncertainty relation is immediate.
\end{proof}

By Proposition~\ref{prop:entropy_formula} the entanglement entropy of Gaussian states can be computed explicitly in terms of the associated symplectic eigenvalues. Notwithstanding its ubiquity in the literature, we could not find a rigorous derivation thereof. For the reader's convenience, we spell out the argument hereafter.

\begin{proof}[Proof of \cref{prop:entropy_formula}]
	Consider a Gaussian state $\omega_\rho$ on $\mathcal{H}_\Lambda$ given by a density matrix $\rho$ with (positive definite) covariance matrix $\Gamma\in\R^{2|\Lambda|\times 2|\Lambda|}$. It is convenient to work in the Schr\"odinger representation of the Weyl algebra $\mathcal{W}\bigl(\ell^2(\Lambda)\bigr)$ introduced in \eqref{eq:Weyl_schroedinger_repr}. By Proposition~\ref{prop:williamson}, $\Gamma$ may be diagonalized by means of a symplectic matrix $S\in\mathsf{SP}(2|\Lambda|,\R)$. Let $U_S$ be the unitary implementation of $S$ in the representation of the Weyl algebra we are using, i.e.
	\begin{equation*}
		U_S W(f) U_S^{*} = W(S\tilde{f})\qquad \textnormal{for all }f\in\ell^2(\Lambda)
	\end{equation*}
	and $\tilde{f}=(\mathrm{Re}(f),\mathrm{Im}(f))^T$, where we identify $W(\tilde{f})\equiv W(f)$.

	The unitarily transformed density matrix $\rho_S := U_S^{*}\rho U_S$ has the same entanglement entropy as $\rho$ and the associated state $\omega_{\rho_S}$ is Gaussian with
	\begin{align*}
		\omega_{\rho_S}(W(f)) = \Tr{\rho W(S\tilde{f})}
			= \exp\left(-\frac{1}{4}\sum\limits_{x\in\Lambda}\gamma_x|f_x|^2\right)
			\qquad\textnormal{for all }f\in\ell^{2}(\Lambda),
	\end{align*}
	where $\lbrace\gamma_x\rbrace_{x\in\Lambda}$ are the symplectic eigenvalues of $\Gamma$.

	We construct an explicit density matrix $\sigma$ such that $\omega_{\rho_S}$ and $\omega_\sigma$ coincide on $\mathcal{W}\bigl(\ell^2(\Lambda)\bigr)$, whence by density $\sigma = \rho_S$, cf.~Lemma 3.1 in \cite{Nachtergaele2013}. Since the von Neumann entropy of $\sigma$ is then shown to satisfy \eqref{eq:von_neumann_entropy}, so does $\rho_S$ and in turn $\rho$. For this purpose let $\{\varphi_k\}_{k\in\N_0}\subset L^2(\R)$ be the set of orthonormal eigenfunctions of the one-dimensional harmonic oscillator
	\begin{align*}
		\varphi_k(x)=\frac{e^{-x^2/2}}{2^{k/2}\sqrt{k!}\pi^{1/4}}H_k(x),
	\end{align*}
	where $H_k(x)=(-1)^n e^{x^2}\frac{d^k}{dx^k}e^{-x^2}$ denotes the $k$-th Hermite polynomial. For any $z\in\C$ and $W(z):=\exp\left(i\left[\Re(z)q+\Im(z)p\right]\right)$ a Weyl operator in the Schr\"odinger representation of $\mathcal{W}(\C)$, a straightforward computation yields (cf. Theorem 3.1 in \cite{Abdul-Rahman2018})
	\begin{align*}
		\Braket{\varphi_k,W(z)\varphi_k}=e^{-|z|^2/4}L_k\left(\frac{|z|^2}{2}\right),
	\end{align*}
	where $L_k$ denotes the $k$-th Laguerre polynomial. Through the generating function
	\begin{align*}
		\sum_{k=0}^{\infty}t^kL_k(x)=\frac{1}{1-t}\exp\left(-\frac{tx}{1-t}\right)\qquad \textnormal{for }|t|<1
			\textnormal{ and }x\geq 0,
	\end{align*} 
	this implies by setting $t=(\gamma-1)/(\gamma+1)$
	\begin{align*}
		\frac{2}{\gamma+1}\sum_{k=0}^{\infty}\left(\frac{\gamma-1}{\gamma+1}\right)^k\Braket{\varphi_k,W(z)\varphi_k}
			=\exp\left(-\frac{\gamma |z|^2}{4}\right).
	\end{align*}
	The above equality can in fact be read
	\begin{align}\label{eq:weyl_mode}
		\Tr{\eta W(z)}=\exp\left(-\frac{\gamma |z|^2}{4}\right)
			\qquad \textnormal{for }\quad
			\eta:=\frac{2}{\gamma+1}\sum_{k=0}^{\infty}\left(\frac{\gamma-1}{\gamma+1}\right)^k
			\varphi_k\langle\varphi_k,\cdot\,\rangle,
	\end{align}
	where $\eta$ is seen to be a density operator by the geometric series. In particular, its von Neumann entropy may be computed as
	\begin{align*}
		S(\eta)&=\,-\Tr{\eta\log\eta}
			=-\frac{2}{\gamma+1}\sum_{k=0}^{\infty}\left(\frac{\gamma-1}{\gamma+1}\right)^k\log\left(\frac{2}{\gamma+1}	
				\left(\frac{\gamma-1}{\gamma+1}\right)^k\right)\\
			&=\,\frac{\gamma+1}{2}\log\left(\frac{\gamma+1}{2}\right)
				-\frac{\gamma-1}{2}\log\left(\frac{\gamma-1}{2}\right).
	\end{align*}
	We embed the orthonormal basis $\{\varphi_k\}_{k\in\N_0}$ of $L^2(\R,\mathrm{d}q)$ into $\mathcal{H}_\Lambda$ by defining
	\begin{align*}
		\varphi^{(x_i)}_k(x_1,\dots,x_{|\Lambda|}):=\varphi_k(x_i)
			\qquad\textnormal{for }i=1,\dots,|\Lambda|\textnormal{ and }k\in\N_0.
	\end{align*}
	Clearly, $\bigcup\limits_{x\in\Lambda}\{\varphi^{(x)}_k\}_{k\in\N_0}$ is an orthonormal basis of $\hilb_{\Lambda}$. We now define
	\begin{align*}
		\eta^{(x)}:=\frac{2}{\gamma_x+1}\sum_{k=0}^{\infty}\left(\frac{\gamma_x-1}{\gamma_x+1}\right)^k 
			\varphi_k^{(x)}\langle\varphi_k^{(x)},\cdot\,\rangle\qquad \textnormal{for }x\in\Lambda,
	\end{align*}
	where $\lbrace\gamma_x\rbrace_{x\in\Lambda}$ are the symplectic eigenvalues of $\Gamma$. Then \eqref{eq:weyl_mode} implies for $\sigma:=\bigotimes_{x\in\Lambda}\eta^{(x)}$ and all $f\in\ell^{2}(\Lambda)$
	\begin{align*}
		\omega_{\sigma}(W(f))=\exp\left(-\frac{1}{4}\sum_{x\in\Lambda}\gamma_x |f_x|^2\right)=\omega_{\rho_S}(W(f)).
	\end{align*}
	Formula \eqref{eq:von_neumann_entropy} follows by the additivity of the von Neumann entropy for product states.
\end{proof}

%%%%%%%%%%%%%%%%%%%%%%%%%%%%%%%%%%%%%%%%%%%%%%%%%%%%%%%%%%%%%%%%%%%%%%%%%%%%%%%%%%%%
\section{Matrix elements of $h_\N^{[1/4]}$ and $h_\N^{[-1/2]}$}
%%%%%%%%%%%%%%%%%%%%%%%%%%%%%%%%%%%%%%%%%%%%%%%%%%%%%%%%%%%%%%%%%%%%%%%%%%%%%%%%%%%%

We derive explicit expressions for the matrix elements of $h_\N^{[1/4]}$ and $h_\N^{[-1/2]}$ given as integrals in~\eqref{eq:hamiltonian_N}. These are in particular exploited to show the properties stated in Lemma~\ref{lem:properties_RS}.
\begin{lemma}\label{lem:matrix_elements_N}
 	For all $j,k\in\{1,\dots,n\}$, we have
	\begin{enumerate}[label=(\roman*)]
		\item\label{it:matel_1} $\displaystyle \left(h_\N^{[1/4]}\right)_{jk}=\frac{1}{2\sqrt{2\pi}}\left(\frac{\Gamma\left(j+k-\frac{1}{4}\right)}{\Gamma\left(j+k+\frac{5}{4}\right)}-\frac{\Gamma\left(|j-k|-\frac{1}{4}\right)}{\Gamma\left(|j-k|+\frac{5}{4}\right)}\right)$,
		\item\label{it:matel_2} $\displaystyle \left(h_\N^{[-1/2]}\right)_{jk}=\frac{1}{\pi}\left[\psi\left(j+k+\tfrac{1}{2}\right)-\psi\left(|j-k|+\tfrac{1}{2}\right)\right]=\frac{1}{\pi}\sum_{l=|j-k|+1}^{j+k}\frac{2}{2l-1}$,
	\end{enumerate}
	where $\Gamma(\cdot)$ denotes the Gamma function and  $\displaystyle \psi(\cdot)=\frac{\Gamma^\prime(\cdot)}{\Gamma(\cdot)}$ the digamma function.
\end{lemma}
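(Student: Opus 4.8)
The plan is to reduce both matrix elements to a single ``master'' cosine integral and then evaluate the latter by classical special-function identities. Inserting the product-to-sum identity $2\sin(j\pi x)\sin(k\pi x) = \cos(|j-k|\pi x) - \cos((j+k)\pi x)$ into the defining integral \eqref{eq:hamiltonian_N}, both \ref{it:matel_1} and \ref{it:matel_2} take the form $\tfrac12\bigl(K_\alpha(|j-k|) - K_\alpha(j+k)\bigr)$, where
\begin{align*}
	K_\alpha(m) := 2^{1+2\alpha}\int_0^1 \sin\left(\tfrac{\pi x}{2}\right)^{2\alpha}\cos(m\pi x)\,\mathrm{d}x,
	\qquad m\in\N_0.
\end{align*}
The two cases $\alpha = 1/4$ and $\alpha = -1/2$ must nonetheless be treated separately: for $\alpha = 1/4$ the integrand is integrable termwise, whereas for $\alpha = -1/2$ the factor $\sin(\pi x/2)^{-1}$ renders each $K_{-1/2}(m)$ divergent at the origin, so only the \emph{difference} may be integrated.

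For \ref{it:matel_1} I would substitute $t = \pi x/2$ and invoke the classical Beta-type integral
\begin{align*}
	\int_0^{\pi/2}\sin^{s}(t)\cos(2mt)\,\mathrm{d}t
		= \frac{\pi}{2^{s+1}}\,\frac{(-1)^m\,\Gamma(s+1)}{\Gamma\left(\tfrac{s}{2}+1+m\right)\Gamma\left(\tfrac{s}{2}+1-m\right)}
\end{align*}
with $s = 1/2$ (its $m=0$ instance is the standard Wallis integral, and the vanishing of the integral at $s=0$, $m\geq 1$ is recovered from the pole of $\Gamma(1-m)$). This expresses $K_{1/4}(m)$ through $\bigl[\Gamma(5/4+m)\Gamma(5/4-m)\bigr]^{-1}$. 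The key simplification is then the reflection formula $\Gamma(z)\Gamma(1-z) = \pi/\sin(\pi z)$ applied with $z = m - 1/4$, which converts $1/\Gamma(5/4-m)$ into $\sin(\pi(m-1/4))\Gamma(m-1/4)/\pi$; since $\sin(\pi(m-1/4)) = -(-1)^m/\sqrt2$, the oscillating sign $(-1)^m$ cancels and one is left precisely with $\Gamma(m-1/4)/\Gamma(m+5/4)$ up to the constant $1/(2\sqrt{2\pi})$, giving \ref{it:matel_1}.

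For \ref{it:matel_2} I would instead work directly at the level of the difference. Writing $\cos A - \cos B = 2\sin\frac{A+B}{2}\sin\frac{B-A}{2}$ together with $\sin(\pi x) = 2\sin(\pi x/2)\cos(\pi x/2)$ yields the clean telescoping identity
\begin{align*}
	\frac{\cos(m\pi x) - \cos((m+2)\pi x)}{\sin(\pi x/2)}
		= 2\left[\sin\left((m+\tfrac{3}{2})\pi x\right) + \sin\left((m+\tfrac{1}{2})\pi x\right)\right],
\end{align*}
whose right-hand side integrates elementarily over $[0,1]$ (the boundary cosines $\cos((m+\tfrac{3}{2})\pi)$ and $\cos((m+\tfrac12)\pi)$ vanish). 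Since $|j-k|$ and $j+k$ share the same parity, I would telescope in steps of two from $|j-k|$ to $j+k$; the half-integer denominators combine into all odd integers between $2|j-k|+1$ and $2(j+k)-1$, producing exactly $\frac1\pi\sum_{l=|j-k|+1}^{j+k}\frac{2}{2l-1}$. The digamma representation then follows from $\psi(x+1) - \psi(x) = 1/x$, i.e. $\frac{2}{2l-1} = \psi(l+\tfrac12) - \psi(l-\tfrac12)$, which telescopes to $\psi(j+k+\tfrac12) - \psi(|j-k|+\tfrac12)$.

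The main obstacle is bookkeeping rather than conceptual: in \ref{it:matel_1} one must track the constants through the Beta integral and, above all, execute the reflection step so that $(-1)^m$ cancels correctly; in \ref{it:matel_2} the only genuine subtlety is that the $\alpha=-1/2$ integrand is not integrable cosine-by-cosine at $x=0$, which forces the difference to be grouped before integrating---after which everything is elementary.
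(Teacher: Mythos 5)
Your proposal is correct and follows essentially the same route as the paper: for \ref{it:matel_1} both arguments use the product-to-sum identity, the tabulated integral $\int_0^{\pi/2}\sin^q(t)\cos(2lt)\,\mathrm{d}t$ at $q=1/2$, and the reflection formula $\Gamma(z)\Gamma(1-z)=\pi/\sin(\pi z)$ to absorb the sign $(-1)^l$ into $\Gamma(l-\tfrac14)/\Gamma(l+\tfrac54)$. For \ref{it:matel_2} the paper expands $\sin(j\pi x)/\sin(\pi x/2)$ as the Dirichlet-type sum $2\sum_{l=0}^{j-1}\cos\bigl((l+\tfrac12)\pi x\bigr)$ before multiplying by $\sin(k\pi x)$, whereas you telescope the cosine difference in steps of two; both reduce to the same elementary sine integrals over $[0,1]$ (with vanishing boundary cosines) and produce the identical sum of reciprocals of half-integers, so the difference is purely organizational, and your explicit remark that the $\alpha=-1/2$ case must be integrated as a grouped difference is a correct and welcome point of care.
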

\begin{proof}
	\ref{it:matel_1} From \eqref{eq:hamiltonian_N} and the identity $2\sin(\alpha)\sin(\beta) = \cos(\alpha-\beta)-\cos(\alpha+\beta)$, we have
	\begin{align*}
		\left(h_\N^{[1/4]}\right)_{jk} = \frac{2\sqrt{2}}{\pi}\int_0^{\pi/2} \sqrt{\sin(x)}
			\bigl(\cos(2|j-k|x)-\cos(2(j+k)x)\bigr)\mathrm{d}x.
	\end{align*}
	The result then immediately follows from the more general identity (see for instance 332 9b) in \cite{Groebner1950}): for any $n\in\mathbb{N}_0$ and $q>-1$ we have
	\begin{equation*}
		\int_0^{\pi/2}\sin(x)^q\cos(2lx)\,\mathrm{d}x 
			= (-1)^l\frac{\pi}{2^{q+1}}\frac{\Gamma(1+q)}{\Gamma(l+1+\tfrac{q}{2})\Gamma(1-l+\tfrac{q}{2})}.
	\end{equation*}
	In fact, for $q=1/2$, the right-hand side reads
	\begin{align*}
		(-1)^l \frac{\pi}{2\sqrt{2}}\frac{\Gamma(\tfrac{3}{2})}{\Gamma(l+\tfrac{5}{4})\Gamma(\tfrac{5}{4}-l)}
			= \frac{\sqrt{\pi}}{4\sqrt{2}}\frac{(-1)^l\pi}{\Gamma(l+\tfrac{5}{4})\Gamma(\tfrac{5}{4}-l)}
			= -\frac{\sqrt{\pi}}{8}\frac{\Gamma(l-\tfrac{1}{4})}{\Gamma(l+\tfrac{5}{4})},
	\end{align*}
	where the last equality is by the identity $\Gamma(z)\Gamma(1-z) = \pi/\sin(\pi z)$ for $z\not\in\Z$.
	\medskip\\
	\noindent \ref{it:matel_2} Assume without loss that $j\leq k$. Using $2\sin(\alpha)\cos(\beta) = \sin(\alpha+\beta) + \sin(\alpha-\beta)$ we first observe that
	\begin{align*}
		\frac{\sin(j\pi x)\sin(k\pi x)}{\sin\left(\frac{\pi x}{2}\right)}
			&=2\sum_{l=0}^{j-1}\cos\left(\left(l+\tfrac{1}{2}\right)\pi x\right)\sin(k\pi x)\\
			&=\sum_{l=0}^{j-1}\left(\sin\left(\left(l+k+\tfrac{1}{2}\right)\pi x\right)
			-\sin\left(\left(l-k+\tfrac{1}{2}\right)\pi x\right)\right).
	\end{align*}
	Inserting this into the integral expression~\eqref{eq:hamiltonian_N} we obtain
	\begin{align*}
		\left(h_\N^{[-1/2]}\right)_{jk}
			&=\,\sum_{l=0}^{j-1}\int_{0}^{1}\bigl(\sin\left(\left(l+k+\tfrac{1}{2}\right)\pi x\right)
				-\sin\left(\left(l-k+\tfrac{1}{2}\right)\pi x\right)\bigr)\mathrm{d}x\\
			&=\,\frac{1}{\pi}\sum_{l=0}^{j-1}\left(\frac{1}{l+k+\tfrac{1}{2}}-\frac{1}{l-k+\tfrac{1}{2}}\right)
			= \frac{1}{\pi}\sum_{l=|j-k|+1}^{j+k}\frac{2}{2l-1}.
	\end{align*}
	The expression with the digamma function follows from the identity
	\begin{align*}
		\psi\left(m+\tfrac{1}{2}\right)=-\gamma-2\log(2)+\sum_{l=1}^{m}\frac{2}{2l-1},\qquad m\in\N,
	\end{align*}
	where $\gamma$ denotes the Euler--Mascheroni constant.
\end{proof}

%%%%%%%%%%%%%%%%%%%%%%%%%%%%%%%%%%%%%%%%%%%%%%%%%%%%%%%%%%%%%%%%%%%%%%%%%%%%%%%%%%%%
\bigskip
\noindent\textbf{Acknowledgments.} The research of V.B.~was partially supported by the Swiss National Science Foundation (Grant No.~P2EZP2-162235). J.S. acknowledges partial support by the German Academic Merit Foundation, the Max Weber program of the State of Bavaria and the TopMath program within the Elite Network of Bavaria. S.W. thanks the Simons foundation for support and the Centre de Recherches Math\'ematiques for hospitality during a long-term stay at the University of Montreal. All supports are gratefully acknowledged.
%%%%%%%%%%%%%%%%%%%%%%%%%%%%%%%%%%%%%%%%%%%%%%%%%%%%%%%%%%%%%%%%%%%%%%%%%%%%%%%%%%%%
\bibliography{entropy_ocillators}
\bibliographystyle{abbrv}
%%%%%%%%%%%%%%%%%%%%%%%%%%%%%%%%%%%%%%%%%%%%%%%%%%%%%%%%%%%%%%%%%%%%%%%%%%%%%%%%%%%%
\vspace{1cm}
\begin{center}
\begin{tabular}{m{4cm} m{4cm} m{4cm}}
\textsc{Vincent Beaud} & \textsc{Julian Sieber} & \textsc{Simone Warzel}\\
beaud@ma.tum.de & sieber@ma.tum.de & warzel@ma.tum.de
\end{tabular}
\end{center}

\end{document}